\newcommand{\K}{{\mathbb K}}
\newcommand{\F}{{\mathbb F}}
\newcommand{\N}{{\mathbb N}}
\newcommand{\elim}[2]{{\mathrm {elim}_{#1} (#2)}}
\newcommand{\ideal}[1]{{\langle #1\rangle }}
\newcommand{\I}{{\mathbf I}}
\newcommand{\V}{{\mathbf V}}
\newcommand{\W}{{\mathbf W}}
\newcommand{\Z}{{\mathbf Z}}
\newcommand{\T}{{\mathcal T}}
\DeclareMathOperator{\mvar}{mvar}
\DeclareMathOperator{\mdeg}{mdeg}
\DeclareMathOperator{\init}{init}
\DeclareMathOperator{\rank}{rk}
\DeclareMathOperator{\sat}{sat}
\DeclareMathOperator{\prem}{prem}
\renewcommand{\subset}{\subseteq}
\renewcommand{\supset}{\supseteq}
\newcommand{\myfont}[1]{\textsc{#1}}
\newtheorem{theorem}{Theorem}
\newtheorem{proposition}[theorem]{Proposition}
\newtheorem{lemma}[theorem]{Lemma}
\newtheorem*{problem}{Problem}
\newtheorem*{question}{Question}
\newtheorem*{notation}{Notation}
\theoremstyle{definition}
\newtheorem{definition}{Definition}[section]
\newtheorem{example}{Example}[section]
\theoremstyle{remark}
\newtheorem{remark}{Remark}[section]
\let\oldtabular\tabular
\renewcommand{\tabular}{\footnotesize\oldtabular}
\title
[Chordal networks of polynomial ideals]
{ Chordal networks of polynomial ideals}
\date{\today}
\author{Diego Cifuentes} 
\address{
Laboratory for Information and Decision Systems (LIDS), 
Massachusetts Institute of Technology, Cambridge MA 02139, USA}
\email{diegcif@mit.edu}
\author{Pablo A. Parrilo}
\address{
Laboratory for Information and Decision Systems (LIDS), 
Massachusetts Institute of Technology, Cambridge MA 02139, USA}
\email{parrilo@mit.edu}
\keywords {Chordal graphs, Structured polynomials, Chordal networks, Triangular sets}
\begin{document}

\begin{abstract}
We introduce a novel representation of structured polynomial ideals,
which we refer to as \emph{chordal networks}.  The sparsity structure
of a polynomial system is often described by a graph that captures the
interactions among the variables.  Chordal networks provide a
computationally convenient decomposition into
simpler (triangular) polynomial sets, while preserving the underlying
graphical structure. We show that many interesting families of
polynomial ideals admit compact chordal network representations (of
size linear in the number of variables), even though the number of
components is exponentially large. Chordal networks can be
computed for arbitrary polynomial systems using a refinement of the
chordal elimination algorithm from~\cite{Cifuentes2014}. Furthermore,
they can be effectively used to obtain several properties of the
variety, such as its dimension, cardinality, and equidimensional
components, as well as an efficient probabilistic test for radical
ideal membership.  We apply our methods to examples from algebraic
statistics and vector addition systems; for these instances,
algorithms based on chordal networks outperform existing techniques by
orders of magnitude.
\end{abstract}

\maketitle

\section{Introduction}

Systems of polynomial equations can be used to model a large variety of applications, and in most cases the resulting systems have a particular sparsity structure.
We describe this sparsity structure using a graph.
A natural question that arises is whether this graphical structure can be effectively used to solve the system.
In~\cite{Cifuentes2014} we introduced the \emph{chordal elimination} algorithm, an elimination method that always preserves the graphical structure of the system.
In this paper we refine this algorithm to compute a new representation of the polynomial system that we call a \emph{chordal network}.

Chordal networks attempt to fix an intrinsic issue of Gr\"obner bases: they destroy the graphical structure of the system~\cite[Ex~1.2]{Cifuentes2014}.
As a consequence, polynomial systems with simple structure may have overly complicated Gr\"obner bases (see \Cref{exmp:triangcycle}).
In contrast, chordal networks will always preserve the underlying chordal graph.
We remark that chordal graphs have been successfully used in several other areas, such as numerical linear algebra~\cite{rose1976algorithmic}, discrete and continuous optimization~\cite{bodlaender2008combinatorial,Vandenberghe2014}, graphical models~\cite{Lauritzen1988} and constraint satisfaction~\cite{Dechter2003}.

Chordal networks describe a decomposition of the polynomial ideal into simpler (triangular) polynomial sets.
This decomposition gives quite a rich description of the underlying variety.
In particular, chordal networks can be efficiently used to compute dimension, cardinality, equidimensional components and also to test radical ideal membership.
Remarkably, several families of polynomial ideals (with exponentially large Gr\"obner bases) admit a compact chordal network representation, of size proportional to the number of variables.
We will shortly present some motivational examples after setting up the main terminology.

Throughout this document we work in the polynomial ring $\K[X]=\K[x_0,x_1,\ldots,x_{n-1}]$ over some field $\K$.
We fix once and for all the ordering of the variables $x_0>x_1>\cdots>x_{n-1}$~\footnote{Observe that smaller indices correspond to larger variables.}. 
We consider a system of polynomials $F = \{f_1,f_2,\ldots,f_m\}$.
There is a natural graph $\mathcal{G}(F)$, with vertex set $X=\{x_0,\ldots,x_{n-1}\}$, that abstracts the \emph{sparsity structure} of $F$.
The graph is given by cliques: for each $f_i$ we add a clique in all its variables.
Equivalently, there is an edge between $x_i$ and $x_j$ if and only if there is some polynomial in $F$ that contains both variables. 
We will consider throughout the paper a chordal completion $G$ of the graph $\mathcal{G}(F)$, and we will assume that $x_0>\cdots>x_{n-1}$ is a perfect elimination ordering of~$G$ (see \Cref{defn:perfectelimination}).

\subsection*{Some motivating examples}

The notions of chordality and treewidth are ubiquitous in applied
mathematics and computer science.  In particular, several hard
combinatorial problems can be solved efficiently on graphs of small
treewidth by using some type of recursion (or dynamic
program)~\cite{bodlaender2008combinatorial}.  We will see that this
recursive nature is also present in several polynomial systems of
small treewidth.  
We now illustrate this with three simple examples.

\begin{example}[Coloring a cycle graph]\label{exmp:triangcycle}
Graph coloring is a classical NP-complete problem that can be solved efficiently on graphs of small treewidth.
We consider the cycle graph $C_n$ with vertices $0,1,\ldots,n-1$, whose treewidth is two.
Coloring $C_n$ is particularly simple by proceeding in a recursive manner: color vertex $n-1$ arbitrarily and then subsequently color vertex $i$ avoiding the color of $i+1$ and possibly $n-1$.

The $q$-coloring problem for a graph $\mathcal{G}=(V,E)$ can be encoded in a system of polynomial equations (see e.g.,~\cite{DeLoera2008}):
\begin{subequations}\label{eq:colorings}
\begin{align}
  x_i^q -1 &= 0 &i\in V\\
  x_i^{q-1}+x_i^{q-2}x_j+\cdots+x_ix_j^{q-2}+x_j^{q-1} &= 0 &ij\in E
\end{align}
\end{subequations}
Let $F_{n,q}$ denote such system of polynomials for the cycle graph $C_n$.
Given that coloring the cycle graph is so easy, it should be possible to solve these equations efficiently.
However, if we compute a Gr\"obner basis the result is not so simple.
In particular, for the case of $F_{9,3}$ one of these polynomials has $81$ terms (with both lex and grevlex order).
This is a consequence of the fact that Gr\"obner bases destroy the graphical structure of the equations.

\begin{figure}[htb]
    \centering
    \null\hfill
    \raisebox{.5\height}{
    \includegraphics[scale=0.32]{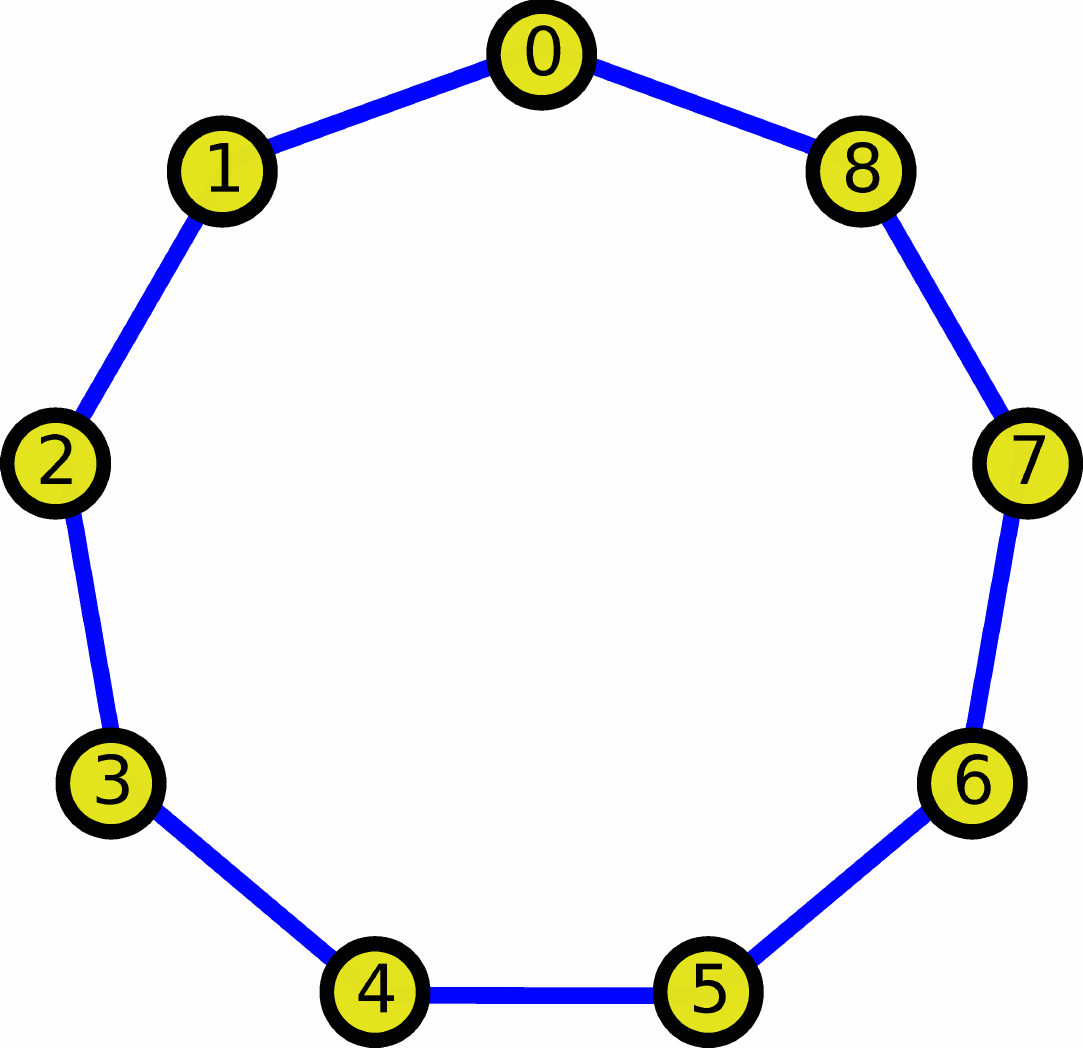}  
    }
    \hfill
    \begin{tikzpicture}[scale=1]
      \def\x{2.2}
      \def\y{.8}
      \tikzstyle{mybox} = [scale=0.65,draw=purple,fill=orange!8,thick,rectangle,rounded corners];
      \tikzstyle{myarrow} = [scale=0.65,purple,thick,->];
      \tikzstyle{mygroup} = [draw=blue,fill = blue!15, thick, dotted, minimum width = 220, minimum height = 16];
      \tikzstyle{mybox0} = [scale=0.65,fill=yellow!80,draw=blue,thick,densely dotted,circle];
      \tikzstyle{myarrow0} = [blue,thick,dotted,->];

      \node [mybox] (zero0) at (1*\x,0*\y) {$x_0^2 + x_0x_8 + x_8^2$};
      \node [mybox] (zero1) at (2*\x,0*\y) {$x_0 + x_1 + x_8$};

      \node [mybox] (one0) at (1*\x,-1*\y) {$x_1 - x_8$};
      \node [mybox] (one1) at (3*\x,-1*\y) {$x_1^2 + x_1x_8 + x_8^2$};
      \node [mybox] (one2) at (2*\x,-1*\y) {$x_1 + x_2 + x_8$};

      \node [mybox] (two0) at (1*\x,-2*\y) {$x_2^2 + x_2x_8 + x_8^2$};
      \node [mybox] (two1) at (2*\x,-2*\y) {$x_2 + x_3 + x_8$};
      \node [mybox] (two2) at (3*\x,-2*\y) {$x_2 - x_8$};

      \node [mybox] (three0) at (1*\x,-3*\y) {$x_3 - x_8$};
      \node [mybox] (three1) at (3*\x,-3*\y) {$x_3^2 + x_3x_8 + x_8^2$};
      \node [mybox] (three2) at (2*\x,-3*\y) {$x_3 + x_4 + x_8$};

      \node [mybox] (four0) at (1*\x,-4*\y) {$x_4^2 + x_4x_8 + x_8^2$};
      \node [mybox] (four1) at (2*\x,-4*\y) {$x_4 + x_5 + x_8$};
      \node [mybox] (four2) at (3*\x,-4*\y) {$x_4 - x_8$};

      \node [mybox] (five0) at (1*\x,-5*\y) {$x_5 - x_8$};
      \node [mybox] (five1) at (3*\x,-5*\y) {$x_5^2 + x_5x_8 + x_8^2$};
      \node [mybox] (five2) at (2*\x,-5*\y) {$x_5 + x_6 + x_8$};

      \node [mybox] (six1) at (2*\x,-6*\y) {$x_6 + x_7 + x_8$};
      \node [mybox] (six2) at (3*\x,-6*\y) {$x_6 - x_8$};

      \node [mybox] (seven1) at (3*\x,-7*\y) {$x_7^2 + x_7x_8 + x_8^2$};

      \node [mybox] (eight2) at (3*\x,-8*\y) {$x_8^3 - 1$};



      \draw [myarrow] (zero0) to (one0);
      \draw [myarrow] (zero1) to (one1);
      \draw [myarrow] (zero1) to (one2);

      \draw [myarrow] (one0) to (two0);
      \draw [myarrow] (one0) to (two1);
      \draw [myarrow] (one1) to (two2);
      \draw [myarrow] (one2) to (two0);
      \draw [myarrow] (one2) to (two1);

      \draw [myarrow] (two0) to (three0);
      \draw [myarrow] (two1) to (three1);
      \draw [myarrow] (two1) to (three2);
      \draw [myarrow] (two2) to (three1);
      \draw [myarrow] (two2) to (three2);

      \draw [myarrow] (three0) to (four0);
      \draw [myarrow] (three0) to (four1);
      \draw [myarrow] (three1) to (four2);
      \draw [myarrow] (three2) to (four0);
      \draw [myarrow] (three2) to (four1);

      \draw [myarrow] (four0) to (five0);
      \draw [myarrow] (four1) to (five1);
      \draw [myarrow] (four1) to (five2);
      \draw [myarrow] (four2) to (five1);
      \draw [myarrow] (four2) to (five2);

      \draw [myarrow] (five0) to (six1);
      \draw [myarrow] (five1) to (six2);
      \draw [myarrow] (five2) to (six1);

      \draw [myarrow] (six1) to (seven1);
      \draw [myarrow] (six2) to (seven1);

      \draw [myarrow] (seven1) to (eight2);


      \begin{scope}[on background layer]
      \node [mygroup] (zero) at (2*\x,0*\y){}; 
      \node [mygroup] (one) at (2*\x,-1*\y){}; 
      \node [mygroup] (two) at (2*\x,-2*\y){}; 
      \node [mygroup] (three) at (2*\x,-3*\y){}; 
      \node [mygroup] (four) at (2*\x,-4*\y){}; 
      \node [mygroup] (five) at (2*\x,-5*\y){}; 
      \node [mygroup] (six) at (2*\x,-6*\y){}; 
      \node [mygroup] (seven) at (2*\x,-7*\y){}; 
      \node [mygroup] (eight) at (2*\x,-8*\y){}; 
      \end{scope}

      \node [mybox0,left=.2 of zero] (zeroN) {$0$};
      \node [mybox0,left=.2 of one] (oneN) {$1$};
      \node [mybox0,left=.2 of two] (twoN) {$2$};
      \node [mybox0,left=.2 of three] (threeN) {$3$};
      \node [mybox0,left=.2 of four] (fourN) {$4$};
      \node [mybox0,left=.2 of five] (fiveN) {$5$};
      \node [mybox0,left=.2 of six] (sixN) {$6$};
      \node [mybox0,left=.2 of seven] (sevenN) {$7$};
      \node [mybox0,left=.2 of eight] (eightN) {$8$};
      \draw [myarrow0,bend right] (zeroN) to (oneN);
      \draw [myarrow0,bend right] (oneN) to (twoN);
      \draw [myarrow0,bend right] (twoN) to (threeN);
      \draw [myarrow0,bend right] (threeN) to (fourN);
      \draw [myarrow0,bend right] (fourN) to (fiveN);
      \draw [myarrow0,bend right] (fiveN) to (sixN);
      \draw [myarrow0,bend right] (sixN) to (sevenN);
      \draw [myarrow0,bend right] (sevenN) to (eightN);
    \end{tikzpicture}
    \hfill\null
    \caption{Chordal network for the $3$-chromatic ideal of a cycle}
    \label{fig:triangcycle}
\end{figure}

Nonetheless, one may hope to give a simple representation of the above polynomials that takes into account their recursive nature.
Indeed, a triangular decomposition of these equations is presented in \Cref{fig:triangcycle} for the case of $F_{9,3}$, and the pattern is very similar for arbitrary values of $n,q$.
The decomposition represented is:
\begin{align*}
  \V(F_{9,3}) = \bigcup_{T} \V(T)
\end{align*}
where the union is over all maximal directed paths in the diagram of \Cref{fig:triangcycle}. 
One path is
\begin{gather*}
T = \{
x_0+x_1+x_8,\,
x_1^2+x_1x_8+x_8^2,\,
x_2-x_8,\,
x_3^2+x_3x_8+x_8^2,\,
x_4-x_8,\\
x_5^2+x_5x_8+x_8^2,\,
x_6-x_8,\,
x_7^2+x_7x_8+x_8^2,\,
x_8^3-1
\}.
\end{gather*}
Recall that a set of polynomials is triangular if the largest variables of these polynomials are all distinct, and observe that all maximal paths $T$ are triangular.
Note that the total number of triangular sets is $21$, and in general we get the $(n-1)$-th Fibonacci number.
Even though the size of the triangular decomposition grows rapidly, it admits a very compact representation (linear in $n$) and the reason is precisely the recursive nature of the equations.
Indeed, the diagram of \Cref{fig:triangcycle} is constructed in a very similar way as we construct colorings: choose $x_8$ arbitrarily, then for each $x_i$ choose it based on the values of $x_{i+1}$ and $x_8$.
\end{example}

\begin{example}[Vertex covers of a tree]\label{exmp:triangnested}
We now consider the problem of finding minimum vertex coverings of a graph.
Recall that a subset~$S$ of vertices is a cover if any edge is incident to at least one element in~$S$.
Since the complement of a vertex cover is an independent set, computing a minimum vertex cover is NP-complete.
Nevertheless, when the graph is a tree the minimal vertex covers have a very special structure.
Indeed, we can construct such a cover recursively, starting from the root, as follows.
For the root node, we can decide whether to include it in the cover or not.
If we include it, we can delete the root and then recurse on each of its children.
Otherwise, we need to include in the cover all of its children, so we can delete them all, and then recurse.

  \begin{figure}[htb]
    \centering
    \null\hfill
    \raisebox{.2\height}{
    \includegraphics[scale=0.3]{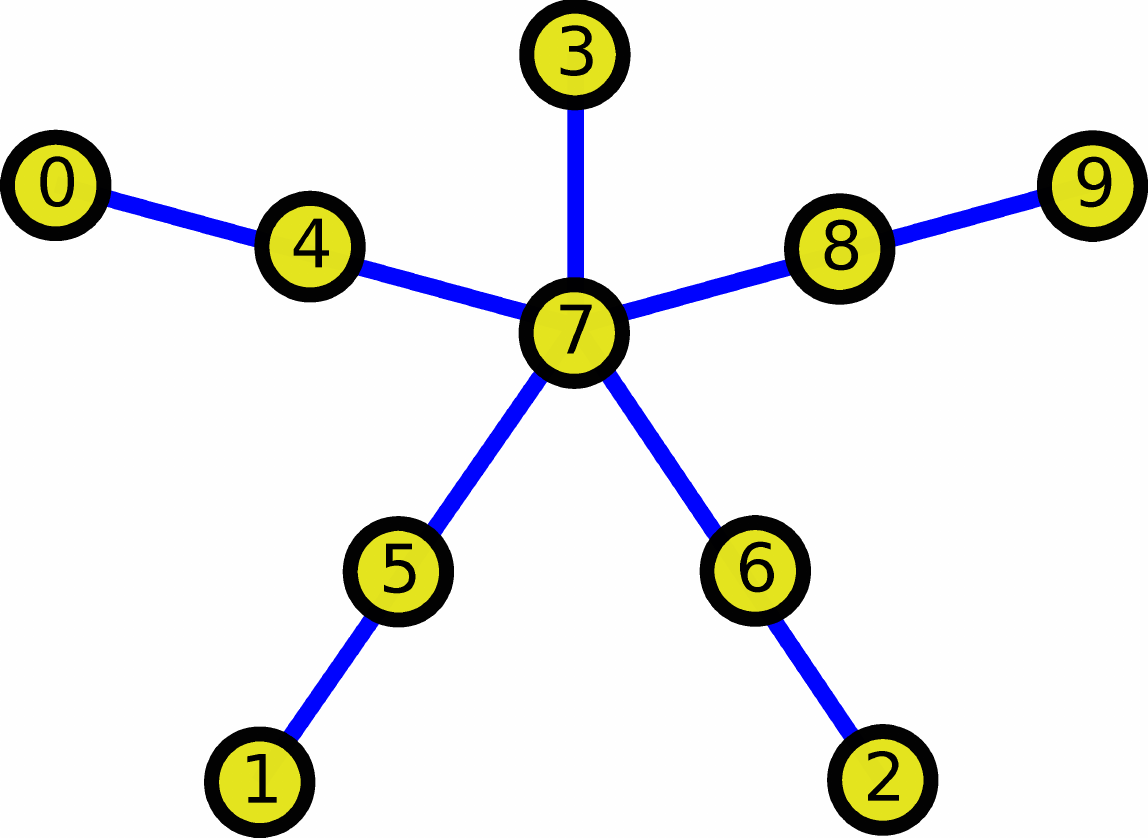}  
    }
    \hfill
    \begin{tikzpicture}[scale=1]
      \def\x{1.8}
      \def\y{.7}
      \def\xdelta{.15*\x}
      \def\xtree{-2.55*\x}
      \def\yzero{0*\y}
      \def\yone{\yzero-.25*\y}
      \def\ytwo{\yone-.25*\y}
      \def\ythree{\ytwo-.25*\y}
      \def\yfour{\ythree-.25*\y}
      \def\yfive{\yfour-.25*\y}
      \def\ysix{\yfive-.25*\y}
      \def\yseven{\ysix-1.4*\y}
      \def\yeight{\yseven-1*\y}
      \def\ynine{\yeight-1*\y}
      \tikzstyle{mybox} = [scale=0.7,draw=purple,fill=orange!8,thick,rectangle,rounded corners];
      \tikzstyle{myarrow} = [scale=0.65,purple,thick,->];
      \tikzstyle{mygroup} = [draw=blue,fill = blue!15, thick, dotted, minimum width = 40, minimum height = 15];
      \tikzstyle{mybox0} = [scale=0.6,fill=yellow!80,draw=blue,thick,densely dotted,circle];
      \tikzstyle{myarrow0} = [blue,thick,dotted,->];

      \node [mybox] (zero0) at (-.5*\x+\xdelta,\yzero) {$0$};
      \node [mybox] (zero1) at (-.5*\x-\xdelta,0*\yzero) {$x_0$};

      \node [mybox] (one0) at (.5*\x+\xdelta,-1*\yone) {$0$};
      \node [mybox] (one1) at (.5*\x-\xdelta,-1*\yone) {$x_1$};

      \node [mybox] (two0) at (1.5*\x+\xdelta,-1*\ytwo) {$0$};
      \node [mybox] (two1) at (1.5*\x-\xdelta,-1*\ytwo) {$x_2$};

      \node [mybox] (three0) at (-1.5*\x-\xdelta,-2*\ythree) {$0$};
      \node [mybox] (three1) at (-1.5*\x+\xdelta,-2*\ythree) {$x_3$};

      \node [mybox] (four0) at (-.5*\x-\xdelta,-3*\yfour) {$0$};
      \node [mybox] (four1) at (-.5*\x+\xdelta,-3*\yfour) {$x_4$};

      \node [mybox] (five0) at (.5*\x-\xdelta,-4*\yfive) {$0$};
      \node [mybox] (five1) at (.5*\x+\xdelta,-4*\yfive) {$x_5$};

      \node [mybox] (six0) at (1.5*\x-\xdelta,-4*\ysix) {$0$};
      \node [mybox] (six1) at (1.5*\x+\xdelta,-4*\ysix) {$x_6$};

      \node [mybox] (seven0) at (0*\x+1.5*\xdelta,-5*\yseven) {$0$};
      \node [mybox] (seven1) at (0*\x-1.5*\xdelta,-5*\yseven) {$x_7$};

      \node [mybox] (eight0) at (0*\x-\xdelta,-6*\yeight) {$0$};
      \node [mybox] (eight1) at (0*\x+\xdelta,-6*\yeight) {$x_8$};

      \node [mybox] (nine0) at (0*\x+\xdelta,-7*\ynine) {$0$};
      \node [mybox] (nine1) at (0*\x-\xdelta,-7*\ynine) {$x_9$};

      \draw [myarrow] (zero0) to (four1);
      \draw [myarrow] (zero1) to (four0);

      \draw [myarrow] (one0) to (five1);
      \draw [myarrow] (one1) to (five0);

      \draw [myarrow] (two0) to (six1);
      \draw [myarrow] (two1) to (six0);

      \draw [myarrow,out=-60,in=160] (three0) to (seven1);
      \draw [myarrow,out=-45,in=120] (three1) to (seven0);

      \draw [myarrow,out=-80,in=150] (four0) to (seven1);
      \draw [myarrow,out=-30,in=120] (four1) to (seven0);
      \draw [myarrow,out=-120,in=140] (four1) to (seven1);

      \draw [myarrow,out=-150,in=60] (five0) to (seven1);
      \draw [myarrow,out=-80,in=20] (five1) to (seven0);
      \draw [myarrow,out=-150,in=60] (five1) to (seven1);

      \draw [myarrow,out=-150,in=60] (six0) to (seven1);
      \draw [myarrow,out=-120,in=10] (six1) to (seven0);
      \draw [myarrow,out=-140,in=60] (six1) to (seven1);

      \draw [myarrow] (seven0) to (eight1);
      \draw [myarrow] (seven1) to (eight0);
      \draw [myarrow] (seven1) to (eight1);

      \draw [myarrow] (eight0) to (nine1);
      \draw [myarrow] (eight1) to (nine0);

      \begin{scope}[on background layer]
      \node [mygroup] (zero) at (-.5*\x,\yzero){}; 
      \node [mygroup] (one) at (.5*\x,\yone){}; 
      \node [mygroup] (two) at (1.5*\x,\ytwo){}; 
      \node [mygroup] (three) at (-1.5*\x,\ythree){}; 
      \node [mygroup] (four) at (-.5*\x,\yfour){}; 
      \node [mygroup] (five) at (.5*\x,\yfive){}; 
      \node [mygroup] (six) at (1.5*\x,\ysix){}; 
      \node [mygroup,minimum width=50] (seven) at (0*\x,\yseven){}; 
      \node [mygroup] (eight) at (0*\x,\yeight){}; 
      \node [mygroup] (nine) at (0*\x,\ynine){}; 
      \end{scope}

      \node [mybox0] (zeroN)  at (\xtree-.12*\x,\yzero)  {$0$};
      \node [mybox0] (oneN)   at (\xtree+.12*\x,\yone)   {$1$};
      \node [mybox0] (twoN)   at (\xtree+.36*\x,\ytwo)   {$2$};
      \node [mybox0] (threeN) at (\xtree-.36*\x,\ythree)   {$3$};
      \node [mybox0] (fourN)  at (\xtree-.12*\x,\yfour) {$4$};
      \node [mybox0] (fiveN)  at (\xtree+.12*\x,\yfive)  {$5$};
      \node [mybox0] (sixN)  at (\xtree+.36*\x,\ysix)  {$6$};
      \node [mybox0] (sevenN) at (\xtree,\yseven)  {$7$};
      \node [mybox0] (eightN) at (\xtree,\yeight)   {$8$};
      \node [mybox0] (nineN)  at (\xtree,\ynine) {$9$};
      \draw [myarrow0,out=-100,in=100] (zeroN) to (fourN);
      \draw [myarrow0,out=-70,in=70] (oneN) to (fiveN);
      \draw [myarrow0,out=-70,in=70] (twoN) to (sixN);
      \draw [myarrow0,bend right] (threeN) to (sevenN);
      \draw [myarrow0,out=-100,in=110] (fourN) to (sevenN);
      \draw [myarrow0,out=-80,in=70] (fiveN) to (sevenN);
      \draw [myarrow0,bend left] (sixN) to (sevenN);
      \draw [myarrow0,bend right] (sevenN) to (eightN);
      \draw [myarrow0,bend right] (eightN) to (nineN);
    \end{tikzpicture}
    \hfill\null
    \caption{Chordal network for the edge ideal of a tree.}
    \label{fig:triangnested}
  \end{figure}

The minimal vertex covers of a graph $\mathcal{G}=(V,E)$ are in correspondence with the irreducible components of its \emph{edge ideal} $ I(\mathcal{G}):= \ideal{ x_ix_j: ij\in E}$ (see e.g., \cite[Prop~7.2.3]{Villarreal2015monomial}).
Therefore, the irreducible components of the edge ideal of a tree always have a very simple structure (although there might be exponentially many).
For instance, the diagram in \Cref{fig:triangnested} represents the components for the case of a simple $10$-vertex tree.
Here the components are given by the possible choices of one node from each of the (purple) boxes so that these nodes are connected
(e.g., $T=\{0,0,0,x_3,x_4,x_5,x_6,0,x_8,0\}$).
Note that there are $2^4+1=17$ components.

\end{example}

\begin{example}[Adjacent minors]\label{exmp:adjminors}
Let $X$ be a $2\times n$ matrix of indeterminates, and consider the polynomial set $F_n$ given by its adjacent minors, i.e.,
\begin{align*}
  X:=\bigl(\begin{smallmatrix}
    x_0 & x_2 & \cdots & x_{2n-2} \\ 
    x_1 & x_3 & \cdots & x_{2n-1}
  \end{smallmatrix}\bigr), \qquad
  F_n := \{x_{2i}x_{2i+3}-x_{2i+1}x_{2i+2}: 0\leq i < n-1\}.
\end{align*}
The corresponding ideal has been studied in e.g.,~\cite{Herzog2010,Diaconis1998}.
\Cref{fig:triangadjminors} shows the graph associated to this system.
We are interested in describing the irreducible components of $\V(F_n)$.

  \begin{figure}[htb]
    \centering
    \null\hfill
    \raisebox{.13\height}{
    \includegraphics[scale=0.3]{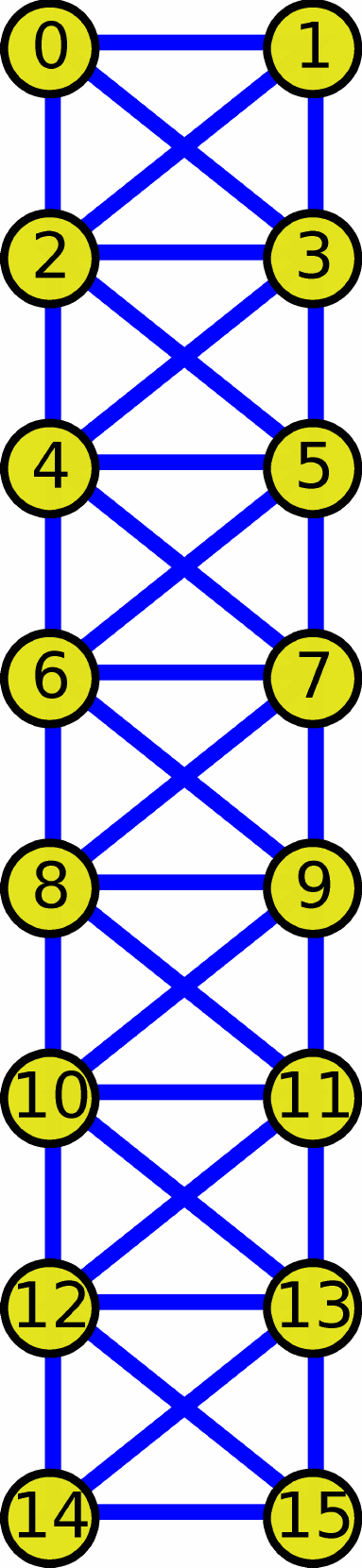}  
    }
    \hfill
    \begin{tikzpicture}[scale=1]
      \def\x{2.0}
      \def\y{.8}
      \tikzstyle{mybox} = [scale=0.65,draw=purple,fill=orange!8,thick,rectangle,rounded corners];
      \tikzstyle{myarrow} = [scale=0.65,purple,thick,->];
      \tikzstyle{mygroup} = [draw=blue,fill = blue!15, thick, dotted, minimum width = 180, minimum height = 15];
      \tikzstyle{mybox0} = [fill=yellow!80,draw=blue,thick,densely dotted,circle];
      \tikzstyle{myarrow0} = [blue,thick,dotted,->];

      \node [mybox] (zero1) at (1*\x,0*\y) {$x_0x_3-x_1x_2$};
      \node [mybox] (zero2) at (0*\x,0*\y) {$0$};

      \node [mybox] (two1) at (1*\x,-1*\y) {$x_2x_5-x_3x_4$};
      \node [mybox] (two2) at (2*\x,-1*\y) {$0$};
      \node [mybox] (two3) at (0*\x,-1*\y) {$x_2,\,x_3$};

      \node [mybox] (four1) at (1*\x,-2*\y) {$x_4x_7-x_5x_6$};
      \node [mybox] (four2) at (0*\x,-2*\y) {$0$};
      \node [mybox] (four3) at (2*\x,-2*\y) {$x_4,\,x_5$};

      \node [mybox] (six1) at (1*\x,-3*\y) {$x_6x_9-x_7x_8$};
      \node [mybox] (six2) at (2*\x,-3*\y) {$0$};
      \node [mybox] (six3) at (0*\x,-3*\y) {$x_6,\,x_7$};

      \node [mybox] (eight1) at (1*\x,-4*\y) {$x_8x_{11}-x_9x_{10}$};
      \node [mybox] (eight2) at (0*\x,-4*\y) {$0$};
      \node [mybox] (eight3) at (2*\x,-4*\y) {$x_8,\,x_9$};

      \node [mybox] (ten1) at (1*\x,-5*\y) {$x_{10}x_{13}-x_{11}x_{12}$};
      \node [mybox] (ten2) at (2*\x,-5*\y) {$0$};
      \node [mybox] (ten3) at (0*\x,-5*\y) {$x_{10},\,x_{11}$};

      \node [mybox] (twelve1) at (1*\x,-6*\y) {$x_{12}x_{15}-x_{13}x_{14}$};
      \node [mybox] (twelve3) at (2*\x,-6*\y) {$x_{12},\,x_{13}$};

      \node [mybox] (fourteen0) at (2*\x,-7*\y) {$0$};

      \draw [myarrow] (zero1) to (two1);
      \draw [myarrow] (zero1) to (two2);
      \draw [myarrow] (zero2) to (two3);

      \draw [myarrow] (two1) to (four1);
      \draw [myarrow] (two1) to (four2);
      \draw [myarrow] (two2) to (four3);
      \draw [myarrow] (two3) to (four1);
      \draw [myarrow] (two3) to (four2);

      \draw [myarrow] (four1) to (six1);
      \draw [myarrow] (four1) to (six2);
      \draw [myarrow] (four2) to (six3);
      \draw [myarrow] (four3) to (six1);
      \draw [myarrow] (four3) to (six2);

      \draw [myarrow] (six1) to (eight1);
      \draw [myarrow] (six1) to (eight2);
      \draw [myarrow] (six2) to (eight3);
      \draw [myarrow] (six3) to (eight1);
      \draw [myarrow] (six3) to (eight2);

      \draw [myarrow] (eight1) to (ten1);
      \draw [myarrow] (eight1) to (ten2);
      \draw [myarrow] (eight2) to (ten3);
      \draw [myarrow] (eight3) to (ten1);
      \draw [myarrow] (eight3) to (ten2);

      \draw [myarrow] (ten1) to (twelve1);
      \draw [myarrow] (ten2) to (twelve3);
      \draw [myarrow] (ten3) to (twelve1);

      \draw [myarrow] (twelve1) to (fourteen0);
      \draw [myarrow] (twelve3) to (fourteen0);

      \begin{scope}[on background layer]
      \node [mygroup] (zero) at (1*\x,0*\y){}; 
      \node [mygroup] (two) at (1*\x,-1*\y){}; 
      \node [mygroup] (four) at (1*\x,-2*\y){}; 
      \node [mygroup] (six) at (1*\x,-3*\y){}; 
      \node [mygroup] (eight) at (1*\x,-4*\y){}; 
      \node [mygroup] (ten) at (1*\x,-5*\y){}; 
      \node [mygroup] (twelve) at (1*\x,-6*\y){}; 
      \node [mygroup] (fourteen) at (1*\x,-7*\y){}; 
      \end{scope}

      \node [mybox0,scale=0.60,left=.2 of zero] (zero0) {01};
      \node [mybox0,scale=0.60,left=.2 of two] (two0) {23};
      \node [mybox0,scale=0.60,left=.2 of four] (four0) {45};
      \node [mybox0,scale=0.60,left=.2 of six] (six0) {67};
      \node [mybox0,scale=0.60,left=.2 of eight] (eight0) {89};
      \node [mybox0,scale=0.45,left=.2 of ten] (ten0) {10,11};
      \node [mybox0,scale=0.45,left=.2 of twelve] (twelve0) {12,13};
      \node [mybox0,scale=0.45,left=.2 of fourteen] (fourteen0) {14,15};
      \draw [myarrow0,bend right] (zero0) to (two0);
      \draw [myarrow0,bend right] (two0) to (four0);
      \draw [myarrow0,bend right] (four0) to (six0);
      \draw [myarrow0,bend right] (six0) to (eight0);
      \draw [myarrow0,bend right] (eight0) to (ten0);
      \draw [myarrow0,bend right] (ten0) to (twelve0);
      \draw [myarrow0,bend right] (twelve0) to (fourteen0);
    \end{tikzpicture}
    \hfill\null
    \caption{Chordal network for the ideal of adjacent minors}
    \label{fig:triangadjminors}
  \end{figure}

  As in \Cref{exmp:triangcycle}, there is a simple recursive procedure to produce points on $\V(F_n)$:
we choose the values of the last column of the matrix arbitrarily, 
and then for column $i$ we either choose it arbitrarily, in case that column $i+1$ is zero, or we scale column $i+1$ if it is nonzero.
This procedure is actually describing the irreducible components of the variety.
In this way, the irreducible components admit a compact description, which is shown in \Cref{fig:triangadjminors}.
Again, the components are given by the maximal directed paths (e.g., $T = \{0, x_2, x_3, 0, x_6, x_7, x_{8}x_{11}-x_{9}x_{10}, 0, x_{12}, x_{13}, 0\}$) and its cardinality is the $n$-th Fibonacci number.
\end{example}

\subsection*{Contributions}

The examples from above show how certain polynomial systems with
tree-like structure admit a compact chordal network representation.
The aim of this paper is to develop a general framework to
systematically understand and compute chordal networks.  We also study
how to effectively use chordal networks to solve different problems
from computational algebraic geometry.  A major difficulty 
is that exponentially many triangular sets may appear (e.g., 
the Fibonacci number in \Cref{exmp:triangcycle}).  

This paper presents the following contributions:
\begin{itemize}[leftmargin=.5in]
  \item We introduce the notion of chordal networks, a novel representation of polynomial ideals aimed toward exploiting structured sparsity.
  \item 
    We develop the \emph{chordal triangularization} method (\Cref{alg:chordtriang}) to compute such chordal network representation.
    Its correctness is established in
    \Cref{thm:chordtriang,thm:chordtriangpos}.
  \item We show that several families of polynomial systems admit a ``small'' chordal network representation, of size $O(n)$.
    This is true for certain zero-dimensional ideals (\Cref{thm:linearnetwork}), all monomial ideals (\Cref{thm:monlinear}) and certain binomial/determinantal ideals (\Cref{s:determinantal}), although in general this cannot be guaranteed (\Cref{rem:beyondzerodim}).
  \item We show how to effectively use chordal networks to compute several properties of the underlying variety.
    In particular, the cardinality (\Cref{s:countsolutions}), dimension and top-dimensional component (\Cref{s:usingtriangularmon}) can be computed in linear time. 
    In some interesting cases we can also describe the irreducible components.
  \item We present a Monte~Carlo algorithm to test radical ideal membership (\Cref{alg:member}).
    We show in \Cref{thm:memberlinear} that the complexity is linear when the given polynomial preserves some of the graphical structure of the system.
\end{itemize}
We point out that we have a preliminary implementation of a Macaulay2 package with all the methods from this paper, and it is available in \url{www.mit.edu/~diegcif}.

\subsection*{Structure of the paper}

The organization of this paper is as follows.
In \Cref{s:chordal} we review the concept of chordal graph and then formalize the notion of chordal network.
We then proceed to explain our methods, initially only for a restricted class of zero-dimensional problems (\Cref{s:zerodim,s:membership}), then for the case of monomial ideals (\Cref{s:monomial}), and finally considering the fully general case (\Cref{s:positivedim}).
We conclude the paper in \Cref{s:applications} with numerical examples of our methods.

The reason for presenting our results in this stepwise manner, is that the general case requires highly technical concepts from the theory of triangular sets.
Indeed, we encourage the reader unfamiliar with triangular sets to omit \Cref{s:positivedim} in the first read.
On the other hand, by first specializing our methods to the zero-dimensional and monomial cases we can introduce them all in a transparent manner.
Importantly, the basic structure of the chordal triangularization algorithm, presented in \Cref{s:zerodim}, remains the same for the general case.
Similarly, our algorithms that use chordal networks to compute properties of the variety (e.g., cardinality, dimension), introduced in~\Cref{s:membership,s:monomial}, also extend in a natural way.

\subsection*{Related work}
The development of chordal networks can be seen as a continuation of
our earlier work~\cite{Cifuentes2014}, and we refer the reader to that
paper for a detailed survey of the relevant literature on graphical
structure in computational algebraic geometry.  
For this reason, below we only discuss related work in the context of triangular sets, and point out the main differences between this paper
and~\cite{Cifuentes2014}.

This paper improves upon~\cite{Cifuentes2014} in two
main areas.  Firstly, chordal networks provide a much richer
description of the variety than the elimination ideals obtained by
chordal elimination.  For instance, the elimination ideals of the
equations from \Cref{exmp:adjminors} are trivial, but its
chordal network representation reveals its irreducible components.  In
addition, neither the dimension, cardinality nor radical ideal
membership can be directly computed from the elimination ideals (we
need a Gr\"obner basis).  Secondly, we show how to compute chordal
network representations for arbitrary polynomial systems (in
characteristic zero).
In contrast, chordal elimination only computes the elimination ideals under certain assumptions.

There is a broad literature studying triangular decompositions of
ideals~\cite{Aubry1999,Lazard1992,Kalkbrener1993,Maza2000,Wang2001}.
However, past work has not considered the case of sparse polynomial
systems.  Among the many existing triangular decomposition algorithms,
Wang's elimination methods are particularly relevant to
us~\cite{Wang2001,epsilon}.  Although seemingly unnoticed by Wang,
most of his algorithms preserve the chordal structure of the system.
As a consequence, we have experimentally seen that his methods are
more efficient than those based on regular
chains~\cite{regularchains,Maza2000} for the examples considered in
this paper.

As opposed to previous work, we emphasize chordal networks as our central object of study, 
rather than the explicit triangular decomposition obtained.
This is a key distinction since for several families of ideals the size of the chordal network is linear even though the corresponding triangular decomposition has exponential size (see the examples from above).
In addition, our methods deliberately treat triangular decompositions as a black box algorithm, allowing us to use either Lazard's LexTriangular algorithm~\cite{Lazard1992} for the zero-dimensional case or Wang's RegSer algorithm~\cite{Wang2000} for the positive-dimensional case.

\section{Chordal networks}\label{s:chordal}

\subsection{Chordal graphs}

Chordal graphs have many equivalent characterizations.
A good presentation is found in~\cite{Blair1993}.
For our purposes, we use the following definition.

\begin{definition}\label{defn:perfectelimination}
  Let $G$ be a graph with vertices $x_0,\ldots,x_{n-1}$.
  An ordering of its vertices $x_0>x_1>\dots>x_{n-1}$ is a \emph{perfect elimination ordering} if for each $x_l$ the set
 \begin{align}\label{eq:cliqueXl}
 X_l:=\{x_l\}\cup \{x_m: x_m\mbox{ is adjacent to }x_l,\; x_m<x_l\}
 \end{align}
 is such that the restriction $G|_{X_l}$ is a clique.
 A graph $G$ is \emph{chordal} if it has a  perfect elimination ordering.
\end{definition}
\begin{remark}
  Observe that lower indices correspond to larger vertices.
\end{remark}

Chordal graphs have many interesting properties.
For instance, they have at most $n$ maximal cliques, given that any clique is contained in some~$X_l$.
Note that trees are chordal graphs, since
by successively pruning a leaf from the tree we get a perfect elimination ordering.
We can always find a perfect elimination ordering of a chordal graph in linear time~\cite{rose1976algorithmic}.

\begin{definition}\label{defn:chordalcompletion}
  Let $\mathcal{G}$ be an arbitrary graph.
  We say that $G$ is a \emph{chordal completion} of $\mathcal{G}$ if it is chordal and $\mathcal{G}$ is a subgraph of ${G}$.
  The \emph{clique number} of ${G}$, denoted as $\kappa$, is the size of its largest clique.
  The \emph{treewidth} of $\mathcal{G}$ is the minimum clique number of ${G}$ (minus one) among all possible chordal completions.
\end{definition}

Observe that given any ordering $x_0>\cdots>x_{n-1}$ of the vertices of $\mathcal{G}$, there is a natural chordal completion ${G}$, i.e. we add edges to $\mathcal{G}$ in such a way that each ${G}|_{X_l}$ is a clique.
In general, we want to find a chordal completion with a small clique number.
However, there are $n!$ possible orderings of the vertices and thus finding the best chordal completion is not simple.
Indeed, this problem is NP-hard~\cite{Arnborg1987}, but there are good heuristics and approximation algorithms~\cite{bodlaender2008combinatorial,Vandenberghe2014}.
See~\cite{Bodlaender2010} for a comparison of some of these heuristics.

\begin{example}
\begin{figure}[htb]
  \centering
  \null\hfill
  \subfloat[][Chordal completion ]{\includegraphics[scale=0.4]{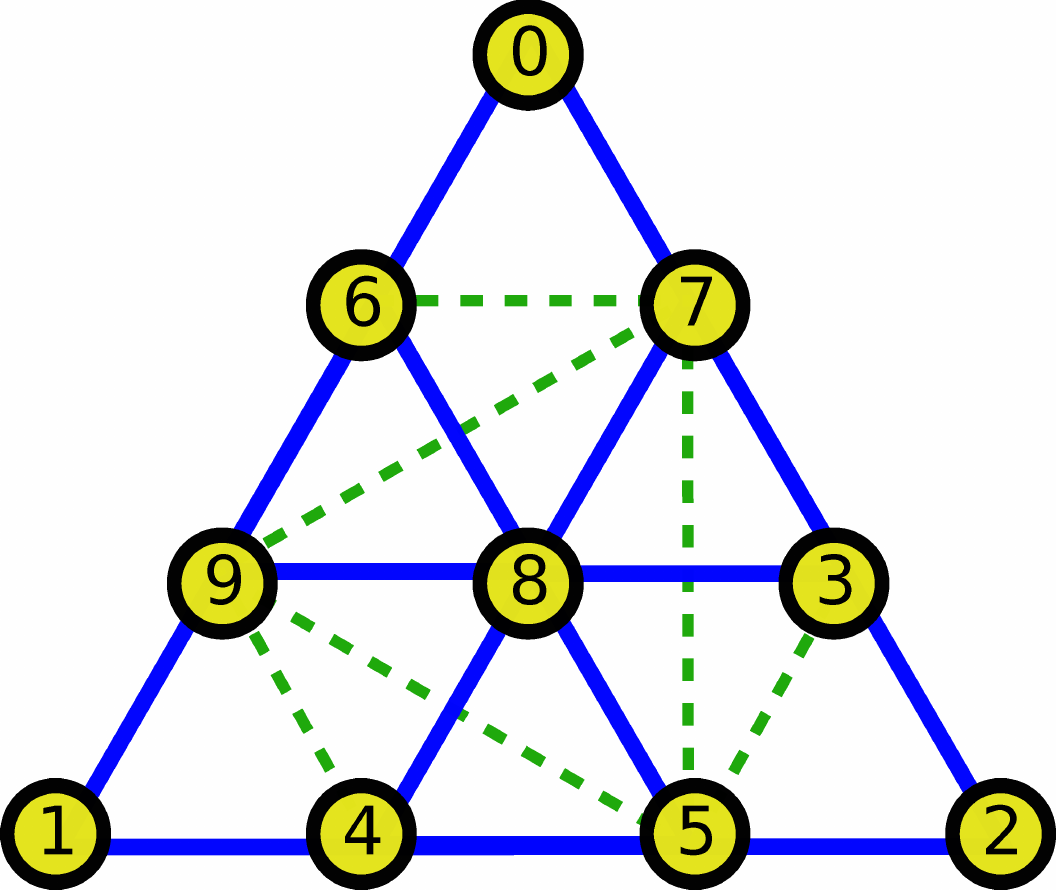}\label{fig:graph10notchordal}}
  \hfill
  \subfloat[][Elimination tree]{\includegraphics[scale=0.4]{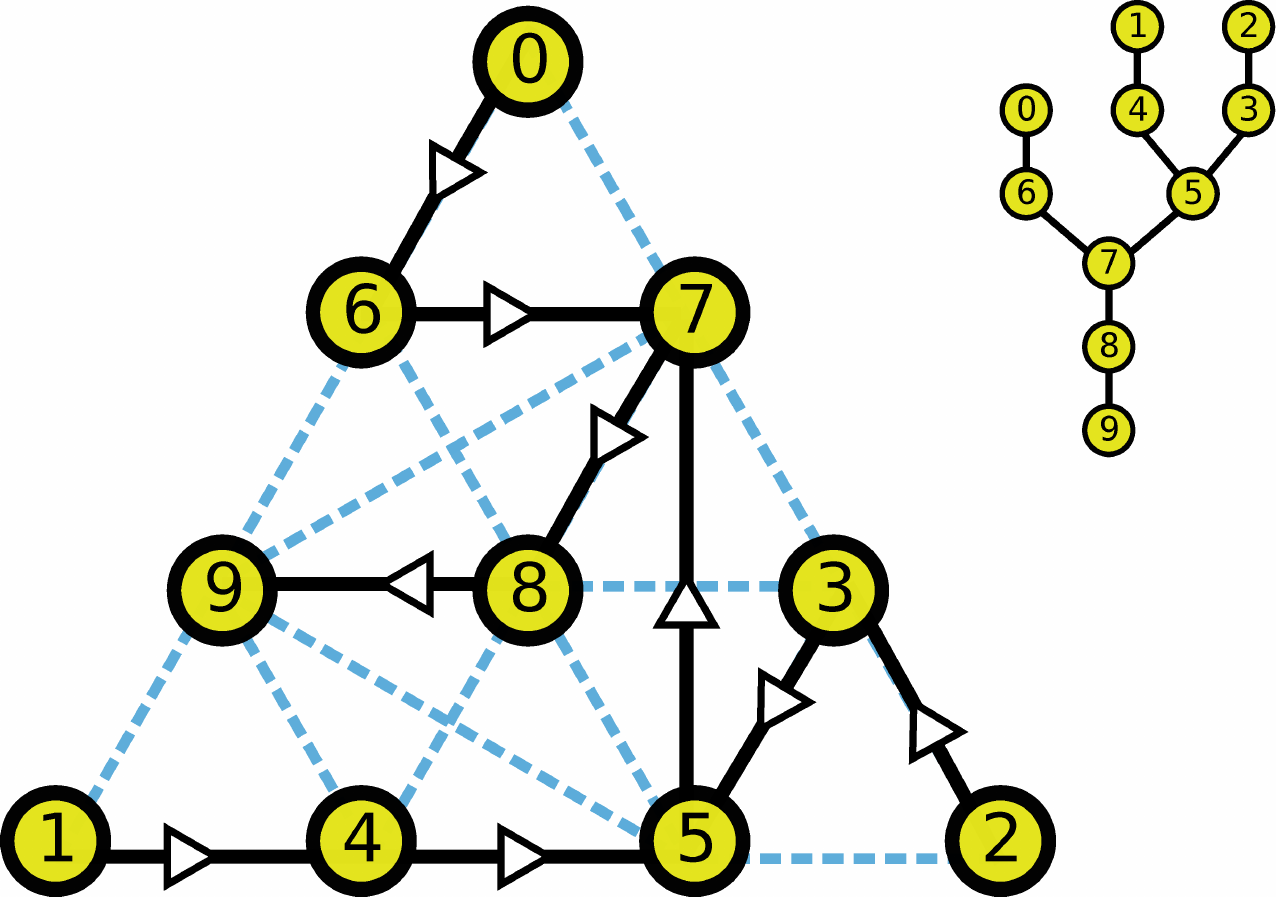}\label{fig:elimtree}}
  \hfill\null
  \caption
  {Left: 10-vertex graph (blue/solid) and a chordal completion (green/dashed).
  Right: Elimination tree of the chordal completion.}
\end{figure}
  Let $\mathcal{G}$ be the blue/solid graph in \Cref{fig:graph10notchordal}.
  This graph is not chordal but if we add the six green/dashed edges shown in the figure we obtain a chordal completion ${G}$.
  In fact, the ordering $x_0>\cdots>x_9$ is a perfect elimination ordering of the chordal completion.
  The clique number of ${G}$ is four and the treewidth of $\mathcal{G}$ is three.
\end{example}

As mentioned earlier, we will assume throughout this document that the polynomial system~$F$ is \emph{supported} on a given chordal graph~$G$,
where by supported we mean that $G$ is a chordal completion of~$\mathcal{G}(F)$.
Moreover, we assume that the ordering of the vertices (inherited from the polynomial ring) is a perfect elimination ordering of~$G$.

Given a chordal graph $G$ with some perfect elimination ordering, there is an associated tree that will be very helpful in our discussion.

\begin{definition}\label{defn:eliminationtree}
  Let $G$ be an ordered graph with vertices $x_0>\cdots>x_{n-1}$.
  The \emph{elimination tree} of~$G$ is the following directed spanning tree:
  for each $l$ there is an arc from $x_l$ towards the largest  $x_p$ that is adjacent to $x_l$ and $x_p<x_l$.
 We will say that $x_p$ is \emph{the parent} of $x_l$ and $x_l$ is \emph{a child} of $x_p$.
 Note that the elimination tree is rooted at~$x_{n-1}$.
\end{definition}

\Cref{fig:elimtree} shows an example of the elimination tree.
We now present a simple property of the elimination tree of a chordal graph.

\begin{lemma}\label{thm:cliquecontainment}
  Let $G$ be a chordal graph, let $x_l$ be some vertex and let $x_p$ be its parent in the elimination tree $T$. Then
    $X_l \setminus \{x_l\}\subset X_p,$
  where $X_i$ is as in~\eqref{eq:cliqueXl}.
\end{lemma}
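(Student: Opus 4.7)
The plan is to unfold the definitions and use the clique property of the perfect elimination ordering. By definition of $X_l$ in~\eqref{eq:cliqueXl}, the restriction $G|_{X_l}$ is a clique, so every pair of vertices in $X_l$ is adjacent. By definition of the elimination tree, $x_p$ is the largest vertex with $x_p < x_l$ that is adjacent to $x_l$, so $x_p \in X_l \setminus \{x_l\}$ and moreover $x_p$ is the maximum of $X_l \setminus \{x_l\}$ under the ordering.

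Fix an arbitrary $x_m \in X_l \setminus \{x_l\}$; the goal is to show $x_m \in X_p$. If $x_m = x_p$, then trivially $x_m \in X_p$. Otherwise $x_m < x_p$, and since $x_m, x_p \in X_l$ and $G|_{X_l}$ is a clique, the vertices $x_m$ and $x_p$ are adjacent in $G$. Combined with $x_m < x_p$, this places $x_m$ in $X_p$ by definition~\eqref{eq:cliqueXl}. Either way $x_m \in X_p$, which establishes the desired inclusion.

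There is no real obstacle here: the entire argument is a direct consequence of the definition of perfect elimination ordering (which forces $X_l$ to be a clique) together with the definition of the parent (which identifies $x_p$ as the maximum element of $X_l \setminus \{x_l\}$). The only subtlety worth flagging is the convention that smaller indices correspond to larger variables, so one must be careful when asserting $x_m < x_p$ versus $x_m > x_p$; but this orientation is fixed consistently throughout the paper and does not affect the argument.
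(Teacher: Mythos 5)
Your proof is correct and follows essentially the same route as the paper: the paper notes that $Y = X_l\setminus\{x_l\}$ is a clique whose largest vertex is $x_p$ and concludes $Y\subset X_p$ because any such clique lies in $X_p$, while you simply spell out that containment element by element (each $x_m\in Y$ with $x_m\neq x_p$ is adjacent to $x_p$ and smaller than it, hence in $X_p$). No gaps; the handling of the ordering convention is also consistent with the paper's.
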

\begin{proof}
  Let $Y :=  X_l\setminus \{x_l\}$. 
  Note that $Y$ is a clique, whose largest variable is $x_p$.
  Since $X_p$ is the unique largest clique satisfying such property, we must have $Y\subset X_p$. 
\end{proof}

\subsection{Chordal networks}\label{s:chordalnetwork}

We proceed to formalize the concept of chordal networks.

\begin{definition}
  Let $G$ be a chordal graph with vertex set $X$, and let $X_l$ be as in~\eqref{eq:cliqueXl}.
  A $G$-\emph{chordal network} is a directed graph $\mathcal{N}$, whose nodes are polynomial sets in $\K[X]$, such that:
  \begin{itemize}[leftmargin=.5in]
    \item \emph{(nodes supported on cliques)} each node $F_l$ of $\mathcal{N}$ is given a rank $l= \rank(F_l)$, with $0\leq l< n$, such that $F_l\subset \K[X_l]$.
    \item \emph{(arcs follow elimination tree)} if $(F_l,F_p)$ is an arc of $\mathcal{N}$ then $(l,p)$ is an arc of the elimination tree of $G$, where $l=\rank(F_l),p=\rank(F_p)$.
  \end{itemize}
A chordal network is \emph{triangular} if each node consists of a single polynomial $f$, and either $f=0$ or its largest variable is $x_{\rank(f)}$.
\end{definition}

There is one parameter of a chordal network that will determine the complexity of some of our methods.
The \emph{width} of a chordal network, denoted as~$W$, is the largest number of nodes of any given rank.
Note that the number of nodes in the network is at most~$nW$, and the number of arcs is at most~$(n-1)W^2$.

We can represent chordal networks using the diagrams we have shown throughout the paper.
Since the structure of a chordal network resembles the elimination tree (second item in the definition), we usually show the elimination tree to the left of the network.

\begin{figure}[htb]
  \centering
    \begin{tikzpicture}[scale=.99]
      \def\x{2.50}
      \def\y{1.1}
      \def\yzero{0*\y}
      \def\yone{\yzero-.3*\y}
      \def\ytwo{\yone-.3*\y}
      \def\ythree{\ytwo-1*\y}
      \def\yfour{\ythree-.3*\y}
      \def\yfive{\yfour-1.2*\y}
      \def\ysix{\yfive-.3*\y}
      \def\yseven{\ysix-1*\y}
      \def\yeight{\yseven-.8*\y}
      \def\ynine{\yeight-.8*\y}
      \tikzstyle{myboxA} = [scale=0.63,draw=purple,fill=orange!8,thick,rectangle,rounded corners];
      \tikzstyle{myboxB} = [scale=0.63,draw=black!20!orange,fill=orange!8,thick,rectangle,rounded corners];
      \tikzstyle{myarrowA} = [purple,thick,->];
      \tikzstyle{myarrowB} = [black!20!orange,thick,->];
      \tikzstyle{mygroup} = [draw=blue,fill = blue!15, thick, dotted, minimum height = 15.5*\y];
      \tikzstyle{myboxA0} = [scale=0.65,fill=yellow!80,draw=blue,thick,densely dotted,circle];
      \tikzstyle{myarrowA0} = [blue,thick,dotted,->];
      \node [scale=0.65] (legend) at (1.9*\x,\yeight-.4*\y) {$g(a,b,c):=a^2+b^2+c^2+ab+bc+ca$};
      \node [myboxA] (zero0) at (-2.15*\x,\yzero) {$x_0^3 + x_0^2x_7 + x_0x_7^2 + x_7^3$};
      \node [myboxA] (zero1) at (-1.3*\x,\yzero) {$g(x_0,x_6,x_7)$};
      \node [myboxB](one0) at (-.32*\x,\yone) {$x_1^3 + x_1^2x_9 + x_1x_9^2 + x_9^3$};
      \node [myboxB](one1) at (.55*\x,\yone) {$g(x_1,x_4,x_9)$};
      \node [myboxA](two0) at (1.6*\x,\ytwo) {$x_2^3 + x_2^2x_5 + x_2x_5^2 + x_5^3$};
      \node [myboxA](two1) at (2.5*\x,\ytwo) {$g(x_2,x_3,x_5)$};
      \node [myboxB](three0) at (1.19*\x,\ythree) {$x_3 - x_5$};
      \node [myboxB](three1) at (2.55*\x,\ythree) {$g(x_3,x_7,x_8)$};
      \node [myboxB](three2) at (1.82*\x,\ythree) {$x_3 + x_5 + x_7 + x_8$};
      \node [myboxA](four0) at (-.83*\x,\yfour) {$x_4 - x_9$};
      \node [myboxA](four1) at (0.55*\x,\yfour) {$g(x_4,x_8,x_9)$};
      \node [myboxA](four2) at (-.19*\x,\yfour) {$x_4 + x_5 + x_8 + x_9$};
      \node [myboxB](five0) at (-.3*\x,\yfive) {$g(x_5,x_8,x_9)$};
      \node [myboxB](five1) at (.55*\x,\yfive) {$x_5 + x_7 + x_8 + x_9$};
      \node [myboxB](five2) at (1.8*\x,\yfive) {$x_5 - x_9$};
      \node [myboxB](five3) at (1.3*\x,\yfive) {$x_5 - x_7$};
      \node [myboxB](five4) at (2.3*\x,\yfive) {$x_5 - x_9$};
      \node [myboxA](six0) at (-2.55*\x,\ysix) {$x_6 - x_7$};
      \node [myboxA](six1) at (-2.0*\x,\ysix) {$g(x_6,x_8,x_9)$};
      \node [myboxA](six2) at (-1.25*\x,\ysix) {$x_6 + x_7 + x_8 + x_9$};
      \node [myboxB](seven0) at (-0.5*\x,\yseven) {$x_7 - x_9$};
      \node [myboxB](seven1) at (0.45*\x,\yseven) {$g(x_7,x_8,x_9)$};
      \node [myboxA](eight0) at (0*\x,\yeight) {$x_8^3 + x_8^2x_9 + x_8x_9^2 + x_9^3$};
      \node [myboxB](nine0) at (0*\x,\ynine) {$x_9^4 - 1$};

      \draw [myarrowA,out=-100,in=100] (zero0) to (six0);
      \draw [myarrowA,out=-120,in=100] (zero1) to (six1);
      \draw [myarrowA,out=-100,in=120] (zero1) to (six2);
      \draw [myarrowB,out=-120,in=70] (one0) to (four0);
      \draw [myarrowB] (one1) to (four1);
      \draw [myarrowB,out=-130,in=60] (one1) to (four2);
      \draw [myarrowA] (two0) to (three0);
      \draw [myarrowA] (two1) to (three1);
      \draw [myarrowA] (two1) to (three2);
      \draw [myarrowB,out=-120,in=20] (three0) to (five0);
      \draw [myarrowB,out=-70,in=18] (three0) to (five1);
      \draw [myarrowB,out=-30,in=125] (three0) to (five2);
      \draw [myarrowB,out=-130,in=50] (three1) to (five3);
      \draw [myarrowB] (three1) to (five4);
      \draw [myarrowB,out=-135,in=20] (three2) to (five0);
      \draw [myarrowB,out=-120,in=18] (three2) to (five1);
      \draw [myarrowB] (three2) to (five2);
      \draw [myarrowA] (four0) to (five0);
      \draw [myarrowA,out=-40,in=160] (four0) to (five1);
      \draw [myarrowA,out=-30,in=155] (four0) to (five3);
      \draw [myarrowA] (four1) to (five2);
      \draw [myarrowA] (four1.east) to (five4);
      \draw [myarrowA] (four2) to (five0);
      \draw [myarrowA] (four2) to (five1);
      \draw [myarrowA,out=-30,in=155] (four2) to (five3);
      \draw [myarrowB] (five0) to (seven0);
      \draw [myarrowB] (five1) to (seven1);
      \draw [myarrowB] (five2) to (seven1);
      \draw [myarrowB] (five3) to (seven1);
      \draw [myarrowB,out=-150,in=20] (five4) to (seven0);
      \draw [myarrowA,out=-15,in=160] (six0.south) to (seven1.west);
      \draw [myarrowA,out=-40,in=175] (six1) to (seven0.west);
      \draw [myarrowA,out=-20,in=160] (six2) to (seven1);
      \draw [myarrowB] (seven0) to (eight0);
      \draw [myarrowB] (seven1) to (eight0);
      \draw [myarrowA] (eight0) to (nine0);
      \begin{scope}[on background layer]
      \node [mygroup, minimum width=50*\x] (zero) at (-1.84*\x,\yzero){}; 
      \node [mygroup, minimum width=50*\x] (one) at (.02*\x,\yone){}; 
      \node [mygroup, minimum width=51*\x] (two) at (1.94*\x,\ytwo){}; 
      \node [mygroup, minimum width=54*\x] (three) at (1.91*\x,\ythree){}; 
      \node [mygroup, minimum width=55*\x] (four) at (-.10*\x,\yfour){}; 
      \node [mygroup, minimum width=91*\x] (five) at (.95*\x,\yfive){}; 
      \node [mygroup, minimum width=57*\x] (six) at (-1.8*\x,\ysix){}; 
      \node [mygroup, minimum width=45*\x] (seven) at (0*\x,\yseven){}; 
      \node [mygroup, minimum width=35*\x] (eight) at (0*\x,\yeight){}; 
      \node [mygroup, minimum width=20*\x] (nine) at (0*\x,\ynine){}; 
      \end{scope}
      \node [myboxA0] (zeroN)  at (-3.3*\x,\yzero) {$0$};
      \node [myboxA0] (oneN)   at (-3.1*\x,\yone) {$1$};
      \node [myboxA0] (twoN)   at (-2.9*\x,\ytwo) {$2$};
      \node [myboxA0] (threeN) at (-2.9*\x,\ythree) {$3$};
      \node [myboxA0] (fourN)  at (-3.1*\x,\yfour) {$4$};
      \node [myboxA0] (fiveN)  at (-3.1*\x,\yfive) {$5$};
      \node [myboxA0] (sixN)   at (-3.3*\x,\ysix) {$6$};
      \node [myboxA0] (sevenN) at (-3.1*\x,\yseven) {$7$};
      \node [myboxA0] (eightN) at (-3.1*\x,\yeight) {$8$};
      \node [myboxA0] (nineN)  at (-3.1*\x,\ynine) {$9$};
      \draw [myarrowA0,out=-100,in=100] (zeroN) to (sixN);
      \draw [myarrowA0,out=-105,in=105] (oneN) to (fourN);
      \draw [myarrowA0,out=-75,in=75] (twoN) to (threeN);
      \draw [myarrowA0,out=-80,in=60] (threeN) to (fiveN);
      \draw [myarrowA0,out=-105,in=105] (fourN) to (fiveN);
      \draw [myarrowA0,out=-80,in=80] (fiveN) to (sevenN);
      \draw [myarrowA0,bend right] (sixN) to (sevenN);
      \draw [myarrowA0,bend right] (sevenN) to (eightN);
      \draw [myarrowA0,bend right] (eightN) to (nineN);
    \end{tikzpicture}
  \caption{${G}$-chordal network, where ${G}$ is the chordal graph from \Cref{fig:graph10notchordal}. The elimination tree of ${G}$ is shown on the left.}
  \label{fig:triang10vars}
\end{figure}
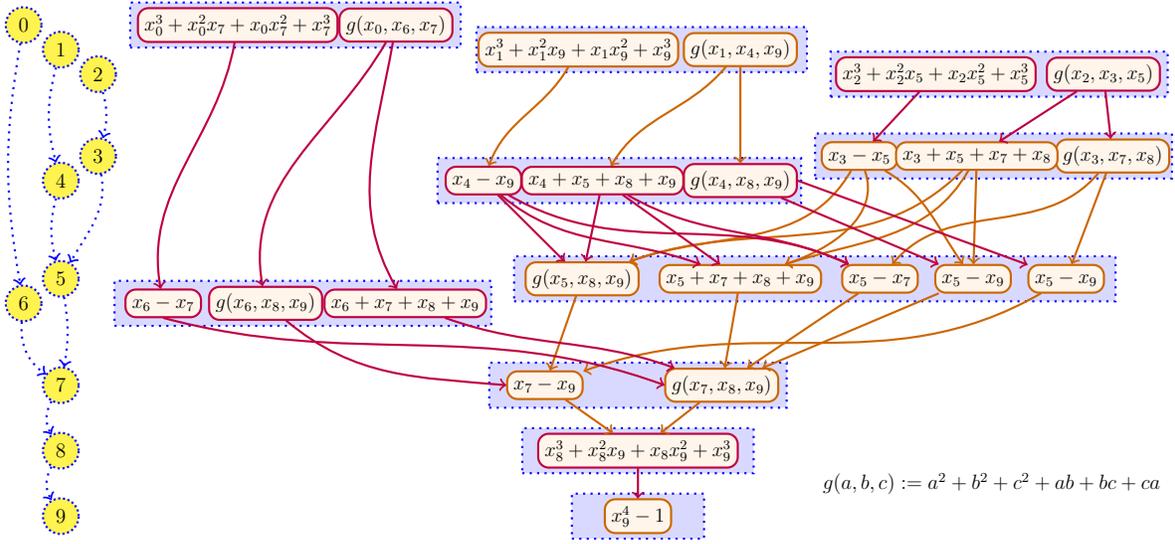

\begin{example}
  Let $\mathcal{G}$ be the blue/solid graph from \Cref{fig:graph10notchordal}, and let ${G}$ be the green/dashed chordal completion.
  \Cref{fig:triang10vars} shows a ${G}$-chordal network of width~$5$, that represents the $4$\nobreakdash-colorings of graph~$\mathcal{G}$ (\Cref{eq:colorings}).
  The elimination tree of ${G}$ is shown to the left of the diagram.
  Note that this network is triangular, and thus all its nodes consist of a single polynomial.
  For instance, two of its nodes are $f_5 =x_5+x_7+x_8+x_9$ and $f_6=x_6-x_7$.
  Nodes are grouped in blue rectangular boxes according to their rank.
  In particular, $f_5$ has rank~$5$ and $f_6$ rank~$6$, and indeed $f_5\in \K[X_5]=\K[x_5,x_7,x_8,x_9]$ and $f_6\in \K[X_6]=\K[x_6,x_7,x_8,x_9]$.
\end{example}

\begin{example}
  Let $\mathcal{G}$ be the $9$-cycle with vertices $x_0,\ldots,x_8$.
  Let ${G}$ be the chordal completion obtained by connecting vertex $x_8$ to all the others.
  \Cref{fig:triangcycle} shows a triangular ${G}$-chordal network.
  The elimination tree, shown to the left of the network, is the path $x_0\to  \cdots\to x_8$.
\end{example}

\begin{remark}
Sometimes we collapse certain ranks to make the diagram visually simpler.
In particular, in \Cref{fig:triangadjminors} we collapse the ranks $2i,2i+1$ into a single group.
\end{remark}

As suggested by the examples in the introduction, a triangular chordal network gives a decomposition of the polynomial ideal into triangular sets.
Each such triangular set corresponds to a \emph{chain} of the network, as defined next.

\begin{definition}
  Let $\mathcal{N}$ be a $G$-chordal network.
  A \emph{chain} of $\mathcal{N}$ is a tuple of nodes $C=(F_0,F_1,\ldots,F_{n-1})$ such that:
  \begin{itemize}[leftmargin=.5in]
    \item $\rank(F_l)=l$ for each $l$.
    \item if $x_p$ is the parent of $x_l$, then $(F_l,F_p)$ is an arc of $\mathcal{N}$.
  \end{itemize}
\end{definition}

\begin{example}\label{exmp:triang10vars}
  The chordal network from \Cref{fig:triang10vars} has~$21$ chains, one of which is:
  {\small
  \begin{gather*}
    C = (
      x_0^2 + x_0x_6 + x_0x_7 + x_6^2 + x_6x_7 + x_7^2,\;
      x_1^3 + x_1^2x_9 + x_1x_9^2 + x_9^3,\;
      x_2^3 + x_2^2x_5 + x_2x_5^2 + x_5^3,\;\\
      x_3 - x_5,\;
      x_4 - x_9,\;
      x_5^2 + x_5x_8 + x_5x_9 + x_8^2 + x_8x_9 + x_9^2,\;\\
      x_6^2 + x_6x_8 + x_6x_9 + x_8^2 + x_8x_9 + x_9^2,\;
      x_7 - x_9,\;
      x_8^3 + x_8^2x_9 + x_8x_9^2 + x_9^3,\;
      x_9^4 - 1
      ).
  \end{gather*}
}
\end{example}

\subsection{Binary Decision Diagrams}
\label{s:bdd}

Although motivated from a different perspective and with quite
distinct goals, throughout the development of this paper we realized
the intriguing similarities between chordal networks and a class of
data structures used in computer science known as ordered binary
decision diagrams
(OBDD)~\cite{Akers1978,Bryant1992,Knuth2011,Wegener2000}.

A binary decision diagram (BDD) is a data structure that can be used
to represent Boolean (binary) functions in terms of a directed acyclic
graph. They can be interpreted as a binary analogue of a straight-line
program, where the nodes are associated with variables and the
outgoing edges of a node correspond to the possible values of that
variable.  A particularly important subclass are the \emph{ordered}
BDDs (or OBDDs), where the branching occurs according to a specific
fixed variable ordering. Under a mild condition (reducibility) this
representation can be made unique, and thus every Boolean function has
a canonical OBDD representation. OBDDs can be effectively used for
further manipulation (e.g., decide satisfiability, count satisfying
assignments, compute logical operations).  Interestingly, several
important functions have a compact OBDD representation.  A further
variation, \emph{zero-suppressed} BDDs (ZBDDs), can be used to
efficiently represent subsets of the hypercube $\{0,1\}^n$ and to manipulate them (e.g.,
intersections, sampling, linear optimization).

Chordal networks can be thought of as a wide generalization of
OBDDs/ZBDDs to arbitrary algebraic varieties over general fields (instead of finite sets in $(\F_2)^{n}$).  
Like chordal networks, an OBDD corresponds to a
certain directed graph, but where the nodes are variables ($x_0,x_1,\ldots$) instead of polynomial sets.  We will see in
\Cref{s:monomial} that for the specific case of monomial
ideals, the associated chordal networks also have 
this form. Since one of our main goals is to preserve graphical
structure for efficient computation, in this paper we define chordal
networks only for systems that are structured according to some
chordal graph.  In addition,
for computational purposes we do not insist on uniqueness of the
representation (although it might be possible to make them canonical
after further processing).

The practical impact of data structures like BDDs and OBDDs over the
past three decades has been very significant, as they have enabled
breakthrough results in many areas of computer science including model
checking, formal verification and logic synthesis. We hope that
chordal networks will make possible similar advances in computational
algebraic geometry. The connections between BDDs and chordal
networks run much deeper, and we plan to further explore them in the
future.

\section{The chordally zero-dimensional case}\label{s:zerodim}

In this section we present our main methods to compute triangular chordal networks, although focused on a restricted type of zero-dimensional problems.
This restriction is for simplicity only; we will see that our methods naturally extend to arbitrary ideals.
Concretely, we consider the following family of polynomial sets.

\begin{definition}[Chordally zero-dimensional]\label{defn:chordallyzerodim}
  \hspace{-5pt}Let $F\subset \K[X]$ be supported on a chordal graph~$G$.
  We say that $F$ is \emph{chordally} zero-dimensional, if for each maximal clique $X_l$ of graph~$G$ the ideal $\ideal{F\cap \K[X_l]}$ is zero-dimensional.
\end{definition}

Note that the $q$-coloring equations in~\eqref{eq:colorings} are chordally zero-dimensional.
As in \Cref{exmp:triangcycle}, chordally zero-dimensional problems always have simple chordal network representations.

\begin{remark}[The geometric picture]
  There is a nice geometric interpretation behind the chordally zero-dimensional condition.
  Denoting $V_l$ the variety of $F\cap\K[X_l]$, the condition is that each $V_l$ is finite.
  Note now that $\pi_{X_l}(\V(F))\subset V_l$, where $\pi_{X_l}$ denotes the projection onto the coordinates of~$X_l$.
  Thus, independent of the size of $\V(F)$, the chordally zero-dimensional condition allows us to bound the size of its projections onto each~$X_l$.
  More generally, although not elaborated in this paper, our methods are expected to perform well on any $F$ (possibly positive-dimensional) for which the projections $\pi_{X_l}(\V(F))$ are well-behaved.
\end{remark}

\subsection{Triangular sets}\label{s:triangularsets}

We now recall the basic concepts of triangular sets for the case of zero-dimensional ideals, following~\cite{Lazard1992}.
We delay the exposition of the positive-dimensional case to \Cref{s:positivedim}.

\begin{definition}\label{defn:triangularzero}
  Let $f\in \K[X]\setminus \K$ be a non-constant polynomial.
  The \emph{main variable} of~$f$, denoted $\mvar(f)$, is the greatest variable appearing in $f$.
  The \emph{initial} of~$f$, denoted $\init(f)$, is the leading coefficient of~$f$ when viewed as a univariate polynomial in $\mvar(f)$.
  A \emph{zero-dimensional triangular set} is a collection of non-constant polynomials $T = \{t_0,\ldots,t_{n-1}\}$ such that
  $\mvar(t_i)=x_i$ and 
  $\init(t_i)=1$ for each $i$.
\end{definition}

Most of the analysis done in this paper will work over an arbitrary field $\K$.
For some results we require the field to contain sufficiently many elements, so we might need to consider a field extension.
We denote by $\overline{\K}$ the algebraic closure of~$\K$.
For a polynomial set $F$, we let $\V(F)\subset \overline{\K}^n$ be its \emph{variety}.
Note that for a zero-dimensional triangular set $T$, we always have 
\begin{align} \label{eq:degT}
  |\V(T)|\leq \deg(T):=\prod_{t\in T}\mdeg(t),
\end{align}
where $\mdeg(t):=\deg(t,\mvar(t))$ denotes the degree on the main variable.
Furthermore, the above is an equality if we count multiplicities.

For a triangular set $T$, let $\ideal{T}$ denote the generated ideal.
It is easy to see that a zero-dimensional triangular set is a lexicographic Gr\"obner basis of $\ideal{T}$.
In particular, we can test ideal membership by taking normal form.
We also denote as $\elim{p}{T}:= T\cap \K[x_p,x_{p+1},\ldots]$ the subset of $T$ restricted to variables less or equal than $x_p$.
Note that $\elim{p}{T}$ generates the elimination ideal of $\ideal{T}$ because of the elimination property of lexicographic Gr\"obner bases.

\begin{notation}
  We let $S=S_1\sqcup S_2$ denote a disjoint union, i.e., $S=S_1\cup S_2$ and $S_1\cap S_2=\emptyset$.
\end{notation}

\begin{definition}
  Let $I\subset\K[X]$ be a zero-dimensional ideal.
  A \emph{triangular decomposition} of $I$ is a collection $\T$ of triangular sets, such that
    $\V(I) = \bigsqcup_{\,T\in \T} \V(T).$
  We say that $\T$ is \emph{squarefree} if each $T\in \T$ generates a radical ideal.
  We say that $\T$ is \emph{irreducible} if each $T\in \T$ generates a prime  ideal (or equivalently, a maximal ideal).
\end{definition}

Lazard proposed algorithms to compute a triangular decomposition from a Gr\"obner basis~\cite{Lazard1992}.
He also showed how to post-process it to make it squarefree/irreducible.

\begin{remark}
  As explained in~\cite{Lazard1992}, there might be several distinct triangular decompositions of an ideal, but there are simple ways to pass from one description to another.
\end{remark}

\subsection{Chordal triangularization}

We proceed to explain how to compute a triangular chordal network representation of a polynomial set~$F$.
We will start with a particular (induced) chordal network that is modified step after step to make it triangular.

\begin{definition}
  Let $F\subset\K[X]$ be supported on a chordal graph $G$.
  The \emph{induced} $G$-chordal network has a unique node of rank $k$, namely $F_k:=F\cap \K[X_k]$, and its arcs are the same as in the elimination tree, i.e., $(F_l,F_p)$ is an arc if $x_p$ is the parent of $x_l$.
\end{definition}

We will sequentially perform two types of operations to the induced chordal network.
\begin{description}
  \item[\myfont{Triangulate}($F_l$)] Let $\mathcal{T}$ be a triangular decomposition of a node  $F_l$ of the network.
    Replace node $F_l$ with one node for each triangular set in $\mathcal{T}$.
    Any node which was previously connected to $F_l$ is then connected to each of the new nodes.
  \item[\myfont{Eliminate}($T$)] Let $T$ be a rank~$l$ node and let $x_p$ be the parent of $x_l$.
    Let $T_{p}:= \elim{p}{T}$ and $T_{l}:= T\setminus T_{p}$.
    For each arc $(T,F_p)$ we create a new rank~$p$ node $F_p':=F_p\cup T_{p}$, and we substitute arc $(T,F_p)$ with $(T,F_p')$.
    Then, we copy all arcs coming out of $F_p$ to $F_p'$ (while keeping the old arcs).
    Next, we replace the content of node $T$ with the polynomial set $T_{l}$. 
\end{description}
The operations are performed in rounds: in the $l$-th round we triangulate/eliminate all rank~$l$ nodes.
After each round, we may reduce the network with the following additional operations.
\begin{description}
  \item[\myfont{MergeIn}($l$)] Merge any two rank~$l$ nodes $F_l,F_l'$ if they define the same ideal, and they have the same sets of incoming arcs.
  \item[\myfont{MergeOut}($l$)] Merge any two rank~$l$ nodes $F_l,F_l'$ if they define the same ideal, and they have the same sets of outgoing arcs.
\end{description}

\begin{example}\label{exmp:triang4vars}
  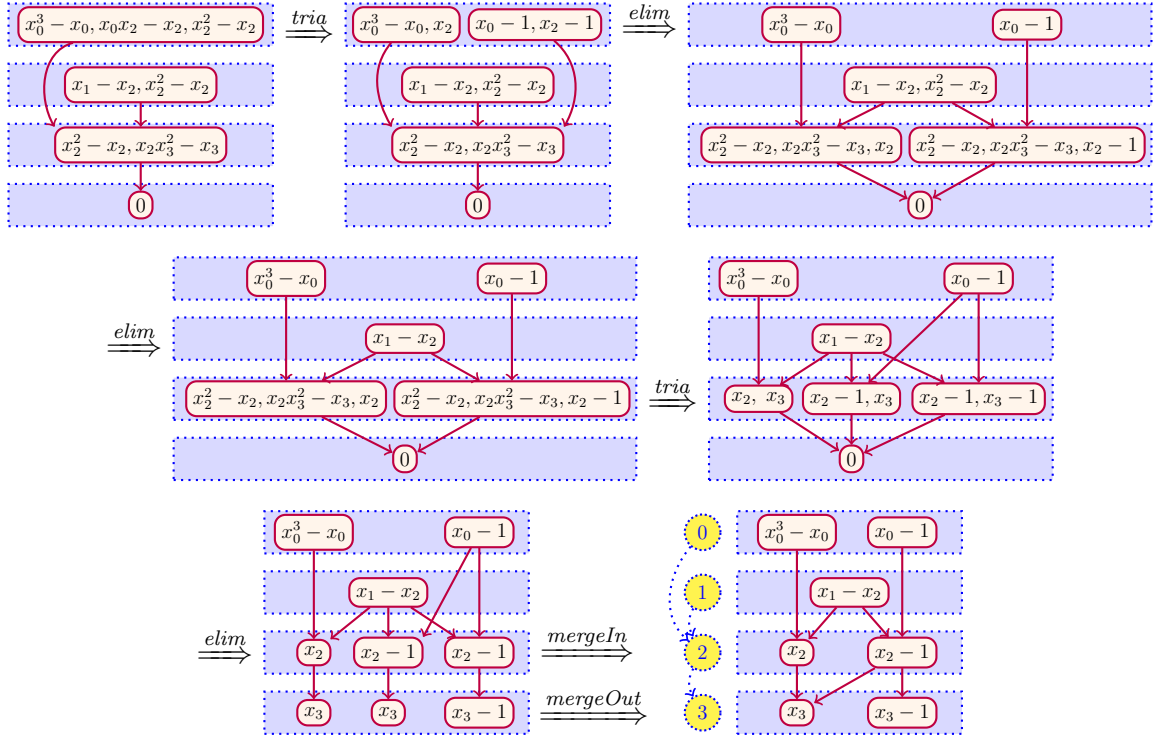
\begin{figure}[htb]
    \centering

    \def\y{.8}
    \tikzstyle{myboxA} = [scale=0.65,draw=purple,fill=orange!8,thick,rectangle,rounded corners]
    \tikzstyle{myarrowA} = [purple,thick,->]
    \tikzstyle{myboxA0} = [scale=0.65,fill=yellow!80,text=blue,draw=blue,thick,densely dotted,circle]
    \tikzstyle{myarrowA0} = [blue,thick,dotted,->]
    \tikzstyle{mygroup0} = [draw=blue,fill=blue!15, thick, dotted, minimum height=16]
    \begin{tikzpicture}[scale=1]
      \def\x{.8}
      \tikzstyle{mygroup} = [mygroup0, minimum width = 100]
      \node [myboxA] (zero1) at (0*\x,0*\y) {$x_0^3 - x_0, x_0x_2-x_2,x_2^2-x_2$};
      \node [myboxA](one1) at (0*\x,-1*\y) {$x_1 - x_2, x_2^2-x_2$};
      \node [myboxA](two1) at (0*\x,-2*\y) {$x_2^2-x_2, x_2x_3^2-x_3$};
      \node [myboxA](three1) at (0*\x,-3*\y) {$0$};
      \draw [myarrowA,out=-165,in=120] (zero1) to (two1.north west);
      \draw [myarrowA] (one1) to (two1);
      \draw [myarrowA] (two1) to (three1);
      \begin{scope}[on background layer]
      \node [mygroup] (zero) at (0*\x,0*\y){}; 
      \node [mygroup] (one) at (0*\x,-1*\y){}; 
      \node [mygroup] (two) at (0*\x,-2*\y){}; 
      \node [mygroup] (three) at (0*\x,-3*\y){}; 
      \end{scope}
    \end{tikzpicture}
    \raisebox{5.3\height}{$\xRightarrow{\mathit{tria}}$}
    \begin{tikzpicture}[scale=1]
      \def\x{.95}
      \tikzstyle{mygroup} = [mygroup0, minimum width = 100]
      \node [myboxA] (zero1) at (-1*\x,0*\y) {$x_0^3 - x_0, x_2$};
      \node [myboxA] (zero2) at (.8*\x,0*\y) {$x_0-1, x_2-1$};
      \node [myboxA](one1) at (0*\x,-1*\y) {$x_1 - x_2, x_2^2-x_2$};
      \node [myboxA](two1) at (0*\x,-2*\y) {$x_2^2-x_2, x_2x_3^2-x_3$};
      \node [myboxA](three1) at (0*\x,-3*\y) {$0$};
      \draw [myarrowA,out=-130,in=120] (zero1) to (two1.north west);
      \draw [myarrowA,out=-40,in=60] (zero2) to (two1.north east);
      \draw [myarrowA] (two1) to (three1);
      \draw [myarrowA] (one1) to (two1);
      \begin{scope}[on background layer]
      \node [mygroup] (zero) at (0*\x,0*\y){}; 
      \node [mygroup] (one) at (0*\x,-1*\y){}; 
      \node [mygroup] (two) at (0*\x,-2*\y){}; 
      \node [mygroup] (three) at (0*\x,-3*\y){}; 
      \end{scope}
    \end{tikzpicture}
    \raisebox{5.3\height}{$\xRightarrow{\mathit{elim}}$}
    \begin{tikzpicture}[scale=1]
      \def\x{1.58}
      \tikzstyle{mygroup} = [mygroup0, minimum width = 175]
      \node [myboxA] (zero1) at (-1*\x,0*\y) {$x_0^3 - x_0$};
      \node [myboxA] (zero2) at (.9*\x,0*\y) {$x_0-1$};
      \node [myboxA](one1) at (0*\x,-1*\y) {$x_1 - x_2, x_2^2-x_2$};
      \node [myboxA](two1) at (-1*\x,-2*\y) {$x_2^2-x_2, x_2x_3^2-x_3,x_2$};
      \node [myboxA](two2) at (.9*\x,-2*\y) {$x_2^2-x_2, x_2x_3^2-x_3,x_2-1$};
      \node [myboxA](three1) at (0*\x,-3*\y) {$0$};
      \draw [myarrowA] (zero1) to (two1);
      \draw [myarrowA] (zero2) to (two2);
      \draw [myarrowA] (one1) to (two1);
      \draw [myarrowA] (one1) to (two2);
      \draw [myarrowA] (two1) to (three1);
      \draw [myarrowA] (two2) to (three1);
      \begin{scope}[on background layer]
      \node [mygroup] (zero) at (0*\x,0*\y){}; 
      \node [mygroup] (one) at (0*\x,-1*\y){}; 
      \node [mygroup] (two) at (0*\x,-2*\y){}; 
      \node [mygroup] (three) at (0*\x,-3*\y){}; 
      \end{scope}
    \end{tikzpicture}
    \\\vspace{10pt}
    \raisebox{3.5\height}{$\xRightarrow{\mathit{elim}}$}
    \begin{tikzpicture}[scale=1]
      \def\x{1.58}
      \tikzstyle{mygroup} = [mygroup0, minimum width = 175]
      \node [myboxA] (zero1) at (-1*\x,0*\y) {$x_0^3 - x_0$};
      \node [myboxA] (zero2) at (.9*\x,0*\y) {$x_0-1$};
      \node [myboxA](one1) at (0*\x,-1*\y) {$x_1 - x_2$};
      \node [myboxA](two1) at (-1*\x,-2*\y) {$x_2^2-x_2, x_2x_3^2-x_3,x_2$};
      \node [myboxA](two2) at (.9*\x,-2*\y) {$x_2^2-x_2, x_2x_3^2-x_3,x_2-1$};
      \node [myboxA](three1) at (0*\x,-3*\y) {$0$};
      \draw [myarrowA] (zero1) to (two1);
      \draw [myarrowA] (zero2) to (two2);
      \draw [myarrowA] (one1) to (two1);
      \draw [myarrowA] (one1) to (two2);
      \draw [myarrowA] (two1) to (three1);
      \draw [myarrowA] (two2) to (three1);
      \begin{scope}[on background layer]
      \node [mygroup] (zero) at (0*\x,0*\y){}; 
      \node [mygroup] (one) at (0*\x,-1*\y){}; 
      \node [mygroup] (two) at (0*\x,-2*\y){}; 
      \node [mygroup] (three) at (0*\x,-3*\y){}; 
      \end{scope}
    \end{tikzpicture}
    \raisebox{2\height}{$\xRightarrow{\mathit{tria}}$}
    \begin{tikzpicture}[scale=1]
      \def\x{1.3}
      \tikzstyle{mygroup} = [mygroup0, minimum width = 130]
      \node [myboxA] (zero1) at (-1.25*\x,0*\y) {$x_0^3 - x_0$};
      \node [myboxA](zero2) at (1*\x,0*\y) {$x_0 - 1$};
      \node [myboxA](one1) at (-.3*\x,-1*\y) {$x_1 - x_2$};
      \node [myboxA](two1) at (-1.25*\x,-2*\y) {$x_2,\;x_3$};
      \node [myboxA](two2) at (-.3*\x,-2*\y) {$x_2 - 1,x_3$};
      \node [myboxA](two3) at (1*\x,-2*\y) {$x_2 - 1,x_3-1$};
      \node [myboxA](three1) at (-.3*\x,-3*\y) {$0$};
      \draw [myarrowA] (zero1) to (two1);
      \draw [myarrowA] (zero2) to (two2);
      \draw [myarrowA] (zero2) to (two3);
      \draw [myarrowA] (one1) to (two1);
      \draw [myarrowA] (one1) to (two2);
      \draw [myarrowA] (one1) to (two3);
      \draw [myarrowA] (two1) to (three1);
      \draw [myarrowA] (two2) to (three1);
      \draw [myarrowA] (two3) to (three1);
      \begin{scope}[on background layer]
      \node [mygroup] (zero) at (0*\x,0*\y){}; 
      \node [mygroup] (one) at (0*\x,-1*\y){}; 
      \node [mygroup] (two) at (0*\x,-2*\y){}; 
      \node [mygroup] (three) at (0*\x,-3*\y){}; 
      \end{scope}
    \end{tikzpicture}
    \\\vspace{10pt}
    \raisebox{2\height}{$\xRightarrow{\mathit{elim}}$}
    \begin{tikzpicture}[scale=1]
      \def\x{1.1}
      \tikzstyle{mygroup} = [mygroup0, minimum width = 100]
      \node [myboxA] (zero1) at (-1*\x,0*\y) {$x_0^3 - x_0$};
      \node [myboxA](zero2) at (1*\x,0*\y) {$x_0 - 1$};
      \node [myboxA](one1) at (-.1*\x,-1*\y) {$x_1 - x_2$};
      \node [myboxA](two1) at (-1*\x,-2*\y) {$x_2$};
      \node [myboxA](two2) at (-.1*\x,-2*\y) {$x_2 - 1$};
      \node [myboxA](two3) at (1*\x,-2*\y) {$x_2 - 1$};
      \node [myboxA](three1) at (-1*\x,-3*\y) {$x_3$};
      \node [myboxA](three2) at (-.1*\x,-3*\y) {$x_3$};
      \node [myboxA](three3) at (1*\x,-3*\y) {$x_3 - 1$};
      \draw [myarrowA] (zero1) to (two1);
      \draw [myarrowA] (zero2) to (two2.north east);
      \draw [myarrowA] (zero2) to (two3);
      \draw [myarrowA] (one1) to (two1);
      \draw [myarrowA] (one1) to (two2);
      \draw [myarrowA] (one1) to (two3);
      \draw [myarrowA] (two1) to (three1);
      \draw [myarrowA] (two2) to (three2);
      \draw [myarrowA] (two3) to (three3);
      \begin{scope}[on background layer]
      \node [mygroup] (zero) at (0*\x,0*\y){}; 
      \node [mygroup] (one) at (0*\x,-1*\y){}; 
      \node [mygroup] (two) at (0*\x,-2*\y){}; 
      \node [mygroup] (three) at (0*\x,-3*\y){}; 
      \end{scope}
    \end{tikzpicture}
    \raisebox{2\height}{$\xRightarrow{\mathit{mergeIn}}$}
    \hspace{-40pt}\raisebox{0.3\height}{$\xRightarrow{\mathit{mergeOut}}$}
    \begin{tikzpicture}[scale=1]
      \def\x{.7}
      \tikzstyle{mygroup} = [mygroup0, minimum width = 85]
      \node [myboxA] (zero1) at (-1*\x,0*\y) {$x_0^3 - x_0$};
      \node [myboxA](zero2) at (1*\x,0*\y) {$x_0 - 1$};
      \node [myboxA](one1) at (0*\x,-1*\y) {$x_1 - x_2$};
      \node [myboxA](two1) at (-1*\x,-2*\y) {$x_2$};
      \node [myboxA](two2) at (1*\x,-2*\y) {$x_2 - 1$};
      \node [myboxA](three1) at (-1*\x,-3*\y) {$x_3$};
      \node [myboxA](three2) at (1*\x,-3*\y) {$x_3 - 1$};
      \draw [myarrowA] (zero1) to (two1);
      \draw [myarrowA] (zero2) to (two2);
      \draw [myarrowA] (one1) to (two1);
      \draw [myarrowA] (one1) to (two2);
      \draw [myarrowA] (two1) to (three1);
      \draw [myarrowA] (two2) to (three1);
      \draw [myarrowA] (two2) to (three2);
      \begin{scope}[on background layer]
      \node [mygroup] (zero) at (0*\x,0*\y){}; 
      \node [mygroup] (one) at (0*\x,-1*\y){}; 
      \node [mygroup] (two) at (0*\x,-2*\y){}; 
      \node [mygroup] (three) at (0*\x,-3*\y){}; 
      \end{scope}
      \node [myboxA0,left=.2 of zero] (zero0) {$0$};
      \node [myboxA0,left=.2 of one] (one0) {$1$};
      \node [myboxA0,left=.2 of two] (two0) {$2$};
      \node [myboxA0,left=.2 of three] (three0) {$3$};
      \draw [myarrowA0,out=-130,in=140] (zero0) to (two0);
      \draw [myarrowA0,bend right] (one0) to (two0);
      \draw [myarrowA0,bend right] (two0) to (three0);
    \end{tikzpicture}
    \caption{Chordal triangularization from \Cref{exmp:triang4vars}.}
    \label{fig:triang4vars}
  \end{figure}
  Consider the polynomial set $F=\{x_0^3-x_0,x_0x_2-x_2,x_1-x_2,x_2^2-x_2,x_2x_3^2-x_3\}$, whose associated graph is the star graph ($x_2$ is connected to $x_0,x_1,x_3$).
  \Cref{fig:triang4vars} illustrates a sequence of operations (triangulation, elimination and merge) performed on its induced chordal network.
  The chordal network obtained has three chains:
  \begin{align*}
    (x_0^3-x_0,x_1-x_2,x_2,x_3), \;
    (x_0-1,x_1-x_2,x_2-1,x_3), \;
    (x_0-1,x_1-x_2,x_2-1,x_3-1).
  \end{align*}
  These chains give triangular decomposition of~$F$.
\end{example}

\Cref{alg:chordtriang} presents the chordal triangularization method.
The input consists of a polynomial set $F\subset \K[X]$ and a chordal graph~$G$.
As in the above example, the output of the algorithm is always a triangular chordal network, and it encodes a triangular decomposition of the given polynomial set~$F$.

\begin{algorithm}
  \caption{Chordal Triangularization}
  \label{alg:chordtriang}
  \begin{algorithmic}[1]
    \Require{Polynomial set $F\subset\K[X]$ supported on a chordal graph $G$}
    \Ensure{Triangular $G$-chordal network $\mathcal{N}$ such that $\V(\mathcal{N})=\V(F)$}
    \Procedure{ChordalNet}{$F,G$}
    \State $\mathcal{N}:=$ induced $G$-chordal network of $F$
    \For {$l = 0:n-1$}
    \For {$F_l$ node of $\mathcal{N}$ of rank $l$}
    \State \Call{Triangulate}{$F_l$}\label{line:triangulate}
    \EndFor
    \State \Call{MergeOut}{$l$}\label{line:mergeoutl}
    \If {$l<n-1$}
    \State $x_p :=$ parent of $x_l$
    \For {$T$ node of $\mathcal{N}$ of rank $l$}
    \State \Call{Eliminate}{$T$}
    \EndFor
    \State \Call{MergeOut}{$p$}\label{line:mergeoutp}
    \EndIf
    \State \Call{MergeIn}{$l$}\label{line:mergein}
    \EndFor
    \State \Return $\mathcal{N}$
    \EndProcedure
  \end{algorithmic}
\end{algorithm}

\subsection{Algorithm analysis}

The objective of this section is to prove that, when the input $F$ is chordally zero-dimensional (\Cref{defn:chordallyzerodim}), \Cref{alg:chordtriang} produces a $G$-chordal network, whose chains give a triangular decomposition of $F$.
As described below, the chordally zero-dimensional assumption is only needed in order for the algorithm to be \emph{well-defined} (recall that up to this point we have only defined triangular decompositions of zero-dimensional systems).
Later in the paper we will see how to extend \Cref{alg:chordtriang} to arbitrary ideals.

\begin{definition}
  Let $\mathcal{N}$ be a chordal network, and let $C=(F_0,\ldots,F_{n-1})$ be a chain.
  The \emph{variety} of the chain is $\V(C):=\V(F_0\cup\cdots\cup F_{n-1})$.
  The \emph{variety} $\V(\mathcal{N})$ of the chordal network is the union of $\V(C)$ over all chains $C$.
\end{definition}

\begin{theorem}\label{thm:chordtriang}
  Let~$F\subset\K[X]$, supported on chordal graph $G$, be chordally zero-dimensional.
  \Cref{alg:chordtriang} computes a $G$-chordal network whose chains give a triangular decomposition of~$F$.
\end{theorem}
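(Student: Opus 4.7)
The plan is to verify three invariants that hold throughout the execution of \Cref{alg:chordtriang}, and then read off the triangular decomposition from the terminal state. The invariants are: (I1) $\mathcal{N}$ is always a $G$-chordal network (rank supports and arc pattern preserved); (I2) $\V(\mathcal{N})=\V(F)$; and (I3) at the start of round~$l$, every rank-$l$ node generates a zero-dimensional ideal in $\K[X_l]$, so that \textsc{Triangulate} is well-defined via Lazard's algorithm.

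For (I1), the only nontrivial case is \textsc{Eliminate}($T$): I would check that $T_p := \elim{p}{T}$ lies in $\K[X_p]$. Since $T$ is a triangular set in $\K[X_l]$, we have $T_p \subset \K[X_l \setminus \{x_l\}]$, and by \Cref{thm:cliquecontainment} the latter lies in $\K[X_p]$. For (I2), I would check each operation separately. \textsc{Triangulate} replaces $F_l$ by a Lazard decomposition, preserving the union of chain varieties through $F_l$. \textsc{Eliminate}($T$) splits $T = T_l \sqcup T_p$ and rewires each arc $(T,F_p)$ to $(T_l, F_p')$ with $F_p' := F_p \cup T_p$; since $T_l \cup T_p = T$ and $F_p'$ inherits $F_p$'s outgoing arcs, every chain retains the same accumulated polynomial content $T \cup F_p \cup \cdots$. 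The two merges identify nodes with equal ideals and matching arc neighborhoods, so they preserve chain varieties.

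For (I3), I would induct on $l$. The rank-$0$ node $F \cap \K[X_0]$ is zero-dimensional because $X_0$ is a maximal clique (nothing larger than $x_0$ can extend it), so the chordally zero-dimensional hypothesis applies. For $l \geq 1$, by induction each child $x_{l'}$ of $x_l$ in the elimination tree has already been processed with zero-dimensional triangular outputs $T^{(l')} \subset \K[X_{l'}]$; each such $T^{(l')}$ contributes $\elim{l}{T^{(l')}}$ to $F_l$, and by \Cref{thm:cliquecontainment} this is a (zero-dimensional) triangular set in $\K[X_{l'} \cap \{x_l, x_{l+1}, \ldots\}] \subset \K[X_l]$. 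Combining the contributions from all children with $F \cap \K[X_l]$, one argues that every variable of $X_l$ is constrained: maximal $X_l$ is handled by the hypothesis directly, and the non-maximal case is covered because descendants jointly account for the remaining variables via the tree structure and \Cref{thm:cliquecontainment}.

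At termination, every node is a single polynomial whose main variable equals its rank: each \textsc{Triangulate} produces such nodes, and each subsequent \textsc{Eliminate} splits a multi-polynomial triangular set into rank-correct fragments. So $\mathcal{N}$ is triangular, and each chain $C$ packages one polynomial per variable with distinct main variables, forming a zero-dimensional triangular set $T_C$ with $\V(C) = \V(T_C)$. Together with (I2) and disjointness inherited from each Lazard call, this gives $\V(F) = \bigsqcup_C \V(T_C)$. I expect (I3) for non-maximal cliques to be the main obstacle: one must carefully verify that descendants' elimination polynomials, combined with $F \cap \K[X_l]$, cover every variable of $X_l$. A secondary delicate point is the chain-by-chain bookkeeping in \textsc{Eliminate}, particularly the outgoing-arc copying from $F_p$ to $F_p'$ that keeps the chain set complete.
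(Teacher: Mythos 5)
Your overall plan coincides with the paper's: it splits the proof into preservation of chordality, preservation of the variety, triangularity of the output chains, and well-definedness of the triangulation steps, exactly as in \Cref{thm:preservechordal,thm:outputtriangular,thm:preservevariety,thm:maximalclique}, and your treatment of (I1), (I2) and the terminal triangularity is fine. The genuine gap is the point you yourself flag: (I3) for non-maximal cliques is asserted, not proved, and the justification you offer cannot deliver it. \Cref{thm:cliquecontainment} gives the inclusion $X_c\setminus\{x_c\}\subset X_l$ for a child $x_c$ of $x_l$, i.e., it shows the eliminated contributions land \emph{inside} $\K[X_l]$; it says nothing about whether they \emph{cover} $X_l$, which is what zero-dimensionality requires. ``Descendants jointly account for the remaining variables'' is precisely the statement that needs an argument, and it does not follow from the lemma you cite. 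Your induction also silently needs the same fact in a degenerate case: if $x_l$ is a leaf of the elimination tree there are no descendants at all, so you must know that a leaf's clique is maximal, which again is not a consequence of \Cref{thm:cliquecontainment}.

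What closes the gap (and is what the paper's proof of \Cref{thm:maximalclique} uses) is the following chordality fact: if $X_l$ is not a maximal clique of $G$, then $x_l$ has a child $x_c$ in the elimination tree with $X_l\cup\{x_c\}\subset X_c$, and hence $X_c\setminus\{x_c\}=X_l$ by \Cref{thm:cliquecontainment}. To prove it, note that non-maximality gives a vertex adjacent to every element of $X_l$ and not in $X_l$; any such vertex is necessarily larger than $x_l$ (a smaller neighbor of $x_l$ would already lie in $X_l$). Among these witnesses choose $x_c$ closest to $x_l$ in the ordering; then $X_l\subset X_c$, and any neighbor of $x_c$ strictly between $x_l$ and $x_c$ would lie in the clique $X_c\supset X_l$, hence be adjacent to all of $X_l$ and contradict minimality, so the parent of $x_c$ is $x_l$. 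With this single child in hand the induction works: its rank-$c$ node before the $c$-th elimination round is zero-dimensional by the inductive hypothesis, so its image under $\elim{l}{\cdot}$ is a zero-dimensional ideal in $\K[X_l]$ contained in the rank-$l$ node being triangulated, which is therefore zero-dimensional; in particular a leaf's clique is maximal, so \Cref{defn:chordallyzerodim} applies there directly. Once this is in place, the rest of your argument goes through as in the paper.
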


We will split the proof of \Cref{thm:chordtriang} into several lemmas.
We first show that the algorithm is {well-defined}, i.e., we only perform triangulation operations (line~\ref{line:triangulate}) on nodes $F_l$ that define zero-dimensional ideals.

\begin{lemma}\label{thm:maximalclique}
  Let $F\subset\K[X]$ be chordally zero-dimensional.
  Then in \Cref{alg:chordtriang} every triangulation operation is performed on a zero-dimensional ideal.
\end{lemma}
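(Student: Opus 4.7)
The plan is to argue by induction on the rank $l$, in the order the algorithm processes the ranks, that at the moment every triangulation is performed the current rank $l$ node generates a zero-dimensional ideal in $\K[X_l]$. The inductive hypothesis simultaneously ensures that each triangular set $T$ produced at an earlier rank $l'<l$ is a zero-dimensional triangular set in $\K[X_{l'}]$, so that $\elim{l}{T}$ consists precisely of the polynomials of $T$ whose main variables lie in $X_{l'}\setminus\{x_{l'}\}$.

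For the base case $l=0$, note that $x_0$ is the largest variable, so every clique $X_k\supseteq X_0$ must contain $x_0$, forcing $k=0$; hence $X_0$ is a maximal clique. Moreover $x_0$ has no children in the elimination tree, so no Eliminate operation has augmented the rank $0$ node. Consequently the unique rank $0$ node is $F\cap\K[X_0]$, which is zero-dimensional by the chordally zero-dimensional hypothesis.

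For the inductive step I distinguish two cases. If $X_l$ is maximal, every rank $l$ node contains the original $F\cap\K[X_l]$ (since Eliminate only adjoins polynomials), and the latter is zero-dimensional by hypothesis, so we are done. The substantial case is when $X_l$ is non-maximal. Here the plan is first to establish the structural claim that there exists a child $x_c$ of $x_l$ in the elimination tree with $X_c=X_l\cup\{x_c\}$. Given this, the inductive hypothesis at rank $c$ supplies zero-dimensional triangular sets $T_c^{(j)}$ in $\K[X_c]$; since $X_c\setminus\{x_c\}=X_l$, each $\elim{l}{T_c^{(j)}}$ drops only the polynomial with main variable $x_c$ and is itself a zero-dimensional triangular set in $\K[X_l]$. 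The \myfont{Eliminate} operation installs such a triangular set into every rank $l$ node it creates from $T_c^{(j)}$, forcing each such node to be zero-dimensional.

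The principal obstacle is the structural claim about chordal graphs. The plan is to consider the set $W:=\{x_v : x_v>x_l \text{ and } x_v \text{ is adjacent to every element of } X_l\}$, which is non-empty precisely because $X_l$ is non-maximal, and to pick $x_c\in W$ of smallest value. By choice of $x_c$ we have $X_l\subseteq X_c\setminus\{x_c\}$, so the parent $x_{p'}$ of $x_c$, being the largest vertex in $X_c\setminus\{x_c\}$, satisfies $x_{p'}\geq x_l$. If $x_{p'}>x_l$, then \Cref{thm:cliquecontainment} gives $X_l\subseteq X_c\setminus\{x_c\}\subseteq X_{p'}$, placing $x_{p'}\in W$ with value strictly less than $x_c$ and contradicting minimality. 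Hence $x_{p'}=x_l$, so $x_c$ is a child of $x_l$, and combining $X_l\subseteq X_c\setminus\{x_c\}$ with the reverse inclusion from \Cref{thm:cliquecontainment} yields the desired equality $X_c=X_l\cup\{x_c\}$.
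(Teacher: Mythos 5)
Your proof is correct and follows essentially the same induction as the paper's: ranks whose clique $X_l$ is maximal are handled through the original generators $F\cap\K[X_l]$ that every rank-$l$ node retains, and non-maximal ranks through the elimination from a child $x_c$ with $X_c=X_l\cup\{x_c\}$, whose zero-dimensional triangular sets restrict (by dropping the single polynomial with main variable $x_c$) to zero-dimensional triangular sets inside the rank-$l$ nodes. The only difference is that you also prove the structural claim that such a child exists when $X_l$ is not maximal--which the paper simply asserts--and your argument for it, taking the minimal vertex above $x_l$ adjacent to all of $X_l$ and invoking \Cref{thm:cliquecontainment} to force its parent to be $x_l$, is correct.
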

\begin{proof}
  See \Cref{s:zerodimproofs}.
\end{proof}

We now show that the chordal structure is preserved during the algorithm.

\begin{lemma}\label{thm:preservechordal}
  Let $\mathcal{N}$ be a $G$-chordal network.
  Then the result of performing a triangulation or elimination operation is also a $G$-chordal network.
\end{lemma}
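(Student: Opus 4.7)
The plan is to verify the two defining axioms of a $G$-chordal network after each operation: (i) each node of rank $l$ has content contained in $\K[X_l]$, and (ii) every arc $(F_l,F_p)$ has $(l,p)$ an arc of the elimination tree of $G$. I will handle the two operations in turn.

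For $\myfont{Triangulate}(F_l)$, observe that the triangular decomposition is computed for the ideal $\ideal{F_l}\subset\K[X_l]$, so every triangular set $T$ replacing $F_l$ already lies in $\K[X_l]$; each such node receives rank $l$, giving (i). For (ii), the arcs incident to $F_l$ are simply reconnected to each new node without changing the ranks of their endpoints, so they still correspond to edges of the elimination tree.

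For $\myfont{Eliminate}(T)$, with $T$ of rank $l$ and $x_p$ the parent of $x_l$ in the elimination tree, two kinds of new content appear: the rank-$p$ node $F_p':=F_p\cup T_p$ (where $T_p=\elim{p}{T}$), and the rewritten rank-$l$ node with content $T_l:=T\setminus T_p$. Since $T_l\subset T\subset\K[X_l]$, compatibility of $T_l$ with rank $l$ is immediate. For $F_p'$, we have $F_p\subset\K[X_p]$ by hypothesis, so the question reduces to showing $T_p\subset\K[X_p]$. Every polynomial in $T_p$ has all of its variables in $X_l\cap\{x_p,x_{p+1},\ldots\}$; as $x_p<x_l$, this intersection is contained in $X_l\setminus\{x_l\}$, which by \Cref{thm:cliquecontainment} lies inside $X_p$. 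Hence $T_p\subset\K[X_p]$, and therefore $F_p'\subset\K[X_p]$, establishing (i). For (ii), every rewired or copied arc runs between rank pairs already connected in the elimination tree: the arcs $(T,F_p)\mapsto(T,F_p')$ still link ranks $l$ and $p$, while the arcs copied from those out of $F_p$ inherit the rank structure of $F_p$'s outgoing edges.

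The sole nontrivial point is the containment $T_p\subset\K[X_p]$ in the elimination step; it relies precisely on the clique-containment property of the elimination tree provided by \Cref{thm:cliquecontainment}. Everything else is a direct bookkeeping check on ranks and arcs.
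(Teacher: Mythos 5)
Your proof is correct and follows essentially the same route as the paper: the triangulation case is immediate since each triangular set computed from $F_l\subset\K[X_l]$ stays in $\K[X_l]$, and the elimination case hinges on \Cref{thm:cliquecontainment} to get $\elim{p}{T}\subset\K[X_l\setminus\{x_l\}]\subset\K[X_p]$, with the arc/rank bookkeeping handled exactly as in the paper. Nothing is missing; you simply spell out the clique-containment step in a bit more detail than the paper does.
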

\begin{proof}
  Consider first a triangulation operation.
  Note that if $F_l\subset\K[X_l]$ then each $T$ in a triangular decomposition is also in $\K[X_l]$.
  Consider now an elimination operation.
  Let $T\subset \K[X_l]$ and $F_p\subset \K[X_p]$ be two adjacent nodes.
  Using \Cref{thm:cliquecontainment}, $\elim{p}{T}\subset \K[X_l\setminus \{x_l\}]\subset \K[X_p]$.
  Thus, the new node $F_p':= F_p\cup \elim{p}{T}\subset \K[X_p]$.
  It is clear that for both operations the layered structure of $\mathcal{N}$ is preserved (i.e., arcs follow the elimination tree).
\end{proof}

We next show that the chains of the output network are triangular sets.

\begin{lemma}\label{thm:outputtriangular}
  The output of \Cref{alg:chordtriang} is a triangular $G$-chordal network.
\end{lemma}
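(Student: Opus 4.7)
The plan is to verify the two defining conditions of a triangular $G$-chordal network for the output of \Cref{alg:chordtriang}: that it is a $G$-chordal network, and that each node consists of a single polynomial $f$ with either $f=0$ or $\mvar(f)=x_{\rank(f)}$. The first condition is immediate from \Cref{thm:preservechordal} applied to \myfont{Triangulate} and \myfont{Eliminate}, combined with the observation that the merge operations only identify nodes of equal rank with equal ideals, so they respect both the rank function and the requirement that arcs follow the elimination tree.

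For the second condition, I induct on the round index $l=0,1,\ldots,n-1$, with invariant: after round $l$ completes, every rank-$k$ node for $k\leq l$ consists of a single polynomial $t$ with $\mvar(t)=x_k$ or $t=0$. No operation in a later round $l'>l$ modifies a rank-$k$ node with $k\leq l$, since triangulation and elimination at rank $l'$ only touch nodes of ranks $l'$ and its parent $p'>l'$, and the merges at rank $l'$ or $p'$ likewise preserve all other ranks; thus it suffices to establish the invariant round by round.

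For the inductive step, I first observe that at the start of round $l$ every rank-$l$ node is contained in $\K[X_l]$: this is true initially, and any content pushed in by an earlier \myfont{Eliminate} from a child $l'$ of $l$ lies in $\K[X_{l'}\setminus\{x_{l'}\}]\subset \K[X_l]$ by \Cref{thm:cliquecontainment}. By \Cref{thm:maximalclique}, \myfont{Triangulate}$(F_l)$ on line~\ref{line:triangulate} is well-defined, and each resulting node is a zero-dimensional triangular set $T\subset\K[X_l]$ containing exactly one polynomial $t_i$ per variable $x_i\in X_l$, with $\mvar(t_i)=x_i$. In the subsequent \myfont{Eliminate}$(T)$ (performed whenever $l<n-1$), let $x_p$ be the parent of $x_l$. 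For any $t_i\in T$ with $i\neq l$, we have $\mvar(t_i)=x_i\in X_l\setminus\{x_l\}\subset X_p$ by \Cref{thm:cliquecontainment}, hence $x_i\leq x_p$ since $x_p$ is the maximum of $X_p$; therefore every variable appearing in $t_i$ is $\leq x_p$, and $t_i\in\elim{p}{T}$. Conversely $t_l\notin\elim{p}{T}$ because $\mvar(t_l)=x_l>x_p$. Thus $T_l=T\setminus\elim{p}{T}=\{t_l\}$, so the content of the rank-$l$ node becomes the single polynomial $t_l$ with $\mvar(t_l)=x_l$. For the base case $l=n-1$, no elimination is needed because $X_{n-1}=\{x_{n-1}\}$ already forces \myfont{Triangulate} to yield a single univariate polynomial in $x_{n-1}$. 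The merges $\myfont{MergeOut}$ and $\myfont{MergeIn}$ preserve the single-polynomial structure. The main subtlety, and the only step with genuine geometric content, is identifying $T_l$ exactly with $\{t_l\}$, which hinges on \Cref{thm:cliquecontainment}.
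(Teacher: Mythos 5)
Your proof is correct and follows essentially the same route as the paper's: triangulation turns each rank-$l$ node into a triangular set with a unique polynomial having main variable $x_l$, and elimination (via \Cref{thm:cliquecontainment}) discards exactly the remaining polynomials, whose main variables all lie below $x_p$. You simply spell out details the paper leaves implicit (the round-by-round invariant, the identification $T_l=\{t_l\}$, and the untouched rank-$(n-1)$ case), so no further changes are needed.
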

\begin{proof}
  Let $T\subset\K[X_l]$ be a rank $l$ node for which we will perform an elimination operation.
  Note that $T$ must be triangular as we previously performed a triangulation operation.
  Therefore, there is a unique polynomial $f\in T$ with $\mvar(f)=x_l$.
  When we perform the elimination operation this is the only polynomial of $T$ we keep, which concludes the proof.
\end{proof}

Finally, we show that the variety is preserved during the algorithm.

\begin{lemma}\label{thm:preservevariety}
  Let $\mathcal{N}$ be the output of \Cref{alg:chordtriang}.
  Then $\V(\mathcal{N})=\V(F)$, and moreover, any two chains of $\mathcal{N}$ have disjoint varieties.
\end{lemma}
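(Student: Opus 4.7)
The plan is to establish the lemma by induction on the elementary operations of \Cref{alg:chordtriang}, maintaining throughout the invariant:
\begin{equation*}
(*)\qquad \bigcup_{C} \V(C) = \V(F), \text{ and } \V(C)\cap \V(C')=\emptyset \text{ for distinct chains } C \neq C'.
\end{equation*}
The base case is the induced chordal network. Since $F$ is supported on~$G$ and every clique of~$G$ lies in a maximal clique~$X_l$, every polynomial in~$F$ belongs to some $F_l = F\cap\K[X_l]$; thus $F=\bigcup_l F_l$ and the unique chain $C_0=(F_0,\ldots,F_{n-1})$ satisfies $\V(C_0)=\V(F)$, so $(*)$ holds trivially.

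For the inductive step I would verify that each of the four operations preserves $(*)$. \myfont{Triangulate}$(F_l)$ refines each chain $C$ through $F_l$ into $r$ chains $C_1,\ldots,C_r$ using the triangular decomposition property $\V(F_l)=\bigsqcup_i \V(T_i)$; distributivity of intersection over disjoint unions yields $\V(C)=\bigsqcup_i \V(C_i)$, preserving the union and the pairwise disjointness (siblings disjoint via $\V(T_i)$, non-siblings disjoint by inductive hypothesis). \myfont{Eliminate}$(T)$ writes $T=T_l\sqcup T_p$ with $T_p=\elim{p}{T}$ and re-routes each chain $(\ldots,T,F_p^{(i)},\ldots)$ to $(\ldots,T_l,F_p^{(i)}\cup T_p,\ldots)$; the total polynomial content $T_l\cup T_p\cup F_p^{(i)}$ along the chain is unchanged, so each chain's variety is preserved under a variety-preserving bijection of chains.

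Merge operations (\myfont{MergeIn}, \myfont{MergeOut}) combine two nodes $F_l,F_l'$ sharing the ideal and one side of the arc structure; the resulting $\tilde F_l$ satisfies $\ideal{\tilde F_l}=\ideal{F_l}=\ideal{F_l'}$, so every chain through $F_l$ or $F_l'$ becomes a chain through $\tilde F_l$ with unchanged variety. If two old chains are forced to coincide after the merge (namely, their node sequences differ only at rank~$l$ but align elsewhere), they have identical varieties, which the inductive disjointness forces to be empty, so no non-empty variety is lost or duplicated.

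The main obstacle is the careful analysis of \myfont{Eliminate}: one must confirm (a) that $T_p$ can legitimately be appended to $F_p$ without breaking the support condition $F_p'\subset \K[X_p]$, which is exactly the content of \Cref{thm:cliquecontainment} (already used in \Cref{thm:preservechordal}), and (b) that the partition $T=T_l\sqcup T_p$ truly preserves the polynomial content along every affected chain. Once these bookkeeping items are confirmed, the invariant propagates through the algorithm and both conclusions of the lemma follow for the output network.
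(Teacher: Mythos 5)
Your proof is correct and follows essentially the same route as the paper's: verify that each operation --- triangulation (distributing the disjoint decomposition $\V(F_l)=\bigsqcup_{T}\V(T)$ over each chain through $F_l$), elimination (a content-preserving re-routing of chains), and merging --- preserves the union of the chain varieties and their pairwise disjointness. The only difference is presentational: you carry the disjointness invariant and the possible identification of chains under merging explicitly, whereas the paper treats these points more tersely.
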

\begin{proof}
  Let us show that the variety is preserved when we perform triangulation, elimination and merge operations.
  Firstly, note that a merge operation does not change set of chains of the network, so the variety is preserved.
  Consider now the case of a triangulation operation.
  Let $\mathcal{N}$ be a chordal network and let $F$ be one of its nodes.
  Let $\T$ be a triangular decomposition of $F$, and let $\mathcal{N}'$ be the chordal network obtained after replacing $F$ with~$\T$.
  Let $C$ be a chain of $\mathcal{N}$ containing $F$, and let $C'=C\setminus \{F\}$.
  Then
  \begin{align*}
    \V(C) = \V(C')\cap \V(F) 
    = \V(C') \cap (\bigsqcup_{T\in \T}\V(T))
    = \bigsqcup_{T\in \T} \V(C') \cap \V(T)
    = \bigsqcup_{T\in \T} \V(C'\cup \{T\}).
  \end{align*}
  Note that $C'\cup\{T\}$ is a chain of $\mathcal{N}'$.
  Moreover, all chains of $\mathcal{N}'$ that contain one of the nodes of $\T$ have this form.
  Thus, the triangulation step indeed preserves the variety.

  Finally, consider the case of an elimination operation.
  Let $T\subset \K[X_l]$ be a node, let $(T,F_p)$ be an arc and let $F_p'=F_p\cup \elim{p}{T}, T_l=T\setminus\elim{p}{T}$.
  Let $\mathcal{N}'$ be the network obtained after an elimination step on $T$.
  It is clear that
  \begin{align*}
    \V(T\cup F_p)
    = \V(T_l\cup \elim{p}{T}\cup F_p)= \V(T_l\cup F_p').
  \end{align*}
  Since a chain in $\mathcal{N}$ containing $T,F_p$ turns into a chain in $\mathcal{N}'$ containing $T_l,F_p'$, we conclude that the elimination step also preserves the variety.
\end{proof}

\begin{proof}[Proof of \Cref{thm:chordtriang}]
  We already proved the theorem, since we showed that:
  the algorithm is well-defined (\Cref{thm:maximalclique}), 
  chordal structure is preserved (\Cref{thm:preservechordal}) and 
  the chains in the output are triangular sets (\Cref{thm:outputtriangular}) that decompose the given variety (\Cref{thm:preservevariety}).
\end{proof}

\subsection{Radical and irreducible decompositions}\label{s:radicalirreducible}

We just showed that \Cref{alg:chordtriang} can compute chordal network representations of some zero-dimensional problems.
However, we sometimes require additional properties of the chordal network.
In particular, in \Cref{s:membership} we will need squarefree representations, i.e., such that any chain generates a radical ideal.
As shown next, we can obtain such representations by making one change in \Cref{alg:chordtriang}: 
whenever we perform a triangulation operation, we should produce a squarefree decomposition.

\begin{proposition}\label{thm:radicalnetwork}
  Assume that all triangular decompositions computed in \Cref{alg:chordtriang} are squarefree.
  Then any chain of the output network generates a radical ideal.
\end{proposition}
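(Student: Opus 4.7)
The plan is to fix an arbitrary chain $C = (f_0, f_1, \ldots, f_{n-1})$ of the output network and prove, by downward induction on $l$, that $I_l := \ideal{f_l, f_{l+1}, \ldots, f_{n-1}}$ is a radical ideal; the desired conclusion is then the case $l = 0$. Recall from \Cref{thm:outputtriangular} that each $f_l$ is either $0$ or satisfies $\mvar(f_l) = x_l$ and $\init(f_l) = 1$, so the $I_l$ are the natural ideals to consider.

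The first step is to trace each $f_l$ back to its origin in the algorithm. From the proof of \Cref{thm:outputtriangular}, $f_l$ is the unique polynomial of main variable $x_l$ in some squarefree triangular set $T^{(l)} \subset \K[X_l]$ produced by a \myfont{Triangulate} call at round $l$. The remaining elements $T^{(l)} \setminus \{f_l\} = \elim{\tilde l}{T^{(l)}}$, where $x_{\tilde l}$ is the parent of $x_l$, lie in $\K[X_{\tilde l}]$ by \Cref{thm:cliquecontainment}. The subsequent \myfont{Eliminate} operation transports these polynomials into the rank-$\tilde l$ node traversed by the chain, which is in turn triangulated to produce $T^{(\tilde l)}$; consequently $T^{(l)} \setminus \{f_l\} \subset \ideal{T^{(\tilde l)}}$.

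The central containment to be proved is
\begin{equation*}
  T^{(l)} \setminus \{f_l\} \subset I_{l+1} \qquad \text{for each } 0 \leq l < n-1,
\end{equation*}
which I plan to establish by downward induction along the path from the root to $x_l$ in the elimination tree. At the root the containment $\ideal{T^{(n-1)}} \subset I_{n-1} = \ideal{f_{n-1}}$ is trivial since $T^{(n-1)} = \{f_{n-1}\} \subset \K[x_{n-1}]$. In general, $T^{(l)} \setminus \{f_l\} \subset \ideal{T^{(\tilde l)}} \subset I_{\tilde l} \subset I_{l+1}$, using the inductive hypothesis $\ideal{T^{(\tilde l)}} \subset I_{\tilde l}$ applied to the parent rank.

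With this containment in hand, the radicality induction closes via the standard criterion for squarefree zero-dimensional triangular sets: since $T^{(l)}$ is squarefree, $f_l$ is squarefree as a polynomial in $x_l$ modulo $\ideal{T^{(l)} \setminus \{f_l\}}$, hence also modulo the larger ideal $I_{l+1}$; combined with the inductive hypothesis that $I_{l+1}$ is radical, the same criterion yields that $I_l = \ideal{f_l} + I_{l+1}$ is radical, with the base case $l = n-1$ immediate. The main obstacle I anticipate is the bookkeeping in the provenance step: because \myfont{MergeIn} and \myfont{MergeOut} identify nodes in the network, one must verify that a chain in the output lifts to a coherent family $(T^{(l)})_l$ of pre-merge triangular sets whose elimination arcs compose along the chain, ensuring that $T^{(l)} \setminus \{f_l\}$ really does land inside the specific ancestor $T^{(\tilde l)}$ that the chain traverses.
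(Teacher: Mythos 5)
Your argument is essentially correct, but it follows a genuinely different route from the paper's. The paper inducts over the elimination tree on subchains of the \emph{intermediate} networks: right after the $p$-th triangulation round it writes the ideal of a $p$-subchain as $\ideal{T_p}+\sum_j\ideal{C'_{l_j}}$, where $T_p$ is the freshly computed squarefree triangular set and the $C'_{l_j}$ are the children's subchains taken \emph{before} their elimination rounds, and then concludes with \Cref{thm:radicalzerodim} (a Seidenberg-type fact: a sum of radical zero-dimensional ideals, each in a subset of the variables, is radical). You instead fix a chain of the final output and run a downward induction on the rank, adjoining one monic polynomial $f_l$ at a time to the tail ideal $I_{l+1}$, using the provenance containment $T^{(l)}\setminus\{f_l\}=\elim{\tilde l}{T^{(l)}}\subset \ideal{T^{(\tilde l)}}\subset I_{l+1}$ together with the standard lemma that adjoining a monic polynomial whose specializations on $\V(J)$ are squarefree to a radical zero-dimensional $J$ yields a radical ideal. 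The paper's tree induction handles vertices with several children for free, needs no per-polynomial fiber analysis, and its bookkeeping is lighter because a chain's set of polynomials is literally unchanged by \myfont{Eliminate} steps, so only the triangulation rounds need any argument; your version is more explicit about how radicality is built up variable by variable, but it pays for this with the heavier provenance lemma you flag at the end (tracking an output chain back through the arc redistribution of \myfont{Eliminate}, the merges, and the re-triangulations), which a complete write-up must carry out. One step you should not dispatch with ``consequently'': having $\elim{\tilde l}{T^{(l)}}$ inside the node that is later triangulated only gives, a priori, $\elim{\tilde l}{T^{(l)}}\subset\I(\V(T^{(\tilde l)}))$, because the paper's triangular decompositions are defined only at the level of varieties; to get membership in $\ideal{T^{(\tilde l)}}$ you must invoke squarefreeness of $T^{(\tilde l)}$ together with the fact that $\I(\V(J))=J$ for a zero-dimensional radical ideal $J$. (The paper's own one-line ``Observe that'' identity rests on the same fact, so you are on equal footing there, but since this is where the squarefree hypothesis does real work in your route, it deserves an explicit statement.)
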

\begin{proof}
  See \Cref{s:zerodimproofs}
\end{proof}

Instead of radicality, we could further ask for an irreducible representation, i.e., such that any chain generates a prime ideal.
The obvious modification to make is to require all triangulation operations to produce irreducible decompositions.
Unfortunately, this does not always work.
Indeed, we can find irreducible univariate polynomials $f\subset \K[x_0]$, $g\subset\K[x_1]$ such that $\ideal{f,g}\subset \K[x_0,x_1]$ is not prime (e.g., $f=x_0^2+1, g =x_1^2+1$).

Nonetheless, there is an advantage of maintaining prime ideals through the algorithm: it gives a simple bound on the size of the triangular network computed, as shown next.
This bound will be used when analyzing the complexity of the algorithm.

\begin{lemma}\label{thm:boundW}
  Assume that all triangular decompositions computed in \Cref{alg:chordtriang} are irreducible.
  Then the number of rank~$l$ nodes in the output is at most $|\V(F\cap \K[X_l])|$.
\end{lemma}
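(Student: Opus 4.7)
The plan is to construct an injection from the set of rank-$l$ nodes in the output into $V_l := \V(F\cap\K[X_l])$. The key step will be to establish that, at a specific intermediate point in the algorithm (namely, just after the call \Call{MergeOut}{$l$} on line~\ref{line:mergeoutl}), all rank-$l$ nodes have pairwise distinct maximal ideals in $\K[X_l]$, each of which corresponds to a point of $V_l$.

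First I would establish the following invariant by induction on the outer loop variable: throughout the execution of \Cref{alg:chordtriang}, from its start and up through the call to \Call{MergeOut}{$l$} on line~\ref{line:mergeoutl}, \emph{all rank-$l$ nodes share the same set of outgoing arcs}. In the induced network there is a unique rank-$l$ node, so the invariant holds trivially. The only operations in prior iterations of the outer loop that touch rank-$l$ nodes are (i) eliminations at children $x_k$ (those whose parent in the elimination tree is $x_l$), which create new rank-$l$ nodes that, by the definition of \textsc{Eliminate}, copy the outgoing arcs of an existing rank-$l$ node; and (ii) \Call{MergeOut}{$l$} performed on line~\ref{line:mergeoutp} when $p=l$, which merges nodes while leaving their outgoing arcs unchanged. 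Both operations preserve uniformity of outgoing arcs. The invariant is then maintained through the \textsc{Triangulate} step at the beginning of the rank-$l$ iteration, because triangulation replaces a node by its triangular factors, each of which inherits the outgoing arcs of the pre-triangulation node.

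Once this invariant is in place, the conclusion follows quickly. Immediately before line~\ref{line:triangulate}, every rank-$l$ node $F_l$ contains $F\cap \K[X_l]$ as a polynomial set (this is preserved throughout the algorithm since elimination only augments the content of rank-$l$ nodes), so $\V(F_l)\subset V_l$. By \Cref{thm:maximalclique}, each such $F_l$ defines a zero-dimensional ideal; combined with the irreducibility hypothesis on all triangulations, every triangular factor produced on line~\ref{line:triangulate} generates a maximal ideal of $\K[X_l]$, corresponding to a unique point $p\in V_l$. After \Call{MergeOut}{$l$} on line~\ref{line:mergeoutl}, since by the invariant all rank-$l$ nodes share the same outgoing arcs, any two rank-$l$ nodes with the same ideal get merged. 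Hence at this moment the rank-$l$ nodes have pairwise distinct maximal ideals, and their count is at most $|V_l|$. Finally, the remaining operations of the rank-$l$ iteration (the calls to \textsc{Eliminate} at rank $l$, \Call{MergeOut}{$p$}, and \Call{MergeIn}{$l$}), as well as all subsequent iterations of the outer loop, cannot increase the number of rank-$l$ nodes: elimination rewrites the content of each rank-$l$ node in place without changing their number, merges only reduce the count, and operations at ranks $k>l$ never modify rank-$l$ nodes. Therefore, the number of rank-$l$ nodes in the final output is at most $|V_l|$.

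The main technical point is verifying the uniform-outgoing-arcs invariant across all operations that can touch rank-$l$ nodes; once it is established, the counting argument reduces to the fact that distinct irreducible zero-dimensional triangular sets (with unit initial) generate distinct maximal ideals and therefore correspond to distinct points of $V_l$.
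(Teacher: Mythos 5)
Your proof is correct and takes essentially the same approach as the paper's: the paper likewise observes that at the \textsc{MergeOut} step on line~\ref{line:mergeoutl} all rank-$l$ nodes have the same outgoing arcs (it asserts arcs to all rank-$p$ nodes), so nodes defining the same---necessarily maximal---ideal get merged, and maximality forces pairwise disjoint varieties contained in $\V(F\cap\K[X_l])$. The only differences are cosmetic: you prove the arc-uniformity claim by an explicit induction and note explicitly that the number of rank-$l$ nodes cannot increase after round~$l$, points the paper leaves implicit.
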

\begin{proof}
  Let us see that there are at most $|\V(F\cap \K[X_l])|$ rank~$l$ nodes after the merge operation from line~\ref{line:mergeoutl}.
  First note that when we perform this operation any rank~$l$ node has an outgoing arc to all rank~$p$ nodes (where $x_p$ is the parent of $x_l$).
  Therefore, this operation merges any two rank~$l$ nodes that define the same ideal.
  Since these ideals are all maximal, then for any two distinct nodes $T_l,T_l'$ we must have $\V(T_l)\cap \V(T_l')=\emptyset$.
  Also note that both $\V(T_l), \V(T_l')$ are subsets of $\V(F\cap\K[X_l])$.
  The lemma follows.
\end{proof}

\begin{remark}
  There are other ways to achieve the above bound that do not require computing irreducible decompositions.
  For instance, we can force the varieties $\V(T_l),\V(T_l')$ to be disjoint by using ideal saturation.
\end{remark}

\subsection{Complexity}

We proceed to estimate the cost of \Cref{alg:chordtriang} in the chordally zero-dimensional case.
We will show that the complexity\footnote{
Here the complexity is measured in terms of the number of field operations.}
is $O(n\,q^{O(\kappa)})$, where $\kappa$ is the treewidth (or clique number) of the graph, and $q$ is a certain degree bound on the polynomials that we formalize below.
In particular, when the treewidth~$\kappa$ is bounded the complexity is linear in~$n$ and polynomial in the degree bound~$q$.

\begin{definition}[$q$-domination]
  We say that a polynomial set $F_l\subset \K[X_l]$ is \emph{$q$-dominated} if for each $x_i\in X_l$ there is some $f\in F_l$ such that $\mvar(f)=x_i$, $\init(f)=1$ and $\deg(f,x_i)\leq q$.
  Let $F\subset \K[X]$ be supported on a chordal graph~$G$.
  We say that $F$ is \emph{chordally} $q$-dominated if $F\cap \K[X_l]$ is $q$-dominated for each maximal clique $X_l$ of graph~$G$.
\end{definition}

\begin{example}
  The coloring equations in~\eqref{eq:colorings} are chordally $q$-dominated since the equations $x_i^q-1$ are present.
  Another important example is the case of finite fields $\F_q$, since if we include the equations $x_i^q-x_i$, as is often done, the problem becomes chordally $q$-dominated.
\end{example}

\begin{remark}
  Observe that if $F$ is chordally $q$-dominated then it is also chordally zero-dimensional.
  Conversely, if $F$ is chordally zero-dimensional then we can apply a simple transformation to make it chordally $q$-dominated (for some $q$).
  Concretely, for each maximal clique $X_l$ we can enlarge $F$ with a Gr\"obner basis of $F\cap \K[X_l]$.
\end{remark}

We note that we also used the $q$-dominated condition in~\cite{Cifuentes2014} to analyze the complexity of chordal elimination.
The importance of this condition is that it allows us to easily bound the complexity of computing Gr\"obner bases or triangular decompositions, as stated next.

\begin{proposition}\label{thm:qdominatedconstant}
  For any $q$-dominated polynomial set on $k$ variables, the complexity of computing Gr\"obner bases and (squarefree/irreducible) triangular decompositions is~$q^{O(k)}$.
\end{proposition}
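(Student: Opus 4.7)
The plan is to exploit the $q$-domination hypothesis to bound the $\K$-codimension of the ideal $\ideal{F}$, and then to invoke standard complexity estimates for zero-dimensional Gr\"obner bases and for Lazard-style triangularization.

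First I would single out, for each of the $k$ variables $x_i$ appearing in the polynomials, a polynomial $f_i\in F$ with $\mvar(f_i)=x_i$, $\init(f_i)=1$, and $\mdeg(f_i)\leq q$. Because each such $f_i$ is monic in its main variable of degree $d_i\leq q$, the subset $\{f_1,\dots,f_k\}\subset F$ is already a zero-dimensional triangular set, and by~\eqref{eq:degT} the quotient $\K[X]/\ideal{f_1,\dots,f_k}$ has $\K$-dimension at most $\prod d_i\leq q^k$. Since $\ideal{F}\supset\ideal{f_1,\dots,f_k}$, the same bound propagates to $D:=\dim_\K\K[X]/\ideal{F}\leq q^k$.

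For the Gr\"obner basis part I would appeal to the well-known fact that, for a zero-dimensional ideal of codimension $D$, a reduced Gr\"obner basis (in any term order) has degrees bounded by $D$ and can be computed in $D^{O(1)}$ field operations; a clean way to see this is via a single linear-algebra computation on a Macaulay matrix of dimension $O(D)$ applied to the generators of $F$, which already have degrees at most $q\leq D$. Conversion to a lex order, if needed, is handled by FGLM in $O(kD^3)$, which is also $q^{O(k)}$. For the triangular decomposition part I would then apply Lazard's \textsc{LexTriangular} algorithm to the lex Gr\"obner basis; it proceeds variable-by-variable using univariate GCDs, and its total cost is polynomial in $D$. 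The squarefree and irreducible refinements reduce to univariate squarefree factorization and factorization over $\K$ of polynomials of degree at most $D$, each of which is polynomial in $D$. Multiplying these bounds together yields the claimed $D^{O(1)}=q^{O(k)}$.

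The main obstacle is that the precise constants hidden in the Gr\"obner basis complexity depend on the field $\K$ and on the particular algorithm used, so one must cite the appropriate result (e.g.\ the Lakshman--Lazard bound for zero-dimensional ideals, together with FGLM for order conversion); but because $D\leq q^k$ all of the standard algorithms fit within a $q^{O(k)}$ budget, and the rest of the argument is a bookkeeping exercise treating the triangular-decomposition subroutines as black boxes as already advocated in the introduction.
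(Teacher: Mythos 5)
Your first step (extracting the monic triangular subset $\{f_1,\dots,f_k\}$ and bounding $D=\dim_\K \K[X]/\ideal{F}\leq q^k$) is fine and matches the paper's count of standard monomials, and the LexTriangular part agrees with the paper's route. The genuine gap is in the Gr\"obner-basis step. There is no generic $D^{O(1)}$ bound for computing a reduced Gr\"obner basis from generators in terms of the quotient dimension alone, and the ``Macaulay matrix of dimension $O(D)$'' you invoke does not exist: a Macaulay-type matrix in degrees up to $D$ has on the order of $\binom{D+k}{k}$ columns, which for $D=q^k$ is roughly $q^{k^2}$, not $q^{O(k)}$; similarly, the Lakshman--Lazard bounds are of the form $d^{O(k)}$ in the degree $d$ of the \emph{input} polynomials, not polynomial in $D$. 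To reach $q^{O(k)}$ one must use $q$-domination structurally: since $\{f_1,\dots,f_k\}$ is a lexicographic Gr\"obner basis of the ideal it generates, every polynomial arising in Buchberger's algorithm can be kept reduced modulo it, hence supported on the at most $q^k$ monomials with $\deg_{x_i}<q$ for all $i$; this confinement to the ``box'' is exactly the analysis the paper cites (the finite-field argument with $x_i^q-x_i$, which transfers to $q$-dominated sets). Relatedly, your claim that the generators ``already have degrees at most $q$'' misreads the definition: $q$-domination bounds only the main-variable degree of one selected polynomial per variable, so the degrees of the remaining generators (and of the $f_i$ in their lower variables) are controlled only after this reduction.

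A secondary imprecision concerns the irreducible (and squarefree) refinement: factoring the polynomials of a triangular set ``over $\K$'' is not enough, since $t_i$ must be split modulo the lower part of the chain, i.e.\ over a tower of algebraic extensions (to split $x_0^2-x_1$ against $x_1^2-2$ one factors over $\K[x_1]/\ideal{x_1^2-2}$, not over $\K$). The paper sidesteps this by first computing a rational univariate representation in $D^{O(1)}$ and then factoring a single univariate polynomial over $\K$, which is also why it needs $\K$ to contain sufficiently many elements; your outline would need either this detour or an explicit complexity citation for factorization over extension towers.
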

\begin{proof}
  See \Cref{s:zerodimproofs}.
\end{proof}

The above proposition gives us the cost of the triangulation operations.
However, we need to ensure that these operations are indeed performed on a $q$-dominated ideal, as shown next.

\begin{lemma}\label{thm:maximalclique2}
  Let $F\subset \K[X]$ be chordally $q$-dominated.
  Then in \Cref{alg:chordtriang} any triangulation operation is performed on a $q$-dominated ideal.
\end{lemma}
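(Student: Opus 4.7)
My plan is to first strengthen the chordally $q$-dominated hypothesis so that it applies to every clique of $G$, not just to the maximal ones, and then track how the contents of each rank-$l$ node evolve during Algorithm~\ref{alg:chordtriang}.

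I would start by proving the following claim: if $F$ is chordally $q$-dominated, then for every clique $X_l$ of $G$ (whether maximal or not), the set $F\cap \K[X_l]$ is $q$-dominated. Pick any maximal clique $X_{l'}$ of $G$ containing $X_l$ and any variable $x_i\in X_l\subseteq X_{l'}$. By hypothesis there is some $f\in F\cap \K[X_{l'}]$ with $\mvar(f)=x_i$, $\init(f)=1$, and $\deg(f,x_i)\leq q$. Every variable of $f$ lies in $X_{l'}$ and is $\leq x_i\leq x_l$; but any element of the clique $X_{l'}$ that is $\leq x_l$ either equals $x_l$ or is a strictly smaller neighbor of $x_l$ (using that $X_{l'}$ is a clique and $x_l\in X_{l'}$), and in either case it lies in $X_l$ by the definition of $X_l$ in~\eqref{eq:cliqueXl}. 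Therefore $f\in \K[X_l]$, so $f\in F\cap \K[X_l]$ supplies the required $q$-dominating polynomial for $x_i$.

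Next, I would verify by induction on the round index that every rank-$l$ node $F_l$ of the network contains $F\cap \K[X_l]$ at the instant \myfont{Triangulate} is invoked on it in line~\ref{line:triangulate}. The induced chordal network has only one rank-$l$ node, namely $F\cap \K[X_l]$ itself. Before round~$l$ begins, the only operation that creates additional rank-$l$ nodes is \myfont{Eliminate}, performed on a rank-$c$ node $T$ (with $c<l$) whose parent in the elimination tree is~$x_l$; this produces a new rank-$l$ node of the form $F_p' := F_p\cup \elim{l}{T}$, which by the induction hypothesis still contains $F\cap \K[X_l]$. Neither \myfont{MergeIn} nor \myfont{MergeOut} alters the polynomial content of a node, and no triangulation or elimination of a rank-$l$ node can occur until round~$l$ itself. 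Combined with the first claim, this shows that $F_l$ is $q$-dominated at the moment it is triangulated.

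The substantive step is the first one; its content is geometric, namely that a perfect elimination ordering forces $X_l$ to be ``saturated downward'' inside any larger clique, so no variable $\leq x_l$ of a maximal clique $X_{l'}\supseteq X_l$ can lie outside $X_l$. The remainder of the argument is mere bookkeeping because the algorithm's only effect on a rank-$l$ node prior to round~$l$ is to adjoin further polynomials via elimination from children, and $q$-domination is preserved under enlarging the generating set.
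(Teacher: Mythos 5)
Your proof is correct, but it takes a genuinely different route from the paper's. The paper proves this lemma ``analogously'' to \Cref{thm:maximalclique}: maximal cliques are handled by noting that every rank-$m$ node contains the initial content $F\cap\K[X_m]$, while non-maximal cliques $X_p$ are handled by an induction along the elimination tree, using the polynomials $\elim{p}{F_l'}$ that the \myfont{Eliminate} step pushes down from a child node with $X_l=X_p\cup\{x_l\}$ --- i.e., the $q$-domination of non-maximal nodes is inherited from polynomials \emph{created by the algorithm}. You instead prove the purely combinatorial strengthening that chordal $q$-domination already forces $F\cap\K[X_l]$ to be $q$-dominated for \emph{every} set $X_l$ of~\eqref{eq:cliqueXl}: the witness for $x_i$ in a maximal clique $X_{l'}\supseteq X_l$ has main variable $x_i$, so all its variables are elements of the clique $X_{l'}$ that are $\le x_l$, hence lie in $X_l$. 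After that, the algorithmic part reduces to the easy invariant that rank-$l$ nodes only ever grow (elimination adjoins $\elim{l}{T}$ to a copy, triangulation of other ranks and merging never shrink them), together with the fact that nodes stay inside $\K[X_l]$ (\Cref{thm:preservechordal}), which you should cite to make ``$q$-dominated relative to $X_l$'' literal. What each approach buys: your argument is self-contained and avoids both the induction through children and any reasoning about what \myfont{Triangulate} outputs (e.g., that triangular sets produced from a $q$-dominated node again have main degrees $\le q$), but it exploits the main-variable structure of $q$-domination in an essential way --- the analogous strengthening is false for the zero-dimensional condition of \Cref{thm:maximalclique} (take $F=\{x_0-x_1,\,x_0^2+x_1^2-1\}$ on a single edge: the maximal-clique ideal is zero-dimensional, yet $F\cap\K[x_1]=\emptyset$), which is precisely why the paper's uniform inductive argument routes through the eliminated polynomials instead.
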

\begin{proof}
  The proof is analogous to the one of \Cref{thm:maximalclique}.
\end{proof}

We are ready to estimate the complexity of chordal triangularization.
For the analysis we assume that the merge operation from line~\ref{line:mergeoutl} (resp.\ line~\ref{line:mergeoutp}) is performed simultaneously with the triangulation (resp.\ elimination) operations, i.e., as soon as we create a new node we compare it with the previous nodes of the same rank to check if it is repeated.

\begin{lemma}\label{thm:boundW2}
  Let $F\subset \K[X]$ be chordally $q$-dominated.
  Assume that all triangular decompositions computed in \Cref{alg:chordtriang} are irreducible.
  Then throughout the algorithm the width of the network is always bounded by $q^\kappa$, independent of the number of variables.
\end{lemma}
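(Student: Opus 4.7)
The plan is to propagate a variety-based upper bound on the number of nodes at each rank throughout the algorithm, combining the $q$-domination assumption, irreducibility of triangulations, and the invariant enforced by the simultaneous MergeOut operations.

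First I would establish the numerical inequality $|\V(F \cap \K[X_l])| \leq q^\kappa$ for every clique $X_l$. Since $F$ is chordally $q$-dominated, $F \cap \K[X_l]$ contains, for each $x_i \in X_l$, a monic polynomial of degree at most $q$ in $x_i$; hence its zero set in $\overline{\K}^{|X_l|}$ lies in a Cartesian product of $|X_l|$ sets of cardinality $q$. Using $|X_l| \leq \kappa$, this gives $|\V(F \cap \K[X_l])| \leq q^\kappa$. In particular, $F \cap \K[X_l]$ is zero-dimensional, so \Cref{thm:maximalclique2} applies and every triangulation in the algorithm acts on a $q$-dominated ideal.

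Next I would reuse the argument of \Cref{thm:boundW} to bound the rank-$l$ count at every moment. Immediately after the triangulation/MergeOut at line~\ref{line:mergeoutl}, the irreducibility hypothesis guarantees that each rank-$l$ node generates a maximal ideal of $\K[X_l]$ containing $F \cap \K[X_l]$, and the simultaneous merge keeps only one representative per distinct ideal. These ideals biject with distinct points of $\V(F \cap \K[X_l])$, so there are at most $q^\kappa$ of them. This bound persists for all later times, because the subsequent elimination only changes the content of each rank-$l$ node (from a triangular set $T$ to its largest-variable component $T \setminus \elim{p}{T}$) without creating new rank-$l$ nodes, and MergeIn at line~\ref{line:mergein} can only decrease the count further.

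The main obstacle is the intermediate state of rank-$p$ nodes (where $p$ is the parent of $l$) during round $l$, before rank $p$ has itself been triangulated, since there the nodes need not generate maximal ideals and the clean bijection with points of $\V(F \cap \K[X_p])$ does not yet apply. To handle this I would argue inductively over the children $l_{c_1}, l_{c_2}, \dots$ of $p$ already processed: each rank-$p$ node at any stage has the form $F_p \cup \bigcup_i \elim{p}{T_{c_i}}$ for a choice of rank-$l_{c_i}$ maximal ideal $T_{c_i}$, so its variety is a subset of $\V(F \cap \K[X_p])$; the simultaneous MergeOut at line~\ref{line:mergeoutp} collapses nodes with equal ideal, and since the eventual irreducible triangulation in round $p$ would split each intermediate ideal into a disjoint union of maximal ideals summing to at most $q^\kappa$ total, the count of distinct rank-$p$ nodes cannot exceed $q^\kappa$ at any point. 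Combining these bounds across all ranks gives the global width bound of $q^\kappa$ independent of $n$.
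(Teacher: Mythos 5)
Your overall strategy is the paper's: bound $|\V(F\cap\K[X_l])|\leq q^\kappa$ via $q$-domination, invoke the argument of \Cref{thm:boundW} for the rank just triangulated, and then worry about the rank-$p$ nodes created mid-round. The first two parts are fine (modulo a small imprecision: the dominating polynomial for $x_i$ is only monic \emph{in} $x_i$, with coefficients involving the smaller variables, so the zero set is not literally contained in a Cartesian product of $q$-element sets; one instead bounds $|\V(F\cap\K[X_l])|\leq q^{|X_l|}$ by extending partial solutions one variable at a time, each in at most $q$ ways).

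The genuine gap is in your count of the intermediate rank-$p$ nodes. From ``each such node has the form $F_p\cup\bigcup_i \elim{p}{T_{c_i}}$, so its variety lies in $\V(F\cap\K[X_p])$'' together with ``its eventual triangulation splits it into maximal ideals'' you conclude that the number of distinct rank-$p$ nodes is at most $q^\kappa$; but this does not follow. A finite set of at most $q^\kappa$ points supports up to $2^{q^\kappa}$ distinct radical ideals (and infinitely many ideals in general), and MergeOut only identifies nodes defining the \emph{same} ideal, so ``variety contained in $\V(F\cap\K[X_p])$'' gives no bound on the node count; likewise ``summing to at most $q^\kappa$ total'' silently assumes that the splits of different nodes share no maximal ideal, i.e.\ that distinct rank-$p$ nodes have pairwise disjoint varieties --- which is precisely the statement that needs proof. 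That disjointness is the paper's key step, proved by induction over elimination rounds: since each processed rank-$l$ node $T$ generates a maximal ideal, $\elim{p}{T}$ generates a maximal ideal of $\K[X_l\setminus\{x_l\}]$, so two new nodes $F_p\cup\elim{p}{T}$ and $F_p\cup\elim{p}{T'}$ hanging off the same old node $F_p$ either define the same ideal (hence are merged) or have disjoint varieties; combined with the inductive hypothesis that the old rank-$p$ nodes already had pairwise disjoint varieties, all rank-$p$ nodes after the round have pairwise disjoint varieties inside the finite set $\V(F\cap\K[X_p])$, and only then does the bound $q^\kappa$ follow. Your structural description of the intermediate nodes contains the ingredients for this argument, but as written the counting step is asserted rather than established.
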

\begin{proof}
  This is a consequence of \Cref{thm:boundW}.
  See \Cref{s:zerodimproofs}.
\end{proof}
\begin{remark}[Chordal network of linear size]\label{thm:linearnetwork}
  It follows from the lemma that for fixed~$q, \kappa$, any chordally $q$-dominated $F\subset\K[X]$ of treewidth $\kappa$ has a chordal network representation with $O(n)$ nodes.
\end{remark}

\begin{theorem}\label{thm:chordtrianglinear}
  Let $F\subset \K[X]$ be chordally $q$-dominated.
  The complexity of chordal triangularization is ${O}(nWq^{O(\kappa)})$, where $W$ is a bound on the width of the network throughout the algorithm.
  If all triangulation operations are irreducible, the complexity is $O(n\,q^{O(\kappa)})$.
\end{theorem}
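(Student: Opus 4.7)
The plan is to bound the cost of one iteration of the outer loop in \Cref{alg:chordtriang} and then multiply by the number of iterations $n$. For a fixed rank $l$, I would account for the four types of work performed: (i) the triangulation calls on each rank-$l$ node, (ii) the $\textsc{Eliminate}$ calls, (iii) the $\textsc{MergeOut}/\textsc{MergeIn}$ operations, and (iv) the bookkeeping for updating arcs. The goal is to show that each of these costs at most $W q^{O(\kappa)}$ per iteration, after which summing over $l = 0, \ldots, n-1$ gives the claimed bound $O(n W q^{O(\kappa)})$.

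First I would use \Cref{thm:maximalclique2} to assert the invariant that every node on which $\textsc{Triangulate}$ is called is $q$-dominated; since it lies in $\K[X_l]$, it involves at most $\kappa$ variables. Hence \Cref{thm:qdominatedconstant} applies and bounds the cost of each such call by $q^{O(\kappa)}$. By the definition of $W$, there are at most $W$ nodes of rank $l$, so the triangulation work per iteration is $W q^{O(\kappa)}$. The $\textsc{Eliminate}$ calls reduce to extracting $\elim{p}{T}$ and taking unions of polynomial sets already present in the network; the dominant contribution here is creating the new rank-$p$ nodes and duplicating outgoing arcs, which is again $W q^{O(\kappa)}$ per iteration because each rank-$p$ node has at most $W$ outgoing arcs and each polynomial has size $q^{O(\kappa)}$.

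The delicate step is bounding the merge cost. The plan is to maintain a canonical (reduced lex Gr\"obner basis) representation of each node so that deciding whether two rank-$l$ nodes generate the same ideal reduces to comparing canonical forms. Computing the canonical form of a triangular, $q$-dominated node of $\kappa$ variables costs $q^{O(\kappa)}$ by \Cref{thm:qdominatedconstant}; hashing this canonical form then allows each new node to be matched against the existing ones at amortized cost $q^{O(\kappa)}$ per new node rather than per pair. Since $W$ new nodes are created per round, the total merge cost per iteration is $W q^{O(\kappa)}$. (For $\textsc{MergeIn}$ one also needs to compare the sets of incoming arcs, but since the arcs are already anchored to at most $W$ rank-$(l-1)$-or-below nodes, this fits in the same budget.) This is the main obstacle in the analysis: without a canonical-form view, one would naively pay $W^2 q^{O(\kappa)}$ per round.

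Adding the three contributions gives cost $W q^{O(\kappa)}$ per iteration, and summing over $n$ iterations yields $O(n W q^{O(\kappa)})$, proving the first claim. For the second claim, I would simply invoke \Cref{thm:boundW2}, which in the irreducible regime bounds $W$ by $q^{\kappa}$ uniformly throughout the algorithm; substituting this into $O(n W q^{O(\kappa)})$ and absorbing the extra $q^{\kappa}$ factor into the exponent of $q^{O(\kappa)}$ yields the promised $O(n\, q^{O(\kappa)})$ bound.
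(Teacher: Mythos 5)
Your proposal is correct and follows essentially the same route as the paper: it invokes \Cref{thm:maximalclique2} together with \Cref{thm:qdominatedconstant} to charge $q^{O(\kappa)}$ per triangulation call, multiplies by the at most $W$ nodes per rank and $n$ ranks, treats eliminations and (hash-table based) merges as lower-order terms, and then applies \Cref{thm:boundW2} to get $W\leq q^{\kappa}$ in the irreducible case. The canonical-form discussion for the merge step is just a more explicit rendering of the paper's hash-table remark, not a different argument.
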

\begin{proof}
  From \Cref{thm:qdominatedconstant} and \Cref{thm:maximalclique2} we know that each triangulation operation takes $q^{O(\kappa)}$, and thus the cost of all triangulations is $O(nWq^{O(\kappa)})$.
  The cost of the elimination operations is negligible.
  As for the merging operations, we can efficiently verify if a new node is repeated by using a hash table.
  Thus, the cost of the merging operation is also negligible.
  Finally, if all triangulation operations are irreducible, then $W\leq q^k$ because of \Cref{thm:boundW2}.
\end{proof}

\begin{remark}[Beyond chordally zero-dimensional]\label{rem:beyondzerodim}
  We will later see that, after a suitable redefinition of the triangulation step, Algorithm~\ref{alg:chordtriang} can also be applied to arbitrary ideals.
  Nonetheless, the complexity bounds from above do depend on the special structure of the chordally zero-dimensional case.
  Indeed, solving polynomial equations of treewidth one is NP-hard~\cite[Ex~1.1]{Cifuentes2014}, and counting their number of solutions is $\sharp$P-hard (even in the \emph{generic} case for treewidth two~\cite[Prop~24]{Cifuentes2016}).
  As a consequence, chordal triangularization will not always run in polynomial time.
  When using \Cref{alg:chordtriang} in such hard instances we may end up with very high degree polynomials or with a very large number of nodes.
\end{remark}

\section{Computing with chordal networks}\label{s:membership}

Triangular decompositions are one of the most common tools in computational algebraic geometry.
The reason is that there are many good algorithms to compute them, and that they can be used to derive several properties of the underlying variety.
However, as seen in \Cref{exmp:triangcycle}, the size of the decomposition obtained might be extremely large (exponential) even for very simple cases.
Chordal networks can provide a compact representation for these large decompositions.
We will see how to effectively use the data structure of chordal networks to compute several properties of the variety.

Let $I=\ideal{F}$ be a zero-dimensional ideal.
We consider the following problems.
\begin{description}
  \item[Elimination] Describe the projection of $\V(I)$ onto the last $n-l$ coordinates.
  \item[Zero count] Determine the number of solutions, i.e., the cardinality of $\V(I)$.
  \item[Sampling] Sample random points from $\V(I)$ uniformly.
  \item[Radical membership] Determine if a polynomial $h\in\K[X]$ vanishes on $\V(I)$, or equivalently, determine if $h\in \sqrt{I}$.
\end{description}

In this section we will develop efficient algorithms for the above problems, given a squarefree chordal network $\mathcal{N}$ (with possibly exponentially many chains).
Recall that such network can be obtained as explained in \Cref{thm:radicalnetwork}.
We will see that the first three problems can be solved relatively easily.
The radical membership problem is more complicated, and most of this section will be dedicated to it.
We note that the algorithms for elimination and radical membership will naturally extend to the positive-dimensional case.

\subsection{Elimination}\label{s:elimination}
The elimination problem is particularly simple, thanks to the elimination property of lexicographic Gr\"obner bases.
For an arbitrary chordal network $\mathcal{N}$, let $\mathcal{N}_{\geq l}$ denote the subset of $\mathcal{N}$ consisting of nodes of rank~$k$ with $k\geq l$.
Then $\mathcal{N}_{\geq l}$ is a chordal network representation of the projection of $\V(I)$ onto the last $n-l$ coordinates.

\subsection{Counting solutions}\label{s:countsolutions}
We want to determine $|\V(\mathcal{N})|$ for a squarefree chordal network~$\mathcal{N}$.
Recall from \Cref{eq:degT} that $|\V(T)|=\deg(T)$ for a squarefree triangular set~$T$.
Therefore, we just need to compute the sum of $\deg(C)$ over all chains $C$ of the network.
We can do this efficiently via dynamic programming, as explained in the following example.

\begin{example}[Zero count]\label{exmp:countsolutions}
  Let us determine $|\V(\mathcal{N})|$ for the chordal network from \Cref{fig:triang10vars}, which corresponds to counting $4$-colorings for the blue/solid graph from \Cref{fig:graph10notchordal}.
  For a rank~$l$ node $f_l$ of the network, let its weight $w(f_l)$ be its degree in $x_l$.
  Then we just need to compute $\sum_{C}\prod_{f_l\in C}w(f_l)$ where the sum is over all chains of the network.
  We can do this efficiently by successively eliminating the nodes of the network.

  Let us first eliminate the nodes of rank $0$.
  Let $f_0^a,f_0^b$ be the two nodes of rank $0$, with weights $w(f_0^a)=3,w(f_0^b)=2$.
  Let $f_6^a,f_6^b,f_6^c$ be the nodes of rank $6$, with weights $w(f_6^a)=w(f_6^c)=1,w(f_6^b)=2$.
  Note that any chain containing $f_6^a$ must also contain $f_0^a$.
  Therefore, we can remove the arc $(f_0^a,f_6^a)$ and update the weight $w(f_6^a)=1\times 3$.
  Similarly, any chain containing $f_6^b$ (or $f_6^c$) must contain also $f_0^b$.
  So we may delete the arcs $(f_0^b,f_6^b)$ and $(f_0^b,f_6^c)$ and update the weights $w(f_6^b)=2\times 2$, $w(f_6^c)=1\times 2$.
  By doing this, we have disconnected, or eliminated, all nodes of rank $0$.
  Continuing this procedure, the final weights obtained for each rank are shown below. 
  The number of solutions is the last number computed: $10968$.
  \begin{gather*}
    \rank(0)\rightarrow [3, 2],\quad
    \rank(1)\rightarrow [3, 2],\quad
    \rank(2)\rightarrow [3, 2],\quad
    \rank(3)\rightarrow [3, 2, 4],\quad\\ 
    \rank(4)\rightarrow [3, 2, 4],\quad 
    \rank(5)\rightarrow [50, 25, 20, 20, 16],\quad 
    \rank(6)\rightarrow [3, 4, 2],\quad\\
    \rank(7)\rightarrow [264, 650],\quad
    \rank(8)\rightarrow [2742],\quad
    \rank(9)\rightarrow [10968].
  \end{gather*}
\end{example}

\begin{algorithm}[htb]
  \caption{Count solutions}
  \label{alg:degree}
  \begin{algorithmic}[1]
    \Require{Chordal network $\mathcal{N}$ (triangular, squarefree)}
    \Ensure{Cardinality of $\V(\mathcal{N})$}
    \Procedure{ZeroCount}{$\mathcal{N}$}
    \For {$f$ node of $\mathcal{N}$}
    \State $w(f):= (\mdeg(f)$ \algorithmicif\ $x_{\rank(f)}$ is a leaf \algorithmicelse\ $0)$
    \EndFor
    \For {$l = 0:n-1$}
    \For {$(f_l,f_p)$ arc of $\mathcal{N}$ with $\rank(f_l)=l$}
    \State $w(f_p):=w(f_p)+w(f_l)\mdeg(f_p)$
    \EndFor
    \EndFor
    \State \Return sum of $w(f_{n-1})$ over all nodes of rank $n-1$
    \EndProcedure
  \end{algorithmic}
\end{algorithm}

\Cref{alg:degree} generalizes the above example to arbitrary chordal networks.
The complexity is $O(nW^2)$, since we perform one operation for each arc of the network.

\subsection{Sampling solutions}\label{s:sampling}
Uniformly sampling solutions can be done quite easily, by using the partial root counts computed in \Cref{alg:degree}.
Instead of giving a formal description we simply illustrate the procedure with an example.

\begin{example}[Sampling]\label{exmp:sampling}
  Consider again the chordal network of \Cref{fig:triang10vars}.
  We want to uniformly sample a point $(\hat{x}_0,\ldots,\hat{x}_9)$ from its variety, and we follow a bottom up strategy.
  Let us first choose the value $\hat{x}_9$.
  Since there is a unique rank~$9$ node $f_9=x_9^4-1$, then $\hat{x}_9$ must be one of its four roots.
  Note that each of those roots extend to $2742$ solutions (a fourth of the total number of solutions).
  Therefore, $\hat{x}_9$ should be equally likely to be any of these roots.
  Given the value of $\hat{x}_9$, we can now set $\hat{x}_8$ to be any of the three roots of $f_8=x_8^3+x_8^2x_9+x_8x_9^2+x_9^3$, each equally likely.
  Consider now the two rank~$7$ nodes $f_7^a,f_7^b$ of degrees $1$ and $2$.
  Note that $\hat{x}_7$ should be either a root of $f_7^a$ or a root of $f_7^b$ (for the given values of $\hat{x}_8, \hat{x}_9$).
  In order to sample uniformly, we need to know the number of solutions that each of those values extend to.
  From \Cref{exmp:countsolutions} we know that $f_7^a$ leads to $264$ points on the variety, and $f_7^b$ leads to $650$.
  Therefore, we can decide which of them to use based on those weights.
  Assuming we choose $f_7^b$, we can now set $\hat{x}_7$ to be any of its two roots, each equally likely.
  It is clear how to continue.
\end{example}

\subsection{Radical membership}\label{s:idealmembership}
In the radical ideal membership problem we want to check whether $h\in \K[X]$ vanishes on $\V(\mathcal{N})$.
This is equivalent to determining whether for each chain $C$ of $\mathcal{N}$ the normal form $h_C := h\bmod C$ is identically zero.
We will propose a Monte Carlo algorithm to efficiently test this property (without iterating over all chains) under certain structural assumptions on the polynomial~$h$.
Our main result is the following.

\begin{theorem}[Radical membership]\label{thm:memberlinear}
  Let~$F\subset\K[X]$ be chordally $q$-dominated.
  Let~$\mathcal{N}$ be a chordal network representation of~$F$ of width~$W$.
  Let~$h$ be a polynomial that decomposes as $h=\sum_l h_l$ with $h_l\subset \K[X_l]$.
  There is a Monte Carlo algorithm that determines whether $h$ vanishes on $\V({F})$ in $\widetilde{O}(n W q^{2\kappa} + n W^2 q^{\kappa})$.
  Here the notation~$\widetilde{O}$ ignores polynomial factors in the clique number~$\kappa$.
\end{theorem}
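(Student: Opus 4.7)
My plan is to design a Monte Carlo procedure that checks $h \bmod C = 0$ for every chain $C$ of $\mathcal{N}$ simultaneously, using a single bottom-up pass over the chordal network rather than enumerating the (possibly exponentially many) chains. I can assume $\mathcal{N}$ is squarefree by \Cref{thm:radicalnetwork}, so $h \in \sqrt{\ideal{F}}$ is equivalent to $h \bmod C = 0$ for every chain $C$. The decomposition $h = \sum_l h_l$ with $h_l \in \K[X_l]$, combined with the chordal clique structure, will let me localize the normal-form computation at each node.

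In the first phase I would compute, at each rank-$l$ node $f_l$, the canonical reduction $r_l := h_l \bmod f_l$, stored as a polynomial in $\K[X_l]$ of $x_l$-degree less than $\mdeg(f_l)$. Since $|X_l| \leq \kappa$ and all monomial degrees are bounded by $q$ under the $q$-dominated hypothesis, each $r_l$ has at most $q^\kappa$ monomials and is computable in $q^{O(\kappa)}$ field operations, yielding the first complexity term $O(nWq^{2\kappa})$ summed over the $O(nW)$ nodes.

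In the second phase I would draw a random $\alpha \in S^n$ with $|S|$ a polynomial in $nq$, and propagate along the network in order of decreasing rank (leaves of the elimination tree toward the root), maintaining at each node $f_l$ a scalar accumulator $\Phi_l(\alpha)$ that captures the evaluation at $\alpha$ of the partial normal form of $h$ along chain-prefixes ending at $f_l$. The per-arc update combines the locally stored value $r_l(\alpha_{X_l})$ with the accumulators flowing into $f_p$; by \Cref{thm:cliquecontainment} the containment $X_l \setminus \{x_l\} \subset X_p$ ensures the update only touches variables already resident at $f_p$ and costs $O(q^\kappa)$ per arc, giving $O(nW^2 q^\kappa)$ over all $O(nW^2)$ arcs. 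Correctness in the direction $h \in \sqrt{\ideal{F}}$ is immediate, since every chain produces an identically zero normal form. Conversely, if some chain $C^*$ has $(h \bmod C^*) \neq 0$, this normal form is a polynomial of total degree at most $nq$, and by Schwartz--Zippel its evaluation at $\alpha$ is nonzero with probability at least $1 - O(nq)/|S|$; $O(\log(1/\epsilon))$ independent trials drive the error below $\epsilon$, contributing the $\widetilde{O}$ logarithmic factor.

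The main obstacle is to specify the per-arc propagation precisely so that the terminal accumulators at the root nodes faithfully witness nonvanishing normal forms for every chain, without contributions from distinct chains cancelling when they share a node of $\mathcal{N}$. The chordal clique structure and the inclusion $X_l \setminus \{x_l\} \subset X_p$ from \Cref{thm:cliquecontainment} are the essential ingredients here: they guarantee that the contribution of $h_l$ to any chain's reduction is determined purely by the slice of the chain contained in $X_l$, so aggregated work at a node $f_l$ correctly represents that contribution for each chain passing through it.
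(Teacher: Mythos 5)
There is a genuine gap, and it sits exactly where you flag ``the main obstacle'': the semantics of your accumulator is never established, and in fact it cannot be made to work as described. Your phase~1 stores only the one-step reductions $r_l := h_l \bmod f_l$, and your phase~2 evaluates these at a full random point $\alpha$ and propagates \emph{scalars}. But the quantity you need to certify is the normal form of $h$ modulo the \emph{entire chain} $C$: after reducing $h_l$ by $f_l$, the result still involves the variables $X_l\setminus\{x_l\}$, and it must be further reduced by the downstream chain elements $f_p$ before any conclusion can be drawn. Your scheme skips all of those downstream reductions (a scalar cannot be reduced modulo $f_p$), so even the ``easy'' direction fails: when $h$ vanishes on $\V(F)$ each $h\bmod C$ is zero, but the local remainders $h_l\bmod f_l$ are generically nonzero, and so is your terminal accumulator. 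The paper's \Cref{alg:member} avoids this by propagating \emph{polynomials}: at a rank-$l$ node it reduces the incoming accumulated polynomial modulo $f_l$, substitutes only the single value $\hat x_l$, and passes the result (still a polynomial in the unprocessed variables, supported in $\K[X_p]$ by \Cref{thm:Gsparse}) down the network; the fact that substituting already-processed variables commutes with later Euclidean divisions is exactly \Cref{thm:euclidean}, and the resulting invariant is \Cref{thm:memberoutput}. This also explains the cost profile $\widetilde{O}(nWq^{2\kappa}+nW^2q^{\kappa})$: a polynomial division at each node and a polynomial addition on each arc, not a precomputation plus scalar updates.

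The second missing ingredient is the anti-cancellation mechanism, which you defer to the chordal structure; chordality (\Cref{thm:cliquecontainment}) only guarantees that supports stay inside cliques, and does nothing to prevent the contributions of distinct chains through a shared node from cancelling, which would produce false ``True'' answers. The paper handles this by attaching an independent random scalar $r_{lp}$ to every arc, so that the terminal value is an evaluation of $H(X,R)=\sum_C r_C(R)\,h_C(X)$, a nonzero polynomial of degree at most $nq$ whenever some $h_C\neq 0$; Schwartz--Zippel is then applied jointly in the evaluation points and the arc randomness. For the multi-path decomposition $h=\sum_l h_l$ one additionally needs the affine normalization $\sum_l r_{lp}=1$ so that the pieces $h_l$ injected at different ranks are scaled consistently (\Cref{thm:memberoutput2}); this normalization is easy to overlook but the identity behind \Cref{thm:memberoutput2} fails without it. So your overall strategy (single bottom-up randomized pass, squarefreeness via \Cref{thm:radicalnetwork}, Schwartz--Zippel, the stated per-node and per-arc cost accounting) is aligned with the paper, but the two components that make the algorithm correct --- carrying symbolic polynomials with one-variable-at-a-time substitution interleaved with reductions, and the random (affine) arc combinations --- are precisely the parts your proposal leaves unspecified or replaces with an unsound shortcut.
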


\begin{remark}
  The theorem is restricted to polynomials~$h$ that preserve some of the structure of the graph $G$, although they may involve all the variables in the ring $\K[X]$ (as opposed to the polynomials of the chordal network).
  The above mentioned Monte Carlo algorithm also works for other types of polynomials~$h$, but we do not prove complexity bounds for them.
\end{remark}

We point out that the above complexity result is far from trivial.
To justify this claim we can show that a simple variation of the radical membership problem is NP-hard under very mild assumptions.

\begin{example}[Zero divisor problem]
  Consider the zero divisor problem: determine if a polynomial $h\in\K[X]$ vanishes on at least one point of $\V(I)$.
  Also consider the NP-complete subset sum problem: decide if a set of integers $A=\{a_0,\ldots,a_{n-1}\}$ contains a subset whose sum is some given value~$S$.
  We can reduce it to the zero divisor problem by considering the ideal $I := \ideal{x_i(x_i-a_i): 0\leq i<n}$ and the polynomial $h := \sum_i x_i - S$.
  Note that the associated graph is the completely disconnected graph ($\kappa=1$) and thus its induced chordal network is already triangular ($W=1$, $q=2$).
\end{example}

We proceed to derive our radical ideal membership test.
We will initially assume that the variables of~$h$ are all contained in a path of the elimination tree.
Later, we will extend the algorithm to polynomials~$h$ that decompose into multiple paths of the elimination tree.
Finally, we will prove the complexity bound from \Cref{thm:memberlinear}.

\subsubsection*{Membership on a path}
Consider the case where the elimination tree of the graph~$G$ is a path (i.e., it has only one leaf).
Alternatively, we can assume that all the variables of~$h$ are contained in a path of the elimination tree.
As before, let $h_C:= h\bmod C$ denote the normal form with respect to chain~$C$.
Our radical ideal membership test is based on two simple ideas.
Firstly, we will check whether the polynomial $H(X):=\sum_C r_C \,h_C(X)$ is identically zero, for some random coefficients $r_C\in \K$.
Clearly, for sufficiently generic values of~$r_C$, the polynomial $H(X)$ will be zero if and only if each $h_C$ is zero.
The second idea is that we evaluate $H(X)$ in some random points $\hat{x}_i\in \K$.
Thus, we just need to check whether the scalar $H(\hat{X})\in \K$ is zero.
We illustrate how the algorithm works through the following example.

\begin{example}[Radical membership]\label{exmp:prem10vars}
  Consider again the chordal network of \Cref{fig:triang10vars}.
  Let us verify that the polynomial $h(x)$ from \Cref{fig:prem10vars} vanishes on its variety.
  We need to show that the reduction (normal form) of $h$ by each chain of the network is zero.
  As in the case of counting solutions, we will achieve this by successively eliminating nodes.
  Note that the variables of $h$ are $\{x_0,x_6,x_7,x_8,x_9\}$, which correspond to a path of the elimination tree.
  Thus, we restrict ourselves to the part of the network given by these variables, as shown in \Cref{fig:prem10vars}.

  \begin{figure}[!htb]
    \centering

    \begin{tikzpicture}[scale=1]
      \def\x{2.6}
      \def\y{1.1}
      \tikzstyle{myboxA} = [scale=0.65,draw=purple,fill=orange!8,thick,rectangle,rounded corners];
      \tikzstyle{myarrowA} = [scale=0.65,purple,thick,->];
      \tikzstyle{mygroup} = [draw=blue,fill = blue!15, thick, dotted, minimum width = 340, minimum height = 16];
      \tikzstyle{myboxA0} = [scale=0.65,fill=yellow!80,draw=blue,thick,densely dotted,circle];
      \tikzstyle{myarrowA0} = [blue,thick,dotted,->];
      \tikzstyle{mylabel} = [draw=none,pos=0.6];
      \node [scale=0.67,purple] (h) at (0*\x,-3.9*\y) {
      $h(x)=x_0^2x_6 - x_0^2x_7 - x_0x_6x_8 - x_0x_6x_9 - x_0x_7^2 - x_0x_8^2 - x_0x_8x_9 - x_0x_9^2 + x_6x_8x_9 - x_7^3 + x_8^2x_9 + x_8x_9^2$
      };
      \node [myboxA] (zero0) at (-1*\x,-5*\y) {$x_0^3 + x_0^2x_7 + x_0x_7^2 + x_7^3$};
      \node [myboxA] (zero1) at (1*\x,-5*\y) {$x_0^2 + x_0x_6 + x_0x_7 + x_6^2 + x_6x_7 + x_7^2$};
      \node [myboxA](six0) at (-1.5*\x,-6*\y) {$x_6 - x_7$};
      \node [myboxA](six1) at (0*\x,-6*\y) {$x_6^2 + x_6x_8 + x_6x_9 + x_8^2 + x_8x_9 + x_9^2$};
      \node [myboxA](six2) at (1.5*\x,-6*\y) {$x_6 + x_7 + x_8 + x_9$};
      \node [myboxA](seven0) at (-1*\x,-7*\y) {$x_7 - x_9$};
      \node [myboxA](seven1) at (1*\x,-7*\y) {$x_7^2 + x_7x_8 + x_7x_9 + x_8^2 + x_8x_9 + x_9^2$};
      \node [myboxA](eight0) at (0*\x,-8*\y) {$x_8^3 + x_8^2x_9 + x_8x_9^2 + x_9^3$};
      \node [myboxA](nine0) at (0*\x,-9*\y) {$x_9^4 - 1$};
      \node (aux0) at (-1*\x,-4.1*\y){};
      \node (aux1) at (1*\x,-4.1*\y){};
      \node (aux2) at (0*\x,-9.9*\y){};
      \begin{scope}[every node/.style={scale=.65}]
      \draw [myarrowA] (aux0) -- node [mylabel,left]{$h(x)$} ++ (zero0);
      \draw [myarrowA] (aux1) -- node [mylabel,left]{$h(x)$} ++ (zero1);
      \draw [myarrowA] (zero0)  -- node [mylabel,left=.10*\x]{$1\cdot h_{0}^a$} ++ (six0);
      \draw [myarrowA] (zero1)  -- node [mylabel,left=.20*\x]{$1\cdot h_{0}^b$} ++ (six1);
      \draw [myarrowA] (zero1)  -- node [mylabel,right=.05*\x]{$1\cdot h_{0}^b$} ++ (six2);
      \draw [myarrowA] (six0)  -- node [mylabel,right=.40*\x]{$\frac{1}{2}\cdot h_{6}^a$} ++ (seven1);
      \draw [myarrowA] (six1)  -- node [mylabel,left=.18*\x]{$1\cdot h_{6}^b$} ++ (seven0);
      \draw [myarrowA] (six2)  -- node [mylabel,right=.02*\x]{$\frac{1}{2}\cdot h_{6}^c$} ++ (seven1);
      \draw [myarrowA] (seven0) -- node [mylabel,left=.10*\x]{$\frac{1}{2}\cdot h_7^a$} ++ (eight0);
      \draw [myarrowA] (seven1)  -- node [mylabel,right=.08*\x]{$\frac{1}{2}\cdot h_{7}^b$} ++ (eight0);
      \draw [myarrowA] (eight0) -- node[mylabel,left]{$1\cdot h_8$} ++ (nine0);
      \draw [myarrowA] (nine0) -- node [mylabel,left]{$h_9=0$} ++ (aux2);
    \end{scope}
      \begin{scope}[on background layer]
      \node [mygroup] (zero) at (0*\x,-5*\y){}; 
      \node [mygroup] (six) at (0*\x,-6*\y){}; 
      \node [mygroup] (seven) at (0*\x,-7*\y){}; 
      \node [mygroup] (eight) at (0*\x,-8*\y){}; 
      \node [mygroup] (nine) at (0*\x,-9*\y){}; 
      \end{scope}
      \node [myboxA0,left=.2 of zero] (zeroN) {$0$};
      \node [myboxA0,left=.2 of six] (sixN) {$6$};
      \node [myboxA0,left=.2 of seven] (sevenN) {$7$};
      \node [myboxA0,left=.2 of eight] (eightN) {$8$};
      \node [myboxA0,left=.2 of nine] (nineN) {$9$};
      \draw [myarrowA0,out=-115,in=115] (zeroN) to (sixN);
      \draw [myarrowA0,bend right] (sixN) to (sevenN);
      \draw [myarrowA0,bend right] (sevenN) to (eightN);
      \draw [myarrowA0,bend right] (eightN) to (nineN);
    \end{tikzpicture}
    \vspace{-15pt}
    \caption{Sketch of the radical ideal membership test from \Cref{exmp:prem10vars}.}
    \label{fig:prem10vars}
  \end{figure}
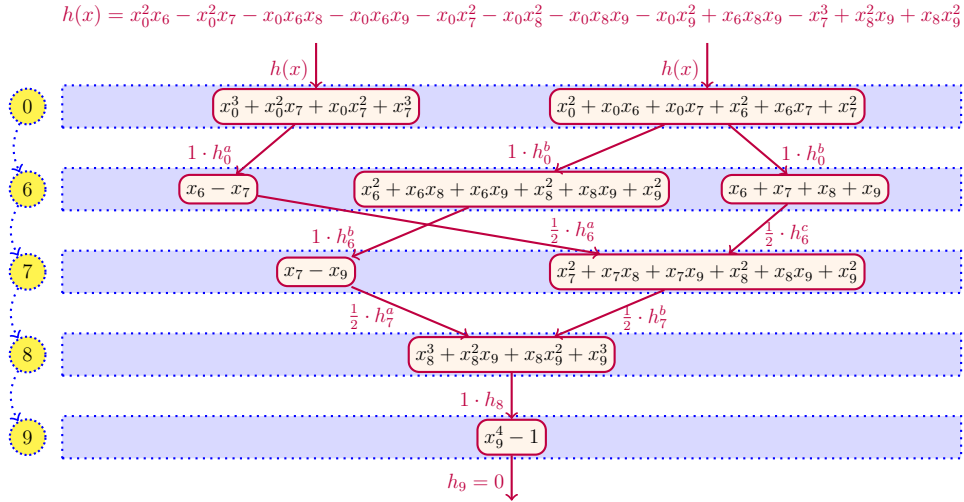

  Let us start by processing the two nodes of rank $0$.
  We have to compute the reduction of $h(x)$ modulo each of these nodes.
  Afterwards, we will substitute $x_0$ in these reduced polynomials with a random value on $\K$; in this case we choose $\hat{x}_0=1$.
  Let $h_0^a,h_0^b$ be the polynomials obtained after the reduction and substitution, as shown in \Cref{fig:prem10vars}.
  These two polynomials will be sent to the adjacent rank~$6$ nodes.

  Consider now a rank~$p$ node~$f_p$ that receives certain polynomials from its adjacent rank~$l$ nodes.
  We now perform a random linear combination of these incoming polynomials, then we reduce this linear combination modulo~$f_p$, and lastly we substitute $x_p$ with a random value~$\hat{x}_p$.
  For this example the linear combination will be an average, and the random points $\hat{x}_p$ will be one.
  \Cref{fig:prem10vars} indicates the polynomials received and output by each node.
  For instance, $h_8$ is obtained by reducing $\frac{1}{2}(h_7^a+h_7^b)$ modulo $f_8=x_8^3+x_8^2x_9+x_8x_9^2+x_9^3$ and then plugging in $\hat{x}_8 = 1$.
  The polynomials obtained with this procedure are shown below. 
  Note that the last polynomial computed is zero, agreeing with the fact that $h(x)$ vanishes on the variety.

  {\footnotesize
  \begin{align*}
    h_0^a &= x_6x_8x_9 - x_6x_8 - x_6x_9 + x_6 - x_7^3 - x_7^2 - x_7 + x_8^2x_9 - x_8^2 + x_8x_9^2 - x_8x_9 - x_9^2 \\
    h_0^b &= -x_6^3 - x_6^2 + x_6x_8x_9 - x_6x_8 - x_6x_9 + x_8^2x_9 - x_8^2 + x_8x_9^2 - x_8x_9 - x_9^2 \\
    h_6^a &= -x_7^3 - x_7^2 + x_7x_8x_9 - x_7x_8 - x_7x_9 + x_8^2x_9 - x_8^2 + x_8x_9^2 - x_8x_9 - x_9^2 \\
    h_6^b &= -x_8^3 - x_8^2x_9 - x_8x_9^2 - x_9^3 \\
    h_6^c &= x_7^3 + x_7^2(3x_8 + 3x_9 - 1) + x_7(3x_8^2 + 5x_8x_9 - x_8 + 3x_9^2 - x_9) + x_8^3 + 3x_8^2x_9  - x_8^2 + 3x_8x_9^2 - x_8x_9 + x_9^3 - x_9^2\\
    h_7^a &= h_7^b = -x_8^3 - x_8^2x_9 - x_8x_9^2 - x_9^3 \\
    h_8 &= h_9= 0
  \end{align*}
  }
\end{example}

\Cref{alg:member} generalizes the procedure from the above example.
Observe that each node $f_l$ of the network has an associated polynomial $H(f_l)$, which is first reduced modulo~$f_l$, then we substitute the value $\hat{x}_l$ and finally we pass this polynomial to the adjacent nodes.
Also note that that we choose one random scalar $\hat{x}_l$ for each variable, and one random scalar $r_{lp}$ for each arc of the network.

\begin{algorithm}[htb]
  \caption{Radical ideal membership}
  \label{alg:member}
  \begin{algorithmic}[1]
    \Require{Chordal network $\mathcal{N}$ (triangular, squarefree) and polynomial $h(x)$ such that all its variables are contained in a path of the elimination tree.}
    \Ensure{True, if $h$ vanishes on $\V(\mathcal{N})$. False, otherwise.}
    \Procedure{RIdealMembership}{$\mathcal{N},h$}
    \State $x_m:=\mvar(h)$ 
    \For {$f$ node of $\mathcal{N}$}
    \State\label{line:initialize} $H(f):= (h$ \algorithmicif\ $\rank(f)=m$ \algorithmicelse\ $0)$
    \EndFor
    \For {$l = 0:n-1$}
    \State $\hat{x}_{l}:=$ random scalar
    \For {$f_l$ node of $\mathcal{N}$ of rank $l$}
    \State $H(f_l):=H(f_l)\bmod f_l$
    \State plug in $\hat{x}_l$ in $H(f_l)$
    \For {$(f_l,f_p)$ arc of $\mathcal{N}$}
    \State $r_{lp}:=$ random scalar
    \State $H(f_p):=H(f_p)+r_{lp}\,H(f_l)$
    \EndFor
    \EndFor
    \EndFor
    \State \Return (True \algorithmicif\ {$H(f_{n-1})=0$ for all rank $n-1$ nodes} \algorithmicelse\ False)
    \EndProcedure
  \end{algorithmic}
\end{algorithm}

\subsubsection*{Correctness}
We proceed to show the correctness of \Cref{alg:member}.
We will need a preliminary lemma and some new notation.
For any~$l$, let~$X^l$ denote the subtree of the elimination tree consisting of $x_l$ and all its descendants (e.g., $X^{n-1}$ consists of all variables).
For a rank~$l$ node~$f_l$ of the network, we will say that an \emph{$f_l$-subchain} $C_l$ is the subset of a chain~$C$, with $f_l\in C$, restricted to nodes of rank~$i$ for some $x_i\in X^l$.

\begin{lemma}\label{thm:memberoutput}
  Let $\mathcal{N}$ be a chordal network whose elimination tree is a path, and let $h\in \K[X]$.
  Let $f_l$ be a rank~$l$ node of $\mathcal{N}$.
  In \Cref{alg:member}, the final value of~$H(f_l)$ is given by plugging in the values $\hat{x}_1\hat{x}_2,\ldots,\hat{x}_l$ in the polynomial
  \begin{align*}
    \sum_{C_l} r_{C_l} h \bmod C_l,
  \end{align*}
  where the sum is over all $f_l$-subchains $C_l$, and where $r_{C_l}$ denotes the product of the random scalars $r_{ij}$ along the subchain $C_l$.
\end{lemma}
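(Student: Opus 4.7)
My plan is to prove the lemma by induction on the rank $l$, from $l = m := \mvar(h)$ up to $l = n-1$. For $l < m$, a short downward induction shows that $H(f_l) \equiv 0$ at every stage: the initialization assigns $H(f_l) = 0$, and the only contributions it ever receives are from lower-rank nodes, which are themselves zero; correspondingly, no $f_l$-subchain yet meets the source of $h$, so both sides of the claim vanish.

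For the base case $l = m$, the only nonzero contribution to $H(f_m)$ is its initial value $h$, since every incoming arc comes from a node with $H \equiv 0$. After the reduction and substitution at rank $m$, one has $H(f_m) = (h \bmod f_m)|_{x_m = \hat{x}_m}$. This matches the claim: the unique relevant $f_m$-subchain is $\{f_m\}$, the empty product $r_{C_m}$ equals $1$, and $h \bmod C_m = h \bmod f_m$.

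For the inductive step $l > m$, the path structure of the elimination tree forces every arc entering $f_l$ to originate from a rank-$(l-1)$ node~$f_k$. Just before the rank-$l$ update, $H(f_l) = \sum_{(f_k, f_l)\,\text{arc}} r_{kl}\, H_{\mathrm{final}}(f_k)$, and the inductive hypothesis supplies a formula for each $H_{\mathrm{final}}(f_k)$. I then (i) pull the substitutions $x_i \mapsto \hat{x}_i$ for $i < l$ outside of the reduction modulo $f_l$, (ii) use $\K$-linearity to distribute the reduction over the sum, (iii) compose successive reductions via $(h \bmod C_k) \bmod f_l = h \bmod (C_k \cup \{f_l\})$, and (iv) rebundle each pair $(C_k, (f_k, f_l))$ into an $f_l$-subchain $C_l = C_k \cup \{f_l\}$ with $r_{C_l} = r_{C_k}\, r_{kl}$. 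This yields the claimed formula after evaluation at $\hat{x}_1, \ldots, \hat{x}_l$.

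The main obstacle is justifying steps (i) and (iii). Both rely on the support property $f_l \in \K[X_l] \subset \K[x_l, x_{l+1}, \ldots, x_{n-1}]$, which holds because the perfect elimination ordering places all of $X_l \setminus \{x_l\}$ among variables of index greater than~$l$. This ensures that $f_l$ involves none of the substituted variables $x_i$ with $i < l$, so substitution commutes with pseudo-division by $f_l$; and it ensures that the main variables of the polynomials in $C_k \cup \{f_l\}$ are all distinct, so their sequential normal-form reductions agree with a single reduction modulo the union (as is standard for triangular sets; cf.\ \Cref{defn:triangularzero}). Once these two commutation facts are in hand, the induction is a direct algebraic computation.
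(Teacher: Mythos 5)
Your argument is correct and takes essentially the same route as the paper's proof: an induction up the path that plugs the inductive hypothesis into the algorithm's update rule, uses linearity and the composition of normal forms to rebundle each pair (subchain, arc) into an $f_l$-subchain, and hinges on commuting the substitution of the earlier $\hat{x}_i$ with reduction modulo the monic node polynomial --- your step (i), justified by the support property $f_l\in\K[X_l]$, is precisely the content of the paper's \Cref{thm:euclidean}. The only organizational difference is that you start the induction at the rank $m$ of $\mvar(h)$ and handle the ranks below $m$ and the initialization explicitly, whereas the paper starts at $l=0$ and leaves that bookkeeping implicit; both treatments gloss over the same harmless imprecision about how the subchain portions (and their arc scalars) below rank $m$ enter the stated formula.
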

\begin{proof}
  See \Cref{s:membershipproofs}.
\end{proof}

\begin{theorem}
  Let $\mathcal{N }$ be a chordal network, triangular and squarefree, and let $q$ be a bound on the main degrees of its nodes.
  Let $h\in \K[X]$ be such that all its variables are contained in a path of the elimination tree.
  \Cref{alg:member} behaves as follows:
  \begin{itemize}[leftmargin=.5in]
    \item if $h$ vanishes on $\V(\mathcal{N})$, it always returns ``True''.
    \item if not, it returns ``False'' with probability at least $1/2$, assuming that the random scalars $r_{lp}, x_l$ are chosen (i.i.d.\ uniform) from some set $S\subset \K$ with $|S|\geq 2 n q$.
  \end{itemize}
\end{theorem}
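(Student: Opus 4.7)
\medskip
\noindent\textbf{Proof plan.}
The plan is to leverage \Cref{thm:memberoutput} in a direct way: at every rank~$n-1$ node~$f$, the scalar $H(f)$ returned by the algorithm is the evaluation at the random points $(\hat r, \hat x)$ of the polynomial
\[
  \tilde P_f(r,x) \;:=\; \sum_{C \ni f} r_C \,\bigl( h \bmod C\bigr),
\]
where the sum ranges over the chains of $\mathcal{N}$ passing through $f$, and $r_C$ is the product of the formal indeterminates $r_{ij}$ along the arcs of $C$. With this identification in hand, the two conclusions split cleanly into an algebraic step and a Schwartz--Zippel step.

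For the completeness direction, suppose $h$ vanishes on $\V(\mathcal{N})=\bigcup_C \V(C)$. For each chain $C$ this gives $h\in \mathbf{I}(\V(C))=\sqrt{\ideal{C}}$, and since $\mathcal{N}$ is squarefree the ideal $\ideal{C}$ is already radical. Because $C$ is a zero-dimensional triangular set, it is a lex Gr\"obner basis of $\ideal{C}$, so $h\bmod C=0$ as a polynomial. Every summand of every $\tilde P_f$ is therefore identically zero, hence $H(f)=0$ at each rank~$n-1$ node and the algorithm returns True.

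For the soundness direction, assume $h$ does not vanish on $\V(\mathcal{N})$. Then some chain~$C^*$ satisfies $h\bmod C^*\neq 0$ (again using radicality), and I let $f^*$ be its rank-$n-1$ node. The key algebraic observation is that distinct chains produce distinct monomials in the indeterminates $\{r_{ij}\}$: any two distinct chains through $f^*$ differ at some rank~$l<n-1$, which forces at least one of the arcs incident to that rank to differ, and hence the monomials $r_{C}$ and $r_{C'}$ are distinct since each $r_{ij}$ is its own indeterminate. Consequently, viewing $\tilde P_{f^*}$ as an element of $\K[x][r]$, its coefficient of $r_{C^*}$ equals $h\bmod C^*\neq 0$, so $\tilde P_{f^*}\in\K[r,x]$ is not the zero polynomial. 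A short calculation bounds its total degree: $\deg_r(r_C)\leq n-1$, and since $h\bmod C$ is reduced modulo a triangular set with main degrees bounded by $q$, it has degree strictly less than $q$ in each variable, giving $\deg_x(h\bmod C)\leq n(q-1)$. Hence $\deg(\tilde P_{f^*})\leq (n-1)+n(q-1)=nq-1$.

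Applying the Schwartz--Zippel lemma to $\tilde P_{f^*}$ with the random evaluation point $(\hat r,\hat x)\in S^N$, the probability of a spurious zero is at most
\[
  \frac{\deg(\tilde P_{f^*})}{|S|} \;\leq\; \frac{nq-1}{2nq} \;<\; \tfrac12,
\]
so $H(f^*)\neq 0$ with probability greater than $1/2$ and the algorithm correctly returns False. The main obstacle, already handled by \Cref{thm:memberoutput}, is identifying the algorithm's output with the evaluation of $\tilde P_f$; beyond that, the only delicate point is the distinct-monomial argument that guarantees $\tilde P_{f^*}\not\equiv 0$, which is what ultimately makes the randomization work.
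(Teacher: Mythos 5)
Your proposal is correct and follows essentially the same route as the paper: identify the algorithm's output with the evaluation of $\sum_C r_C\,(h\bmod C)$ via \Cref{thm:memberoutput}, use squarefreeness to get $h\bmod C=0$ for every chain in the vanishing case, and apply Schwartz--Zippel with the degree bound $nq$ otherwise. Your distinct-monomial argument simply makes explicit the step the paper states without proof (that the polynomial $H(X,R)$ is nonzero), so nothing differs in substance.
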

\begin{proof}
  Denoting $h_C:=h\bmod C$, \Cref{thm:memberoutput} tells us that \Cref{alg:member} checks whether $\sum_C r_C h_C(\hat{X}) = 0$, where $r_C$ is the product of all scalars $r_{lp}$ along the chain $C$.
  If $h$ vanishes on $\V(\mathcal{N})$, then each $h_C$ is zero and thus the algorithm returns ``True''.
  Assume now that $h$ does not vanish on $\V(\mathcal{N})$, and thus at least one $h_C$ is nonzero.
  Let $R$ be the set of all random scalars $r_{lp}$ used in the algorithm, which we now see as variables.
  Consider the polynomial
  \begin{align*}
    H(X,R):= \sum_C r_C(R)\, h_C(X),
  \end{align*}
  and note that it is nonzero.
  Observe that the degree of $H(X,R)$ is at most $nq$, since $\deg(r_C)\leq n$ and $\deg(h_C)\leq n (q-1)$.
  Using the Schwartz-Zippel lemma (see e.g.,~\cite[\S6.9]{VonZurGathen2013}), the probability that $H$ evaluates to zero for random values $r_{lp}, \hat{x}_l\in S$ is at most $n q/|S|\leq 1/2$.
\end{proof}

\begin{remark}
  The above theorem requires that $\K$ contains sufficiently many elements.
  If necessary, we may consider a field extension $\mathbb{L}\supset \K$ and perform all computations over $\mathbb{L}[X]$. 
\end{remark}

\subsubsection*{Combining multiple paths}
We now extend \Cref{alg:member} to work for other polynomials~$h$.
Specifically, we assume that the polynomial can be written as $h = \sum_i h_i$ where the variables of each $h_i$ belong to a path of the elimination tree.
We let $x_{m_i}:=\mvar(h_i)$ denote the main variables, and we can assume that they are all distinct.
We only need to make two simple modifications to \Cref{alg:member}.
\begin{enumerate}[leftmargin=.5in,label=(\roman*)]
  \item\label{line:initialize2} Previously, we initialized the algorithm with nonzero values in a single rank (see line~\ref{line:initialize}).
    We now initialize the algorithm in multiple ranks: $H(f_{m_i})=h_i$ if $\rank(f_{m_i}) = m_i$.
  \item\label{line:normalize} When combining the incoming polynomials to a node~$f_p$, we now take a random \emph{affine} combination (i.e., $\sum_{l} r_{lp}H(f_l)$ for some scalars~$r_{lp}$ such that $\sum_{l} r_{lp} = 1$).
    Note that in the example from \Cref{fig:prem10vars} we took the average of the incoming nodes, so this condition is satisfied.
\end{enumerate}
The first modification is quite natural given the decomposition of the polynomial~$h$.
The second item is less intuitive, but it is simply a normalization to ensure that all polynomials~$h_i$ are scaled in the same manner.
The correctness of this modified algorithm follows from the fact that \Cref{thm:memberoutput} remains valid, as shown next.

\begin{lemma}\label{thm:memberoutput2}
  Let $\mathcal{N }$ be a chordal network and let $h=\sum_i h_i\in \K[X]$ be such that the variables of each $h_i$ are contained in a path of the elimination tree.
  With the above modifications to \Cref{alg:member}, the final value of~$H(f_l)$ is as stated in \Cref{thm:memberoutput}.
\end{lemma}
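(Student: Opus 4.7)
The plan is to reduce the lemma to \Cref{thm:memberoutput} by exploiting linearity of the algorithm, combined with a sublemma asserting that off-path contributions cancel thanks to the affine-normalization modification. First I would note that once the random scalars $\hat{x}_l$ and $r_{lp}$ are fixed, every step of the modified \Cref{alg:member} (initialization, reduction modulo $f$, substitution $x_l\mapsto\hat{x}_l$, and affine combination) is $\K$-linear in the initial data $\{h_i\}$. Writing $H^{(i)}(f_l)$ for the final value of $H(f_l)$ when the algorithm is run with only $h_i$ initialized and all other $h_j=0$, linearity gives $H(f_l)=\sum_i H^{(i)}(f_l)$.

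For each $i$ separately, I would prove a tree-analogue of \Cref{thm:memberoutput}, namely
\[
H^{(i)}(f_l) \;=\; \Bigl(\sum_{C_l} r_{C_l}\,(h_i\bmod C_l)\Bigr)\Big|_{\hat{x}_0,\ldots,\hat{x}_l},
\]
by induction on rank as in the original path-case proof. The new ingredient is that an $f_l$-subchain $C_l$ now also contains nodes in subtrees of $X^l$ lying off the path $P_i$ from $x_{m_i}$ to the root. Since the variables of $h_i$ lie on $P_i$, the normal form $h_i\bmod C_l$ depends only on $C_l\cap P_i$, so I would group $f_l$-subchains by their restriction $p=C_l\cap P_i$ and factor
\[
\sum_{C_l} r_{C_l}\,(h_i\bmod C_l) \;=\; \sum_{p} r_p\,(h_i\bmod p)\,\cdot\,\sum_{C_l\text{ extending }p}\,\prod_{\text{arcs off }P_i} r_{ij}.
\]
The algorithm itself propagates $h_i$ only along $P_i$, since $H^{(i)}$ remains zero at every off-$P_i$ node by induction, and by the same inductive step as in the path case it produces exactly $\sum_p r_p\,(h_i\bmod p)$ at $f_l$ after the $\hat{x}$-substitutions. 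Matching the two expressions reduces to the sublemma that each inner off-path sum equals~$1$.

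This sublemma is precisely where the affine-normalization modification is essential. Interpreting the condition $\sum_l r_{lp}=1$ per child-rank (as the averaging in \Cref{fig:prem10vars} does), I would show by induction on subtree size that for any subtree $T$ of the elimination tree with prescribed root node $F_v$, $\sum_{\text{completions of }T}\,\prod_{\text{arcs in }T} r \;=\; 1$: the base case is a leaf (empty product), and the inductive step uses $\sum_{F_c:(F_c,F_v)\text{ arc}} r_{F_c,F_v}=1$ across each child rank of $F_v$. Applied to each subtree of $X^l$ hanging off $P_i\cap X^l$, this gives the desired factor of~$1$, completing the single-polynomial claim. Summing over $i$ and interchanging sums then yields $H(f_l) = \bigl(\sum_{C_l} r_{C_l}\,h\bmod C_l\bigr)\bigr|_{\hat{x}}$, as required. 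The main obstacle will be the off-path sublemma: without the affine condition, a generic linear combination would introduce uncontrolled scalar factors at every branching of the tree, and the multi-path identity would fail.
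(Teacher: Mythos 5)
Your proposal is correct and follows essentially the same route as the paper: decompose by linearity, handle each $h_i$ via the path-case induction of \Cref{thm:memberoutput}, and use the affine normalization to recombine, your explicit subtree sublemma (sums of products of normalized coefficients equal $1$) being exactly the fact the paper invokes as $\sum_{C_i} r_{C_i}=1$ when verifying~\eqref{eq:identityC}. The only difference is bookkeeping: you sum over full $f_l$-subchains and cancel the off-path weights, whereas the paper restricts attention to the ranks where the paths $X^l_{m_i}$ first meet and works with path-restricted subchains.
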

\begin{proof}
  See \Cref{s:membershipproofs}.
\end{proof}

\begin{remark}
  Note that any $h$ can be written as $h = \sum_{i=0}^{n-1}h_i$, where $h_i$ corresponds to the terms with main variable $x_i$.
  Even when the elimination tree is a path, it is usually more efficient to decompose it in this manner and use \Cref{alg:member} with the above modifications.
\end{remark}

\subsubsection*{Complexity}

We finally proceed to prove the complexity bound from \Cref{thm:memberlinear}.
We restrict ourselves to polynomials~$h$ that preserve the sparsity structure given by the chordal graph~$G$.
More precisely, we assume that the variables of each of the terms of~$h$ correspond to a clique of~$G$, or equivalently, that $h = \sum_l h_l$ for some $h_l\in \K[X_l]$.
Naturally, we will use \Cref{alg:member} with the two modifications from above.
The key idea to notice is that \Cref{alg:member} preserves chordality, as stated next.

\begin{lemma}\label{thm:Gsparse}
  Assume that in \Cref{alg:member} the initial values of $H(f_l)$ are such that $H(f_l)\subset \K[X_l]$ (where $\rank(f_l)=l$).
  Then the same condition is satisfied throughout the algorithm.
\end{lemma}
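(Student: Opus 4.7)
The plan is to proceed by induction on the outer loop variable $l$ in \Cref{alg:member}, showing that the invariant ``$H(f_k)\in\K[X_k]$ for every node $f_k$ of rank~$k$'' is maintained after each iteration. The base case is precisely the hypothesis of the lemma. So I only need to verify that the three operations applied to $H$ inside the $l$-th iteration preserve the invariant for every affected node.

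First, I would isolate the three places where $H$ is updated within the iteration on rank~$l$: (i) the reduction $H(f_l):=H(f_l)\bmod f_l$, (ii) the substitution $x_l\mapsto\hat{x}_l$ in $H(f_l)$, and (iii) the accumulation $H(f_p):=H(f_p)+r_{lp}\,H(f_l)$ along each outgoing arc. For (i), since $\mathcal{N}$ is a $G$-chordal network we have $f_l\in\K[X_l]$; combined with $H(f_l)\in\K[X_l]$ from the inductive hypothesis, the remainder after division stays in $\K[X_l]$. For (ii), the main variable of~$f_l$ is $x_l$, so after plugging in $\hat{x}_l$ the polynomial actually lies in $\K[X_l\setminus\{x_l\}]$, which is even stronger than needed.

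The only nontrivial step is (iii): I must show that the sum stays inside $\K[X_p]$, even though $H(f_l)$ a priori involves variables from $X_l$, not $X_p$. This is exactly where chordality enters. By \Cref{thm:cliquecontainment}, if $x_p$ is the parent of $x_l$ in the elimination tree of~$G$, then
\begin{equation*}
X_l\setminus\{x_l\}\subset X_p.
\end{equation*}
After step (ii), $H(f_l)\in\K[X_l\setminus\{x_l\}]\subset\K[X_p]$. Together with the inductive assumption $H(f_p)\in\K[X_p]$, the updated value $H(f_p)+r_{lp}\,H(f_l)$ lies in~$\K[X_p]$, as desired. Note that $H(f_p)$ may receive further contributions from other rank~$l$ children of~$x_p$, but each of them passes through the same argument, so the invariant is preserved after all arcs out of rank~$l$ are processed.

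I do not anticipate any genuine obstacle: the statement is structural, and the work is entirely carried by \Cref{thm:cliquecontainment} together with the fact that the perfect elimination ordering aligns the polynomial-ring variables with the cliques $X_l$. The only minor care needed is in the bookkeeping, i.e., making sure that when a node $f_p$ receives inputs from several children~$f_l$ of different ranks~$l<p$, the invariant for $f_p$ is re-established after each partial update rather than only at the end; this is immediate from the argument above applied one arc at a time.
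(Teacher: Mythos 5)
Your proof is correct and follows essentially the same route as the paper: the paper's argument likewise observes that each update has the form $H(f_p):=H(f_p)+r_{lp}\,\phi_l(\tilde h_l)$ with $\tilde h_l\in\K[X_l]$, and then invokes \Cref{thm:cliquecontainment} to get $\phi_l(\tilde h_l)\in\K[X_l\setminus\{x_l\}]\subset\K[X_p]$. Your version merely makes explicit the (easy) facts that reduction modulo $f_l\in\K[X_l]$ and substitution of $\hat x_l$ keep the polynomial in the right ring, which the paper leaves implicit.
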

\begin{proof}
  The update rule used in \Cref{alg:member} is of the form $H(f_p):= H(f_p) + r_{lp} \phi_l(\tilde{h}_{l})$ for some $\tilde{h}_l\in \K[X_l]$, where $\phi_l$ denotes the functional that plugs in $\hat{x}_l$.
  Using \Cref{thm:cliquecontainment}, we have $\phi_l(\tilde{h}_l)\subset \K[X_l\setminus\{x_l\}]\subset \K[X_p]$.
  The result follows.
\end{proof}

\begin{proof}[Proof of \Cref{thm:memberlinear}]
  We consider \Cref{alg:member} with the modifications~\ref{line:initialize2} and~\ref{line:normalize} from above.
  Note that the $q$-dominated condition allows us to bound the degrees of all polynomials computed in \Cref{alg:member}.
  Furthermore, since chordality is preserved (\Cref{thm:Gsparse}), then all polynomials will have at most $q^\kappa$ terms.
  The complexity of the algorithm is determined by the cost of polynomial divisions and polynomial additions.
  Polynomial addition takes linear time in the number of terms, and it is performed once for each arc of the network.
  Thus, their total cost is $O(nW^2q^\kappa)$.
  As for polynomial division, $h\bmod f$ can be obtained in $O(|h|\, |f|\log{|f|})$, where $|\cdot|$ denotes the number of terms~\cite{Monagan2011}.
  Their total cost is $\widetilde{O}(nW q^{2\kappa})$, since there is one operation per node of the network.
\end{proof}

\section{Monomial ideals}\label{s:monomial}

We already showed how to compute chordal network representations of some zero-dimensional ideals.
Before proceeding to the general case, we will consider the special class of monomial ideals.
Recall that an ideal is monomial if it is generated by monomials.
Monomial ideals might have positive-dimension, but their special structure makes their analysis particularly simple.
As in \Cref{exmp:triangnested}, we will see that any monomial ideal admits a compact chordal network representation.
We will also show how such chordal network can be effectively used to compute its dimension, its equidimensional components, and its irreducible components.
These methods will be later generalized to arbitrary polynomial ideals.

\subsection{Chordal triangularization}\label{s:chordmon}

\Cref{alg:chordtriang} will be exactly the same for monomial ideals as in the zero-dimensional case.
The only difference is that for the triangulation operations we need to specify the type of decomposition used, as explained now.

We will say that a set of monomials $T$ is \emph{triangular} if it consists of variables, i.e., $T=\{x_{i_1},\ldots,x_{i_m}\}$.
It is well known that a monomial ideal is prime if and only if it is generated by variables.
It is also known that the minimal primes of a monomial ideal are also monomial.
It follows that any monomial ideal $I$ \emph{decomposes} as
  $\V(I) = \bigcup_T \V(T),$
where the union is over some triangular monomial sets $T$.

By using the above decomposition in each triangulation operation, chordal triangularization can now be applied to monomial ideals, as established in the proposition below.
We point out that even though this decomposition seems quite different from the one of \Cref{s:triangularsets}, both are special instances of more general theory that will be discussed in \Cref{s:regularchains}.

\begin{proposition}
  Let~$F$ be a set of monomials supported on a chordal graph~$G$.
  \Cref{alg:chordtriang} computes a $G$-chordal network, whose chains give a triangular decomposition of~$F$.
\end{proposition}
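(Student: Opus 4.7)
The plan is to mirror the four-lemma structure used to prove \Cref{thm:chordtriang}, adapting each step from the zero-dimensional setting to the monomial setting. The key simplification is that every monomial ideal admits a decomposition $\V(I) = \bigcup_T \V(T)$ into prime monomial ideals generated by subsets of the variables (as recalled in the paragraph preceding the proposition), so the triangulation step is well-defined on any monomial input without any analogue of the chordally zero-dimensional hypothesis required in \Cref{thm:maximalclique}.

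First I would check preservation of the chordal structure, following \Cref{thm:preservechordal}. Since $F_l \subset \K[X_l]$ is monomial, its triangular prime factors involve only variables appearing in its generators, hence are subsets of $X_l$. For the elimination operation on a rank-$l$ node $T$ with parent $F_p$, \Cref{thm:cliquecontainment} gives $\elim{p}{T} \subset \K[X_l \setminus \{x_l\}] \subset \K[X_p]$, so the augmented parent node $F_p' := F_p \cup \elim{p}{T}$ remains inside $\K[X_p]$.

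Next I would verify that the output chains are triangular, following \Cref{thm:outputtriangular}. After a triangulation step, a rank-$l$ node is a variable subset $T \subset X_l$. In the subsequent elimination step, \Cref{thm:cliquecontainment} ensures that every $x_{i_j} \in T$ with $i_j \neq l$ lies in $X_p \subset \K[x_p, x_{p+1}, \ldots]$ and hence belongs to $\elim{p}{T}$; so the only generator that can remain at rank $l$ is $x_l$ itself, giving either $T \setminus \elim{p}{T} = \{x_l\}$ or $T \setminus \elim{p}{T} = \emptyset$ (recorded as $\{0\}$). Thus each rank-$l$ node of the output is a single polynomial whose main variable is $x_l$ (or is zero), matching the definition of a triangular chordal network.

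Preservation of the variety then follows \emph{verbatim} from \Cref{thm:preservevariety}: merges leave the set of chains unchanged; a triangulation step redistributes $\V(F_l) = \bigsqcup_T \V(T)$ among the replacement nodes; and an elimination step rests on the identity $\V(T \cup F_p) = \V(T_l \cup F_p')$ with $T_l = T \setminus \elim{p}{T}$ and $F_p' = F_p \cup \elim{p}{T}$. I do not expect a genuine obstacle in this proof: the only substantive monomial-specific fact---namely, the existence of a decomposition into prime monomial ideals generated by variables---is classical, and every other step is a direct transcription of the zero-dimensional argument.
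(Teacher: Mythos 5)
Your proposal is correct and follows essentially the same route as the paper, whose own proof is just a two-line remark: variety preservation is ``essentially the same'' as \Cref{thm:preservevariety}, and the output chains obviously consist of variables; you simply spell out the chordality-preservation and triangularity details (including the correct handling of the case $T\setminus\elim{p}{T}=\emptyset$, recorded as $0$). One small nit: in the monomial setting the decomposition $\V(F_l)=\bigcup_T \V(T)$ into varieties of variable subsets need not be disjoint, so the $\bigsqcup$ in your variety-preservation step should be a plain $\bigcup$ --- harmless, since intersection distributes over arbitrary unions and disjointness is not claimed in the proposition.
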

\begin{proof}
  Proving that the variety is preserved in the algorithm is essentially the same as for the chordally zero-dimensional case (\Cref{thm:preservevariety}).
  It is straightforward to see that the chains of the output are triangular (i.e., they consist of variables).
\end{proof}

\begin{example}\label{exmp:triangnested2}
  Consider the ideal $I = \ideal{x_0x_1, x_0x_2, x_0x_3, x_1x_2, x_1x_4, x_2x_5, x_3x_4, x_3x_5, x_4x_5}$.
  The result of chordal triangularization is shown to the left of \Cref{fig:triangnested2}.
  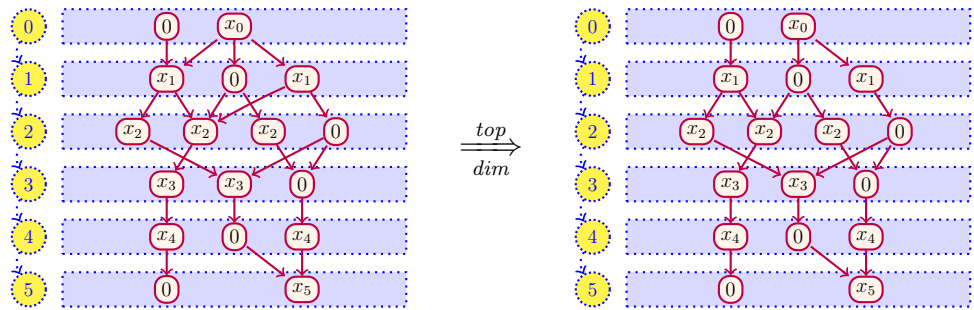
\begin{figure}[htb]
    \centering
      \def\y{.7}
      \tikzstyle{myboxA} = [scale=0.65,draw=purple,fill=orange!8,thick,rectangle,rounded corners]
      \tikzstyle{myarrowA} = [purple,thick,->]
      \tikzstyle{myboxA0} = [scale=0.62,fill=yellow!80,text=blue,draw=blue,thick,densely dotted,circle]
      \tikzstyle{myarrowA0} = [blue,thick,dotted,->,bend right]
      \tikzstyle{mygroup0} = [draw=blue,fill=blue!15, thick, dotted, minimum height=13]
      \def\x{.9}
      \tikzstyle{mygroup} = [mygroup0, minimum width = 130]

    \begin{tikzpicture}[scale=1]
      \node [myboxA] (zero0) at (-1*\x,0*\y) {$0$};
      \node [myboxA] (zero1) at (0*\x,0*\y) {$x_0$};
      \node [myboxA] (one0) at (-1*\x,-1*\y) {$x_1$};
      \node [myboxA] (one1) at (0*\x,-1*\y) {$0$};
      \node [myboxA] (one2) at (1*\x,-1*\y) {$x_1$};
      \node [myboxA] (two0) at (-1.5*\x,-2*\y) {$x_2$};
      \node [myboxA] (two1) at (-.5*\x,-2*\y) {$x_2$};
      \node [myboxA] (two2) at (.5*\x,-2*\y) {$x_2$};
      \node [myboxA] (two3) at (1.5*\x,-2*\y) {$0$};
      \node [myboxA] (three0) at (0*\x,-3*\y) {$x_3$};
      \node [myboxA] (three1) at (-1*\x,-3*\y) {$x_3$};
      \node [myboxA] (three2) at (1*\x,-3*\y) {$0$};
      \node [myboxA] (four0) at (0*\x,-4*\y) {$0$};
      \node [myboxA] (four1) at (-1*\x,-4*\y) {$x_4$};
      \node [myboxA] (four2) at (1*\x,-4*\y) {$x_4$};
      \node [myboxA] (five0) at (1*\x,-5*\y) {$x_5$};
      \node [myboxA] (five1) at (-1*\x,-5*\y) {$0$};
      \draw [myarrowA] (zero0) to (one0);
      \draw [myarrowA] (zero1) to (one0);
      \draw [myarrowA] (zero1) to (one1);
      \draw [myarrowA] (zero1) to (one2);
      \draw [myarrowA] (one0) to (two0);
      \draw [myarrowA] (one0) to (two1);
      \draw [myarrowA] (one1) to (two1);
      \draw [myarrowA] (one1) to (two2);
      \draw [myarrowA,out=-155,in=30] (one2) to (two1);
      \draw [myarrowA] (one2) to (two3);
      \draw [myarrowA] (two0) to (three0);
      \draw [myarrowA] (two1) to (three1);
      \draw [myarrowA] (two2) to (three2);
      \draw [myarrowA] (two3) to (three2);
      \draw [myarrowA] (two3) to (three0);
      \draw [myarrowA] (three0) to (four0);
      \draw [myarrowA] (three1) to (four1);
      \draw [myarrowA] (three2) to (four2);
      \draw [myarrowA] (four0) to (five0);
      \draw [myarrowA] (four1) to (five1);
      \draw [myarrowA] (four2) to (five0);
      \begin{scope}[on background layer]
      \node [mygroup] (zero) at (0*\x,0*\y){}; 
      \node [mygroup] (one) at (0*\x,-1*\y){}; 
      \node [mygroup] (two) at (0*\x,-2*\y){}; 
      \node [mygroup] (three) at (0*\x,-3*\y){}; 
      \node [mygroup] (four) at (0*\x,-4*\y){}; 
      \node [mygroup] (five) at (0*\x,-5*\y){}; 
      \end{scope}
      \node [myboxA0,left=.2 of zero] (zero00) {0};
      \node [myboxA0,left=.2 of one] (one00) {1};
      \node [myboxA0,left=.2 of two] (two00) {2};
      \node [myboxA0,left=.2 of three] (three00) {3};
      \node [myboxA0,left=.2 of four] (four00) {4};
      \node [myboxA0,left=.2 of five] (five00) {5};
      \draw [myarrowA0] (zero00) to (one00);
      \draw [myarrowA0] (one00) to (two00);
      \draw [myarrowA0] (two00) to (three00);
      \draw [myarrowA0] (three00) to (four00);
      \draw [myarrowA0] (four00) to (five00);
    \end{tikzpicture}
    $\quad$
    \raisebox{4.3\height}{$\xRightarrow[\mathit{\;dim\;}]{\mathit{\;top\;}}$}
    $\quad$
    \begin{tikzpicture}[scale=1]
      \node [myboxA] (zero0) at (-1*\x,0*\y) {$0$};
      \node [myboxA] (zero1) at (0*\x,0*\y) {$x_0$};
      \node [myboxA] (one0) at (-1*\x,-1*\y) {$x_1$};
      \node [myboxA] (one1) at (0*\x,-1*\y) {$0$};
      \node [myboxA] (one2) at (1*\x,-1*\y) {$x_1$};
      \node [myboxA] (two0) at (-1.5*\x,-2*\y) {$x_2$};
      \node [myboxA] (two1) at (-.5*\x,-2*\y) {$x_2$};
      \node [myboxA] (two2) at (.5*\x,-2*\y) {$x_2$};
      \node [myboxA] (two3) at (1.5*\x,-2*\y) {$0$};
      \node [myboxA] (three0) at (0*\x,-3*\y) {$x_3$};
      \node [myboxA] (three1) at (-1*\x,-3*\y) {$x_3$};
      \node [myboxA] (three2) at (1*\x,-3*\y) {$0$};
      \node [myboxA] (four0) at (0*\x,-4*\y) {$0$};
      \node [myboxA] (four1) at (-1*\x,-4*\y) {$x_4$};
      \node [myboxA] (four2) at (1*\x,-4*\y) {$x_4$};
      \node [myboxA] (five0) at (1*\x,-5*\y) {$x_5$};
      \node [myboxA] (five1) at (-1*\x,-5*\y) {$0$};
      \draw [myarrowA] (zero0) to (one0);
      \draw [myarrowA] (zero1) to (one1);
      \draw [myarrowA] (zero1) to (one2);
      \draw [myarrowA] (one0) to (two0);
      \draw [myarrowA] (one0) to (two1);
      \draw [myarrowA] (one1) to (two1);
      \draw [myarrowA] (one1) to (two2);
      \draw [myarrowA] (one2) to (two3);
      \draw [myarrowA] (two0) to (three0);
      \draw [myarrowA] (two1) to (three1);
      \draw [myarrowA] (two2) to (three2);
      \draw [myarrowA] (two3) to (three2);
      \draw [myarrowA] (two3) to (three0);
      \draw [myarrowA] (three0) to (four0);
      \draw [myarrowA] (three1) to (four1);
      \draw [myarrowA] (three2) to (four2);
      \draw [myarrowA] (four0) to (five0);
      \draw [myarrowA] (four1) to (five1);
      \draw [myarrowA] (four2) to (five0);
      \begin{scope}[on background layer]
      \node [mygroup] (zero) at (0*\x,0*\y){}; 
      \node [mygroup] (one) at (0*\x,-1*\y){}; 
      \node [mygroup] (two) at (0*\x,-2*\y){}; 
      \node [mygroup] (three) at (0*\x,-3*\y){}; 
      \node [mygroup] (four) at (0*\x,-4*\y){}; 
      \node [mygroup] (five) at (0*\x,-5*\y){}; 
      \end{scope}
      \node [myboxA0,left=.2 of zero] (zero00) {0};
      \node [myboxA0,left=.2 of one] (one00) {1};
      \node [myboxA0,left=.2 of two] (two00) {2};
      \node [myboxA0,left=.2 of three] (three00) {3};
      \node [myboxA0,left=.2 of four] (four00) {4};
      \node [myboxA0,left=.2 of five] (five00) {5};
      \draw [myarrowA0] (zero00) to (one00);
      \draw [myarrowA0] (one00) to (two00);
      \draw [myarrowA0] (two00) to (three00);
      \draw [myarrowA0] (three00) to (four00);
      \draw [myarrowA0] (four00) to (five00);
    \end{tikzpicture}
    \caption{Chordal network from \Cref{exmp:triangnested2}, and its top-dimensional part.}
    \label{fig:triangnested2}
  \end{figure}
\end{example}

As in the chordally zero-dimensional case, we can also prove that the complexity is linear in~$n$ when the treewidth is bounded.

\begin{theorem}\label{thm:monlinear}
  Let~$F$ be a set of monomials supported on a chordal graph~$G$ of clique number~$\kappa$.
  Then $F$ can be represented by a triangular chordal network with at most~$n\,2^\kappa$ nodes, which can be computed in time $O(n\,2^{O(\kappa)})$.
\end{theorem}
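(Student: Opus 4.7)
The correctness that Algorithm~\ref{alg:chordtriang} produces a triangular $G$-chordal network whose chains decompose $\V(F)$ follows from the preceding proposition, so my plan is to focus on the size and complexity bounds.

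First I would establish an invariant: throughout the algorithm, every node of the network is a set of monomials, and moreover every rank-$l$ node is generated by monomials in the variables $X_l$. For the induced network this is immediate from the hypothesis that $F$ is supported on $G$, and the invariant is preserved by the algorithm's operations. Indeed, triangulating a monomial ideal $\ideal{F_l}\subset\K[X_l]$ produces triangular monomial sets, which are just subsets of variables of $X_l$ (since monomial prime ideals are generated by variables). The elimination operation applied to a triangular set $T\subset X_l$ sends up $\elim{p}{T}=T\cap(X_l\setminus\{x_l\})$, which is a subset of $X_p$ by Lemma~\ref{thm:cliquecontainment}; unioning this with the parent's current content keeps the parent in $\K[X_p]$. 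Once these two stages have been applied, the surviving part $T\setminus\elim{p}{T}$ at rank $l$ is either $\{x_l\}$ or $\emptyset$.

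Second, I would bound the width at each rank by $2^\kappa$. The ideal of any rank-$l$ node is generated by a subset of $X_l$, so there are at most $2^{|X_l|}\le 2^\kappa$ possible ideals. To upgrade this to a bound on actual nodes, I would use irreducible triangular decompositions (for a monomial ideal, the minimal prime decomposition, which is unique and whose components are pairwise distinct variable sets), together with the \myfont{MergeOut} operation in the spirit of Lemma~\ref{thm:boundW}. Because the irreducible pieces at rank $l$ index distinct prime ideals on $X_l$, the combined effect of MergeOut and MergeIn is to collapse the network so that each rank-$l$ node is characterized by one of the $\le 2^\kappa$ subsets of $X_l$ describing the ``cumulative state'' of the chains passing through it. Summing across the $n$ ranks produces the asserted bound of $n\cdot 2^\kappa$ nodes.

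Third, for the complexity, computing the minimal primes of a monomial ideal on at most $\kappa$ variables takes $2^{O(\kappa)}$ time by standard algorithms (e.g., recursive splitting on a generator). Triangulation is the dominant cost at each step; elimination is free, and merging can be done efficiently with a hash table keyed on the (ideal, arc-set) signature. With $O(n\cdot 2^\kappa)$ nodes processed and $2^{O(\kappa)}$ work per node, the total running time is $O(n\cdot 2^{O(\kappa)})$. The main obstacle is the careful arc bookkeeping needed to justify the width bound: since \myfont{MergeOut} and \myfont{MergeIn} only collapse nodes sharing identical arc sets, one must show that the combinatorics of monomial triangulations does not produce two nodes of the same rank with the same cumulative state but distinct surviving arc patterns, a point that is intuitive from the ``local state = subset of $X_l$'' picture but requires a precise inductive argument through the algorithm.
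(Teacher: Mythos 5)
Your overall route is the same as the paper's: observe that after each triangulation round every rank-$l$ node is a set of variables contained in $X_l$, bound the width per rank by $2^{|X_l|}\le 2^\kappa$, note that each triangulation of a monomial ideal on at most $\kappa$ variables costs $2^{O(\kappa)}$ (enumeration over the candidate variable subsets suffices), and conclude a total of $O(n\,2^{O(\kappa)})$. So the plan is sound and matches the paper in structure and in all the cost estimates.

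However, there is a genuine gap at exactly the point you flag and then leave open: the bound of $2^\kappa$ nodes per rank is \emph{not} just a count of possible ideals; it needs the guarantee that two rank-$l$ nodes defining the same ideal actually get merged, and \myfont{MergeOut}/\myfont{MergeIn} only merge nodes with identical arc sets. You say this ``requires a precise inductive argument'' but do not give it, so the node bound --- the heart of the theorem --- is not established in your write-up. The missing ingredient is a structural invariant of \Cref{alg:chordtriang}, the one used in the proof of \Cref{thm:boundW}: at the moment \myfont{MergeOut}($l$) is applied in round $l$, \emph{every} rank-$l$ node has outgoing arcs to \emph{all} current rank-$p$ nodes ($x_p$ the parent of $x_l$), so all rank-$l$ nodes share the same outgoing arc set and \myfont{MergeOut} collapses them by ideal alone. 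This invariant holds by induction on the rounds: when \myfont{Eliminate} creates the new rank-$p$ nodes it replaces the old parent node and each new copy inherits all of the old node's remaining incoming arcs and all of its outgoing arcs (see the evolution in \Cref{fig:triang4vars}); when \myfont{Triangulate} splits a node, the pieces inherit its arcs; and the outgoing arcs of rank-$l$ nodes are only ever redirected during round $l$ itself. The same invariant, applied at rank $p$ after an elimination round, gives the width bound between rounds, since the rank-$p$ contents are then $F\cap\K[X_p]$ together with a subset of $X_p$. Once this is in place your count of $\le 2^{|X_l|}$ distinct ideals per rank finishes the argument. A minor additional remark: you do not need irreducible decompositions here --- every triangular monomial set is automatically prime (it is a set of variables), so the appeal to minimal primes buys nothing beyond what the variable-subset description already gives; the only genuine issue is the arc bookkeeping above.
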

\begin{proof}
  Note that after the $l$-th triangulation round we will have at most $2^\kappa$ rank~$l$ nodes, since the triangular monomial sets in $\K[X_l]$ are in bijection with the subsets of $X_l$.
  A similar argument proves that the width of the network is bounded by $2^\kappa$ after an elimination round, and thus throughout the algorithm.
  The cost of computing a triangular decomposition in $\K[X_l]$ is polynomial in $2^{|X_l|}$, since we can simply enumerate over all possible triangular monomial sets.
  Thus, the cost of all triangulation operations is $O(nW\,2^{O(\kappa)})=O(n\,2^{O(\kappa)})$.
  The cost of the elimination and merging operations is negligible.
\end{proof}

\subsection{Computing with chordal networks}\label{s:usingtriangularmon}

Let $\mathcal{N}$ be a chordal network representation of a monomial ideal~$I$.
We will show how to effectively use $\mathcal{N}$ to solve the following problems:
\begin{description}
  \item[Dimension] Determine the dimension of $I$.
  \item[Top-dimensional part] Describe the top-dimensional part of $\V(I)$.
  \item[Irreducible components] Determine the minimal primes of $I$.
\end{description}
The above problems can be shown to be hard in general by using the correspondence between minimal vertex covers of a graph and the irreducible components of its edge ideal (see \Cref{exmp:triangnested}).
We will see that, given the chordal network, the first two problems can be solved in linear time with a dynamic program.
The third one is much more complicated, since we need to enumerate over all chains of the network to verify if they are minimal.
In order to do this efficiently, we will need to address the following problems.
\begin{description}
  \item[Dimension count] Classify the number of chains $C$ of $\mathcal{N}$ according to its dimension.
  \item[Isolate dimension~$\boldsymbol{d}$] Enumerate all chains $C$ of $\mathcal{N}$ such that $\dim(\V(C))=d$.
\end{description}

We proceed to solve each of the problems from above.
To simplify the exposition, we will assume for this section that the elimination tree is a path, but it is not difficult to see that all these methods will work for arbitrary chordal networks.

\subsubsection*{Dimension}

Let us see that it is quite easy to compute the dimension of $\V(\mathcal{N})$.
Since the variety $\V(T)$ of a triangular monomial set is a linear space, its dimension is $\dim(\V(T))=n -|T|$.
Therefore, $\dim(\V(\mathcal{N}))= n-\min_{C}|C|$, where the minimum is taken over all chains of the network.
Note that we ignore the zero entries of $C$.
In particular, for the network in \Cref{fig:triangnested2} we have $\dim(\V(\mathcal{N}))=6-4=2$.

We reduced the problem to computing the smallest cardinality of a chain of $\mathcal{N}$.
This can be done using a simple dynamic program, which is quite similar to the one in \Cref{alg:degree}.
For each node $f_l$ we save the value $\ell(f_l)$ corresponding to the length of the shortest chain up to level $l$.
For an arc $(f_l,f_p)$ with $f_p\neq 0$, the update rule is simply 
$\ell(f_p):= \min(\ell(f_p),1+\ell(f_l))$.
It follows that we can compute in linear time the dimension of~$\V(\mathcal{N})$.

\subsubsection*{Top-dimensional part}
We can get a chordal network $\mathcal{N}_{\mathit{top}}$ describing its top-dimensional part by modifying the procedure that computes the dimension.
Indeed, assume that for some arc $(f_l,f_p)$ we have $\ell(f_p)<1+\ell(f_l)$ and thus the update $\ell(f_p):= \min(\ell(f_p),1+\ell(f_l))$ is not needed.
This means that the arc $(f_l,f_p)$ is unnecessary for the top-dimensional component.
By pruning the arcs of $\mathcal{N}$ in such manner we obtain the wanted network $\mathcal{N}_{\mathit{top}}$.

\begin{example}
  Let $\mathcal{N}$ be the network on the left of \Cref{fig:triangnested2}.
  Note that $\mathcal{N}$ has $9$ chains, two of them are $C_1=(x_1,x_2,x_3,x_5)$, $C_2=(x_0,x_1,x_2,x_3,x_5)$, of dimensions~$2$ and~$1$.
  By pruning some arcs, we obtain its highest dimensional part $\mathcal{N}_{\mathit{top}}$, shown to the right of \Cref{fig:triangnested2}.
  This network $\mathcal{N}_{\mathit{top}}$ only has $6$ chains; note that $C_2$ is not one of them.
  In this case neither of the chains removed was minimal (e.g., $C_2\supsetneq C_1$), so that $\V(\mathcal{N})=\V(\mathcal{N}_{\mathit{top}})$.
  Thus, both $\mathcal{N}$ and $\mathcal{N}_{\mathit{top}}$ are valid chordal network representations of the ideal from \Cref{exmp:triangnested2}, although the latter is preferred since all its chains are minimal.
  Similarly, the network from \Cref{fig:triangnested} was obtained by using chordal triangularization and then computing its highest dimensional part.
\end{example}

\subsubsection*{Irreducible components}
Chordal triangularization can also aid in computing the minimal primes of an ideal (geometrically, the irreducible components).
In the monomial case, any chain of~$\mathcal{N}$ defines a prime ideal, and thus we only need to determine which chains are minimal with respect to containment.
In some cases it is enough to prune certain arcs of the network (e.g., \Cref{fig:triangnested2}), but this is not always possible.

Unfortunately, we do not know a better procedure than simply iterating over all chains of the network checking for minimality.
Nonetheless, we can make this method much more effective by proceeding in order of decreasing dimension.
This simple procedure is particularly efficient when we are only interested in the minimal primes of high dimension, as will be seen in \Cref{s:birthdeath}.
In the remaining of the section we will explain how to enumerate the chains by decreasing dimension (this is precisely the dimension isolation problem).

\subsubsection*{Dimension count}
Classifying the number of chains according to its dimension can be done with a very similar dynamic program as for computing the dimension.
As discussed above, the dimension of a chain is simply given by its cardinality.
For a rank~$l$ node $f_l$ of the network and for any $0\leq k\leq l+1$, let $c_k(f_l)$ denote the number of chains of the network (up to level $l$) with cardinality exactly $k$.
Then for an arc $(f_l,f_p)$ with $f_p\neq 0$ the update rule is simply $c_k(f_p):= c_k(f_p)+c_{k-1}(f_l)$.

\subsubsection*{Dimension isolation}
For simplicity of exposition we only describe how to produce one chain~$C$ of dimension~$d$, but it is straightforward to then generate all of them.
As in \Cref{exmp:sampling}, we follow a bottom up strategy, successively adding nodes to the chain.
We first need to choose a rank~$n-1$ node $f_{n-1}$ that belongs to at least one chain of dimension~$d$.
Using the values $c_k(f_l)$ from above, we can choose any $f_{n-1}$ for which $c_{n-d}(f_{n-1})\geq 1$.
Assuming that we chose some $f_{n-1}\neq 0$, we now need to find an adjacent rank~$n-2$ node $f_{n-2}$ such that $c_{n-d-1}(f_{n-2})\geq 1$.
It is clear how to continue.

\section{The general case}\label{s:positivedim}

We finally proceed to compute chordal network representations of arbitrary polynomial ideals.
We will also see how the different chordal network algorithms developed earlier (e.g., radical ideal membership, isolating the top-dimensional component) have a natural extension to this general setting.

\subsection{Regular chains}\label{s:regularchains}

The theory of triangular sets for positive-dimensional varieties is more involved; we refer to~\cite{Hubert:2003aa, Wang2001} for an introduction.
We now present the concept of regular chains, which is at the center of this theory.

A set of polynomials $T\subset \K[X]\setminus \K$ is a \emph{triangular set} if its elements have distinct main variables.
Let $h$ be the product of the initials (\Cref{defn:triangularzero}) of the polynomials in $T$.
The geometric object associated to $T$ is the \emph{quasi-component}
\begin{align*}
  \W(T):=\V(T)\setminus \V(h)\subset \overline{\K}^n.
\end{align*}
The attached algebraic object is the \emph{saturated ideal}
\begin{align*}
  \sat(T):= \ideal{ T }:  h ^\infty 
  = \{f\in \K[X]: h^Nf\in \ideal{T} \mbox{ for some } N\in \N\}.
\end{align*}
Note that $\V(\sat(T))=\overline{\W(T)}$, where the closure is in the Zariski topology.

Polynomial pseudo-division is a basic operation in triangular sets.
Let $f,g$ be polynomials of degrees $d,e$ in $x:=\mvar(g)$.
The basic idea is to see $f,g$ as univariate polynomials in~$x$ (with coefficients in $\K[X\setminus\{x\}]$), and in order that we can always divide $f$ by $g$, we first multiply by some power of $\init(g)$.
Formally, the \emph{pseudo-remainder} of $f$ by $g$ is $\prem(f,g):=f$ if $d<e$, and otherwise $\prem(f,g):=\init(g)^{d-e+1}f \bmod(g)$.
Pseudo-division can be extended to triangular sets in the natural way. 
The \emph{pseudo-remainder} of $f$ by $T=\{t_1,\ldots,t_k\}$, where $\mvar(t_1)>\cdots>\mvar(t_k)$, is
\begin{align*}
  \prem(f,T) = \prem(\cdots(\prem(f,t_1)\cdots,t_k).
\end{align*}

\begin{definition}
  A \emph{regular chain} is a triangular set $T$ such that for any polynomial $f$ 
  \begin{align*}
    f\in \sat(T) &\iff \prem(f,T)=0.
  \end{align*}
\end{definition}
\begin{remark}
  Note that a zero-dimensional triangular set (\Cref{defn:triangularzero}) is a regular chain, since pseudo-reduction coincides with Gr\"obner bases reduction.
\end{remark}

Regular chains have very nice algorithmic properties.
In particular, they are always consistent (i.e., $\W(T)\neq 0$), and furthermore $\dim(\W(T))=n-|T|$.
\Cref{tab:comparison} summarizes some of these properties, comparing them with Gr\"obner bases.

\begin{table}[htbp]
  \centering
  \caption{Gr\"obner bases vs.\ regular chains}
    \begin{tabular}{ccc}
    \toprule
    & Gr\"obner basis ($\mathcal{G}$) & Regular chain ($T$) \\
    \midrule
    Geometric object & $\V(\mathcal{G})$ & $\W(T)$ \\
    Algebraic object & $\ideal{ \mathcal{G} }$ & $\sat(T)$ \\
    Feasible & if $1\notin \mathcal{G}$ & always \\
    Ideal membership & $\mathrm{Remainder} = 0$ & $\mathrm{PseudoRemainder} =0$ \\
    Dimension & from Hilbert series & $n - |T|$ \\
    Elimination ideal & $\mathcal{G}_{\mathrm{lex}}\cap\K[x_l,\ldots,x_{n-1}]$ & $T\cap\K[x_l,\ldots,x_{n-1}]$ \\
    \bottomrule
    \end{tabular}
  \label{tab:comparison}
\end{table}

\begin{definition}
  A \emph{triangular decomposition} of a polynomial set $F$ is a collection $\T$ of regular chains, such that
  $\V(F) = \bigcup_{T\in \T} \W(T).$
\end{definition}
\begin{remark}
  There is a weaker notion of decomposition that is commonly used: $\mathcal{T}$ is a \emph{Kalkbrener} triangular decomposition if
  $\V(F) = \bigcup_{T\in \T} \overline{\W(T)}.$
\end{remark}

\begin{example}\label{exmp:triangadjminors4}
  Let $F=\{x_0x_3 - x_1x_2, x_2x_5 - x_3x_4, x_4x_7 - x_5x_6\}$ consist of the adjacent minors of a $2\times 4$ matrix.
  It can be decomposed into $8$ regular chains:
  {\small
  \begin{gather*}
    ( x_0x_3 - x_1x_2, x_2x_5 - x_3x_4, x_4x_7 - x_5x_6 ),\,
    ( x_0x_3 - x_1x_2, x_4, x_5 ),\,
    ( x_2, x_3, x_4x_7 - x_5x_6 ),\\
    ( x_1, x_3, x_4, x_5 ),\,
    ( x_1, x_3, x_5, x_7 ),\,
    ( x_2, x_3, x_5, x_7 ),\,
    ( x_2, x_3, x_6, x_7 ),\,
    ( x_0x_3 - x_1x_2, x_2x_5 - x_3x_4, x_6, x_7 ).
  \end{gather*}
  }
  Note that the first three triangular sets (first line) have dimension $8-3=5$.
  Observe that the quasi-components $\W(T)$ of these three sets do not cover the points for which $x_3=x_7=0$, which is why we need the remaining five sets.
  However, these three triangular sets alone give a Kalkbrener decomposition of the variety.
\end{example}

\subsection{Regular systems}

In the study of triangular sets, it is useful to consider systems of polynomials containing both equations $\{f_i(x)=0\}_i$ and \emph{inequations} $\{h_j(x)\neq 0\}_j$.
Following the notation of~\cite{Wang2001}, we say that a \emph{polynomial system} $\mathfrak{F}=(F,H)$ is a pair of polynomial sets $F,H\subset \K[X]$, and its associated geometric object is the quasi-variety 
\begin{align*}
  \Z(\mathfrak{F}) := \{x\in \overline{\K}^n: f(x)=0 \mbox{ for } f\in F,\; h(x)\neq 0 \mbox{ for }h\in H\}.
\end{align*}
For instance, the quasi-component $\W(T)$ of a triangular set is the quasi-variety of the polynomial system $(T,\init(T))$, where $\init(T)$ is the set of initials of $T$.

For a polynomial system $\mathfrak{F}=(F,H)$ we denote by
$\elim{p}{\mathfrak{F}}$ the polynomial system $(\elim{p}{F},\elim{p}{H})$.
We also denote by $\mathfrak{F}_1+\mathfrak{F}_2$ the concatenation of two polynomial systems, i.e, $(F_1,H_1)+(F_2,H_2):=(F_1\cup F_2,H_1\cup H_2)$.

\begin{definition}\label{defn:regsys}
  A \emph{regular system} is a pair $\mathfrak{T}=(T,U)$ such that $T$ is triangular and for any $0\leq k<n$:
  \begin{enumerate}[leftmargin=.5in,label=(\roman*)]
    \item\label{line:regsys1} either $T^{\langle k\rangle}= \emptyset$ or $U^{\langle k\rangle}=\emptyset$, where the superscript~${}^{\langle k\rangle}$ denotes the polynomials with main variable~$x_k$.
    \item\label{line:regsys2} $\init(f)(\hat{x}^{k+1})\neq 0$ for any $f\in T^{\langle k\rangle}\cup U^{\langle k\rangle}$ and $\hat{x}^{k+1}\in \Z(\elim{k+1}{\mathfrak{T}})\subset\overline{\K}^{n-k-1}$.
  \end{enumerate}
  A regular system is \emph{squarefree} if the polynomials $f(x_k,\hat{x}^{k+1})\subset \K[x_k]$ are squarefree for any $f\in T^{\langle k\rangle}\cup U^{\langle k\rangle}$ and any $\hat{x}^{k+1}\in \Z(\elim{k+1}{\mathfrak{T}})\subset\overline{\K}^{n-k-1}$.
\end{definition}

For a regular system $(T,U)$ the set $T$ is a regular chain, and conversely, for a regular chain $T$ there is some $U$ such that $(T,U)$ is a regular system~\cite{Wang2000}.
Wang showed how to decompose any polynomial system in characteristic zero into (squarefree) regular systems~\cite{Wang2000,Wang2001}.

\begin{definition}
  A \emph{triangular decomposition} of a polynomial system $\mathfrak{F}$ is a collection $\T$ of regular systems, such that
    $\Z(\mathfrak{F}) = \bigcup_{\mathfrak{T}\in \T} \Z(\mathfrak{T}).$
\end{definition}

\begin{remark}[Binomial ideals]\label{remk:binomials}
  Consider a polynomial system $\mathfrak{F}=(F,U)$ such that $F$ consists of binomials (two terms) and $H=\{x_{i_1},\ldots,x_{i_m}\}$ consists of variables.
  We can decompose $\mathfrak{F}$ into regular systems $\mathfrak{T}=(T,U)$ that preserve the binomial structure.
  Assume first that $H=\{x_0,\ldots,x_{n-1}\}$ contains all variables.
  Equivalently, we are looking for the zero set of $F$ on the torus $(\overline{\K}\setminus\{0\})^n$.
  It is well known that~$F$ can be converted to (binomial) triangular form $T$ by computing the Hermite normal form of the matrix of exponents~\cite[\S3.2]{sturmsolving}, and the inequations~$U$ correspond to the non-pivot variables.
  For an arbitrary~$H$, we can enumerate over the choices of nonzero variables.
\end{remark}

\subsection{Chordal triangularization}

\Cref{alg:chordtriang} extends to the positive-dimensional case in the natural way, although with one important difference:
the nodes of the chordal network will be polynomial systems, i.e., pairs of polynomial sets.
We now describe the modifications of the main steps of the algorithm:
\begin{description}
  \item[Initialization] The nodes of the induced $G$-chordal network are now of the form $\mathfrak{F}_l = (F_l,H_l)$, where $F_l=F\cap\K[X_l]$ and $H_l = \emptyset$.
  \item[Triangulation] 
    For a node $\mathfrak{F}_l$ we decompose it into regular systems $\mathfrak{T}$ and we replace $\mathfrak{F}_l$ with a node for each of them.
  \item[Elimination] Let $\mathfrak{F}_l$ be the rank~$l$ node we will eliminate, and let $\mathfrak{F}_p$ be an adjacent rank~$p$ node.
    Then we create a rank~$p$ node $\mathfrak{F}_p':= \mathfrak{F}_p+ \elim{p}{\mathfrak{F}_l}$.
  \item[Termination] After all triangulation/elimination operations, we may remove the inequations from the nodes of the network, i.e., replace $\mathfrak{F}=(F,H)$ with $F$.
\end{description}

\begin{example}\label{exmp:adjminors4}
  \begin{figure}[htb]
    \centering

    \def\y{0.9}
    \tikzstyle{myboxA} = [scale=0.65,draw=purple,fill=orange!8,thick,rectangle,rounded corners]
    \tikzstyle{myarrowA} = [purple,thick,->]
    \tikzstyle{myboxA0} = [fill=yellow!80,text=blue,draw=blue,thick,densely dotted,circle]
    \tikzstyle{myarrowA0} = [blue,thick,dotted,->,bend right]
    \tikzstyle{mygroup0} = [draw=blue,fill=blue!15, thick, dotted, minimum height=16]
    \begin{tikzpicture}[scale=1]
      \def\x{1.0}
      \tikzstyle{mygroup} = [mygroup0, minimum width = 65]
      \node [myboxA] (zero0) at (0*\x,0*\y) {$x_0x_3-x_1x_2$};
      \node [myboxA](two0) at (0*\x,-1*\y) {$x_2x_5-x_3x_4$};
      \node [myboxA](four0) at (0*\x,-2*\y) {$x_4x_7-x_5x_6$};
      \draw [myarrowA] (zero0) to (two0);
      \draw [myarrowA] (two0) to (four0);
      \begin{scope}[on background layer]
      \node [mygroup] (zero) at (0*\x,0*\y){}; 
      \node [mygroup] (two) at (0*\x,-1*\y){}; 
      \node [mygroup] (four) at (0*\x,-2*\y){}; 
      \end{scope}
    \end{tikzpicture}
    \raisebox{4.6\height}{$\xRightarrow[\mathit{elim}]{\mathit{tria}}$}
    \begin{tikzpicture}[scale=1]
      \def\x{1.6}
      \tikzstyle{mygroup} = [mygroup0, minimum width = 130]
      \node [myboxA] (zero0) at (.04*\x,0*\y) {$x_0x_3-x_1x_2$};
      \node [myboxA] (zero1) at (-1*\x,0*\y) {$x_1$};
      \node [myboxA] (zero2) at (1*\x,0*\y) {$0$};
      \node [myboxA](two0) at (.04*\x,-1*\y) {$x_2x_5-x_3x_4/x_3$};
      \node [myboxA](two1) at (-1*\x,-1*\y) {$x_2x_5,x_3$};
      \node [myboxA](two2) at (1*\x,-1*\y) {$x_2,x_3$};
      \node [myboxA](four0) at (.04*\x,-2*\y) {$x_4x_7-x_5x_6$};
      \draw [myarrowA] (zero0) to (two0);
      \draw [myarrowA] (zero1) to (two1);
      \draw [myarrowA] (zero2) to (two2);
      \draw [myarrowA] (two0) to (four0);
      \draw [myarrowA] (two1) to (four0);
      \draw [myarrowA] (two2) to (four0);
      \begin{scope}[on background layer]
      \node [mygroup] (zero) at (0*\x,0*\y){}; 
      \node [mygroup] (two) at (0*\x,-1*\y){}; 
      \node [mygroup] (four) at (0*\x,-2*\y){}; 
      \end{scope}
    \end{tikzpicture}
    \raisebox{2.5\height}{$\xRightarrow[\mathit{elim}]{\mathit{tria}}$}
    \begin{tikzpicture}[scale=1]
      \def\x{2.4}
      \tikzstyle{mygroup} = [mygroup0, minimum width = 172]
      \node [myboxA] (zero0) at (.13*\x,0*\y) {$x_0x_3-x_1x_2$};
      \node [myboxA] (zero1) at (-.55*\x,0*\y) {$x_1$};
      \node [myboxA] (zero2) at (.9*\x,0*\y) {$0$};
      \node [myboxA](two0) at (.13*\x,-1*\y) {$x_2x_5-x_3x_4/x_3$};
      \node [myboxA](two1) at (-1*\x,-1*\y) {$0/x_3$};
      \node [myboxA](two2) at (.9*\x,-1*\y) {$x_2,x_3$};
      \node [myboxA](two3) at (-.55*\x,-1*\y) {$x_3$};
      \node [myboxA](four0) at (.13*\x,-2*\y) {$x_4x_7-x_5x_6/x_5$};
      \node [myboxA](four1) at (-1*\x,-2*\y) {$x_4,x_5$};
      \node [myboxA](four2) at (.9*\x,-2*\y) {$x_4x_7-x_5x_6$};
      \node [myboxA](four3) at (-.55*\x,-2*\y) {$x_4x_7,x_5$};
      \draw [myarrowA] (zero0) to (two0);
      \draw [myarrowA] (zero0) to (two1);
      \draw [myarrowA] (zero1.south east) to (two2.north west);
      \draw [myarrowA] (zero1) to (two3);
      \draw [myarrowA] (zero2) to (two2);
      \draw [myarrowA] (two0) to (four0);
      \draw [myarrowA] (two1) to (four1);
      \draw [myarrowA] (two2) to (four2);
      \draw [myarrowA] (two3) to (four3);
      \begin{scope}[on background layer]
      \node [mygroup] (zero) at (0*\x,0*\y){}; 
      \node [mygroup] (two) at (0*\x,-1*\y){}; 
      \node [mygroup] (four) at (0*\x,-2*\y){}; 
      \end{scope}
    \end{tikzpicture}
    \\\vspace{15pt}
    \raisebox{.3\height}{$\xRightarrow{\mathit{tria}}$}
    \begin{tikzpicture}[scale=1]
      \def\x{3.65}
      \tikzstyle{mygroup} = [mygroup0, minimum width = 235]
      \tikzstyle{myboxA} = [scale=0.63,draw=purple,fill=orange!8,thick,rectangle,rounded corners]
      \node [myboxA] (zero0) at (.03*\x,0*\y) {$x_0x_3-x_1x_2$};
      \node [myboxA] (zero1) at (-.56*\x,0*\y) {$x_1$};
      \node [myboxA] (zero2) at (.7*\x,0*\y) {$0$};
      \node [myboxA](two0) at (.03*\x,-1*\y) {$x_2x_5-x_3x_4/x_3$};
      \node [myboxA](two1) at (-1*\x,-1*\y) {$0/x_3$};
      \node [myboxA](two2) at (.7*\x,-1*\y) {$x_2,x_3$};
      \node [myboxA](two3) at (-.56*\x,-1*\y) {$x_3$};
      \node [myboxA](four0) at (-.56*\x,-2*\y) {$x_4x_7-x_5x_6/x_5x_7$};
      \node [myboxA](four1) at (.71*\x,-2*\y) {$x_6,x_7/x_5$};
      \node [myboxA](four2) at (-1*\x,-2*\y) {$x_4,x_5$};
      \node [myboxA](four3) at (.03*\x,-2*\y) {$x_4x_7-x_5x_6/x_7$};
      \node [myboxA](four4) at (.42*\x,-2*\y) {$x_5,x_7$};
      \node [myboxA](four5) at (1*\x,-2*\y) {$x_6,x_7$};
      \draw [myarrowA] (zero0) to (two0);
      \draw [myarrowA] (zero0) to (two1);
      \draw [myarrowA] (zero1.south east) to (two2.north west);
      \draw [myarrowA] (zero1) to (two3);
      \draw [myarrowA] (zero2) to (two2);
      \draw [myarrowA] (two0.west) to (four0);
      \draw [myarrowA] (two0.east) to (four1);
      \draw [myarrowA] (two1) to (four2);
      \draw [myarrowA] (two2) to (four3.north);
      \draw [myarrowA] (two2) to (four4);
      \draw [myarrowA] (two2) to (four5);
      \draw [myarrowA] (two3) to (four2);
      \draw [myarrowA] (two3.-25) to (four4.north);
      \begin{scope}[on background layer]
      \node [mygroup] (zero) at (0*\x,0*\y){}; 
      \node [mygroup] (two) at (0*\x,-1*\y){}; 
      \node [mygroup] (four) at (0*\x,-2*\y){}; 
      \end{scope}
    \end{tikzpicture}
    \raisebox{2.4\height}{$\xRightarrow{\mathit{term}}$}
    \begin{tikzpicture}[scale=1]
      \def\x{1.8}
      \tikzstyle{mygroup} = [mygroup0, minimum width = 130]
      \node [myboxA] (zero0) at (0.26*\x,0*\y) {$x_0x_3-x_1x_2$};
      \node [myboxA] (zero1) at (-.47*\x,0*\y) {$x_1$};
      \node [myboxA] (zero2) at (1.0*\x,0*\y) {$0$};
      \node [myboxA](two0) at (0.26*\x,-1*\y) {$x_2x_5-x_3x_4$};
      \node [myboxA](two1) at (-1*\x,-1*\y) {$0$};
      \node [myboxA](two2) at (1*\x,-1*\y) {$x_2,x_3$};
      \node [myboxA](two3) at (-.47*\x,-1*\y) {$x_3$};
      \node [myboxA](four0) at (0.26*\x,-2*\y) {$x_4x_7-x_5x_6$};
      \node [myboxA](four1) at (1*\x,-2*\y) {$x_6,x_7$};
      \node [myboxA](four2) at (-1.0*\x,-2*\y) {$x_4,x_5$};
      \node [myboxA](four4) at (-.47*\x,-2*\y) {$x_5,x_7$};
      \draw [myarrowA] (zero0) to (two0);
      \draw [myarrowA] (zero0) to (two1);
      \draw [myarrowA] (zero1) to (two3);
      \draw [myarrowA] (zero2) to (two2);
      \draw [myarrowA] (two0) to (four0);
      \draw [myarrowA] (two0) to (four1);
      \draw [myarrowA] (two1) to (four2);
      \draw [myarrowA] (two2.-120) to (four0);
      \draw [myarrowA] (two2) to (four1);
      \draw [myarrowA] (two2.-140) to (four4.35);
      \draw [myarrowA] (two3) to (four2);
      \draw [myarrowA] (two3) to (four4);
      \begin{scope}[on background layer]
      \node [mygroup] (zero) at (0*\x,0*\y){}; 
      \node [mygroup] (two) at (0*\x,-1*\y){}; 
      \node [mygroup] (four) at (0*\x,-2*\y){}; 
      \end{scope}
      \node [myboxA0,scale=0.60,left=.1 of zero] (zero00) {01};
      \node [myboxA0,scale=0.60,left=.1 of two] (two00) {23};
      \node [myboxA0,scale=0.55,left=.1 of four] (four00) {4-7};
      \draw [myarrowA0] (zero00) to (two00);
      \draw [myarrowA0] (two00) to (four00);
    \end{tikzpicture}
    \caption{Chordal triangularization from \Cref{exmp:adjminors4}.}
    \label{fig:adjminors4}
  \end{figure}
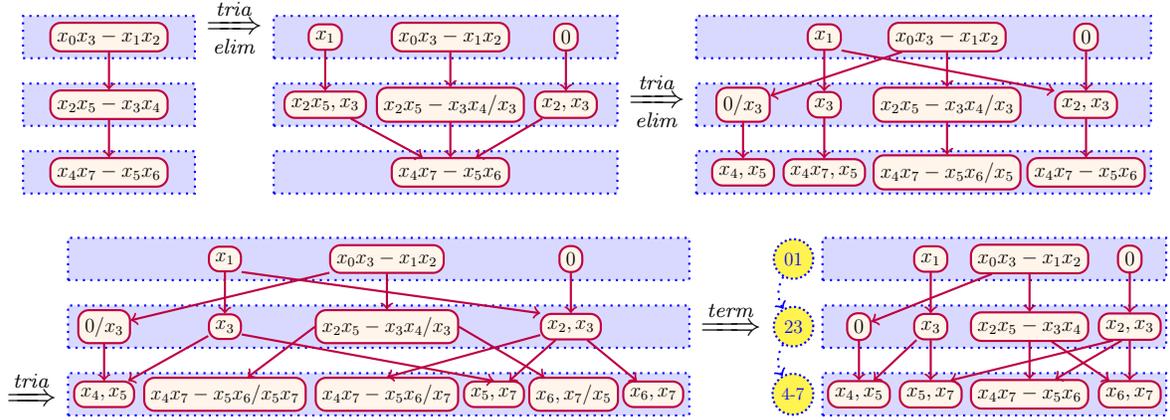

  \Cref{fig:adjminors4} illustrates the chordal triangularization algorithm for the polynomial set~$F$ from \Cref{exmp:triangadjminors4}.
  The nodes of the chordal network are polynomial systems $\mathfrak{F}=(F,H)$, which we represent in the figure as $(F/H)$.
  Note that in the termination step, after all triangulation/elimination operations, we remove the inequations to simplify the network.
  The final network has $8$ chains, which coincide with the triangular decomposition from \Cref{exmp:triangadjminors4}.
\end{example}

We can now compute chordal network representations of arbitrary systems.

\begin{theorem}\label{thm:chordtriangpos}
  Let $F\subset\K[X]$ be supported on a chordal graph $G$.
  With the above modifications, \Cref{alg:chordtriang} computes a $G$-chordal network $\mathcal{N}$, whose chains give a triangular decomposition of~$F$.
  Furthermore, this decomposition is squarefree if all triangulation operations are squarefree.
\end{theorem}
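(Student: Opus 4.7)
The plan is to mirror the four-step structure used in the proof of Theorem~\ref{thm:chordtriang}, adapting each supporting fact from the zero-dimensional setting to the setting of regular systems.

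First I would address well-definedness. Every triangulation step must produce a valid decomposition of the current node. By Wang's results~\cite{Wang2000,Wang2001}, in characteristic zero any polynomial system $\mathfrak{F}=(F,H)$ can be decomposed into a finite collection of (squarefree, if needed) regular systems $\mathfrak{T}_i$ with $\Z(\mathfrak{F})=\bigcup_i \Z(\mathfrak{T}_i)$. Unlike the chordally zero-dimensional case, no dimension-type hypothesis on the nodes is required, so the algorithm is well-defined for arbitrary input~$F$.

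Next I would extend Lemma~\ref{thm:preservechordal} to show that chordal structure is preserved. If $\mathfrak{F}_l=(F_l,H_l)$ with $F_l,H_l\subset\K[X_l]$, then any regular system $(T,U)$ appearing in a decomposition of $\mathfrak{F}_l$ also has $T,U\subset\K[X_l]$, so triangulation is safe. For elimination, $\elim{p}{\mathfrak{F}_l}=(\elim{p}{F_l},\elim{p}{H_l})$ lives in $\K[X_l\setminus\{x_l\}]$, which by Lemma~\ref{thm:cliquecontainment} is contained in $\K[X_p]$; hence $\mathfrak{F}_p':=\mathfrak{F}_p+\elim{p}{\mathfrak{F}_l}$ remains in $\K[X_p]$. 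The layered structure (arcs along the elimination tree) is clearly maintained. Then I would extend Lemma~\ref{thm:outputtriangular}: after the triangulation of a rank-$l$ node the equation part $T$ is a regular chain, so it contains at most one polynomial with main variable $x_l$; the elimination step retains exactly that polynomial at rank $l$. After the final termination step (dropping inequations), each chain becomes a regular chain with one equation per rank.

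The main work is adapting the variety-preservation argument (Lemma~\ref{thm:preservevariety}) to quasi-varieties $\Z$. Triangulation preserves $\Z$ of the whole network because
\[
\Z(C')\cap \Z(\mathfrak{F}) \;=\; \Z(C')\cap \bigl(\bigcup_i \Z(\mathfrak{T}_i)\bigr)\;=\;\bigcup_i \Z(C'+\mathfrak{T}_i),
\]
where $C'$ collects the other nodes of a chain through $\mathfrak{F}$. Elimination preserves $\Z$ along any chain because $\Z(\mathfrak{T}+\mathfrak{F}_p)=\Z(\mathfrak{T}_l+\elim{p}{\mathfrak{T}}+\mathfrak{F}_p)=\Z(\mathfrak{T}_l+\mathfrak{F}_p')$. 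Merge operations leave chains unchanged. Thus just before termination, $\bigcup_{C}\Z(\mathfrak{F}_{C_0}+\cdots+\mathfrak{F}_{C_{n-1}})=\V(F)$, and each chain's cumulative system is a regular system $(T_C,U_C)$ whose inequations $U_C$ contain the initials of $T_C$ (this is the part I expect to be the delicate bookkeeping: one must check that the inequations produced by Wang's decomposition plus those propagated through elimination always include $\init(T_C)$). Given this, $\Z(T_C,U_C)=\W(T_C)$, so the termination step (which replaces $(T_C,U_C)$ by $T_C$) preserves $\bigcup_C\W(T_C)=\V(F)$, yielding the desired triangular decomposition. The squarefree claim is then immediate: if every triangulation produces squarefree regular systems, each $T_C$ is a squarefree regular chain by construction.

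The main obstacle, as indicated above, is tracking that the inequations accumulated along a chain are exactly what is needed so that $\Z(T_C,U_C)=\W(T_C)$ at termination; once this invariant is formalized (and this uses the specific form of Wang's \textsc{RegSer} output together with the fact that elimination only adds polynomials of smaller rank to $U$), the rest of the argument is a direct generalization of the zero-dimensional proof.
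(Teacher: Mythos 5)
Your outline reproduces the parts of the argument that the paper itself dispatches by reference to the zero-dimensional case (preservation of chordality and of the quasi-variety $\Z$, exactly as in \Cref{thm:preservechordal,thm:preservevariety}, plus well-definedness via Wang's decomposition), but it leaves unproven precisely the step that constitutes the new content of the paper's proof: that every chain of the final network is a \emph{regular system}. You flag this as ``delicate bookkeeping'' and propose to handle it via the invariant that the accumulated inequations $U_C$ literally contain the initials of $T_C$. That invariant is not the right one: Wang's \textsc{RegSer} output does not in general list the initials among the inequations (the definition of a regular system only requires the semantic condition~\ref{line:regsys2} of \Cref{defn:regsys}, that $\init(f)$ be nonvanishing on $\Z(\elim{k+1}{\mathfrak{T}})$), and the elimination step only copies lower-rank polynomials downward, so there is no reason the literal containment should hold along a chain. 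Without establishing condition~\ref{line:regsys2} for whole chains you cannot conclude $\Z(T_C,U_C)\subseteq\W(T_C)$, and hence cannot justify the termination step or the claim that the chains form a triangular decomposition (nor the squarefree claim, which needs the same argument and is not ``by construction'').

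The paper closes this gap by an induction over the rounds of the algorithm, phrased in terms of $l$-subchains: after the $l$-th triangulation round every $l$-subchain is a regular system. The inductive step checks condition~\ref{line:regsys2} for a rank-$k$ polynomial $f$ of a $p$-subchain $\mathfrak{C}$ by splitting into two cases: if $k\geq p$ the condition is inherited from the freshly triangulated node $\mathfrak{T}_p$; if $k<p$ then $f$ lies in some $l$-subchain $\mathfrak{C}_l'$ (taken before the $l$-th elimination round), which is a regular system by induction, and the key observation is the monotonicity $\Z(\elim{k+1}{\mathfrak{C}})\subseteq\Z(\elim{k+1}{\mathfrak{C}_l'})$, since later operations only add constraints. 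This is the mechanism you would need to formalize in place of your inequation-containment invariant; the same monotonicity argument also yields the squarefree statement.
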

\begin{proof}
  See \Cref{s:positivedimproofs}.
\end{proof}

\begin{remark}
  We have noticed that chordal triangularization is quite efficient for binomial ideals.
  \Cref{remk:binomials} partly explains this observation.
  However, we do not yet know whether it will always run in polynomial time when the treewidth is bounded.
\end{remark}

\subsection{Computing with chordal networks}\label{s:usingtriangular}
We just showed how to compute chordal network representations of arbitrary polynomial systems.
We now explain how to extend the chordal network algorithms from \Cref{s:membership} and \Cref{s:usingtriangularmon} to the general case.

\subsubsection*{Elimination}
Since regular chains possess the same elimination property as lexicographic Gr\"obner bases, the approach from \Cref{s:elimination} works in the same way.

\subsubsection*{Radical ideal membership}
\Cref{alg:member} extends to the positive-dimensional case simply by replacing polynomial division with pseudo-division.
Note that we require a squarefree chordal network, which can be computed as explained in \Cref{thm:chordtriangpos}.

\subsubsection*{Dimension and equidimensional components}

The dimension of a regular chain $T$ is $n-\nobreak|T|$, which is the same as for the monomial case.
Thus, we can compute the dimension as in \Cref{s:usingtriangularmon}.
Similarly, we can compute a chordal network describing the highest dimensional component, and also isolate any given dimension of the network.

\begin{example}
  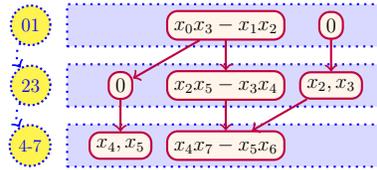
\begin{figure}[htb]
    \centering

    \begin{tikzpicture}[scale=1]
      \def\y{.8}
      \tikzstyle{myboxA} = [scale=0.65,draw=purple,fill=orange!8,thick,rectangle,rounded corners]
      \tikzstyle{myarrowA} = [purple,thick,->]
      \tikzstyle{myboxA0} = [fill=yellow!80,text=blue,draw=blue,thick,densely dotted,circle]
      \tikzstyle{myarrowA0} = [blue,thick,dotted,->,bend right]
      \tikzstyle{mygroup0} = [draw=blue,fill=blue!15, thick, dotted, minimum height=16]
      \def\x{1.4}
      \tikzstyle{mygroup} = [mygroup0, minimum width = 120]
      \node [myboxA] (zero0) at (0*\x,0*\y) {$x_0x_3-x_1x_2$};
      \node [myboxA] (zero2) at (1.0*\x,0*\y) {$0$};
      \node [myboxA](two0) at (0*\x,-1*\y) {$x_2x_5-x_3x_4$};
      \node [myboxA](two1) at (-1*\x,-1*\y) {$0$};
      \node [myboxA](two2) at (1*\x,-1*\y) {$x_2,x_3$};
      \node [myboxA](four0) at (0*\x,-2*\y) {$x_4x_7-x_5x_6$};
      \node [myboxA](four2) at (-1.0*\x,-2*\y) {$x_4,x_5$};
      \draw [myarrowA] (zero0) to (two0);
      \draw [myarrowA] (zero0) to (two1);
      \draw [myarrowA] (zero2) to (two2);
      \draw [myarrowA] (two0) to (four0);
      \draw [myarrowA] (two1) to (four2);
      \draw [myarrowA] (two2) to (four0);
      \begin{scope}[on background layer]
      \node [mygroup] (zero) at (0*\x,0*\y){}; 
      \node [mygroup] (two) at (0*\x,-1*\y){}; 
      \node [mygroup] (four) at (0*\x,-2*\y){}; 
      \end{scope}
      \node [myboxA0,scale=0.60,left=.2 of zero] (zero00) {01};
      \node [myboxA0,scale=0.60,left=.2 of two] (two00) {23};
      \node [myboxA0,scale=0.55,left=.2 of four] (four00) {4-7};
      \draw [myarrowA0] (zero00) to (two00);
      \draw [myarrowA0] (two00) to (four00);
    \end{tikzpicture}
    \caption{Top-dimensional part of the chordal network from \Cref{fig:adjminors4}.}
    \label{fig:adjminors4b}
  \end{figure}
  \Cref{fig:adjminors4b} shows the highest dimensional part of the chordal network from \Cref{fig:adjminors4}.
  This network only has $3$ chains, which give a Kalkbrener decomposition of the variety (see \Cref{exmp:triangadjminors4}).
  Likewise, the chordal network from \Cref{fig:triangadjminors} gives a Kalkbrener triangular decomposition of the ideal of adjacent minors of a $2\times 7$ matrix.
\end{example}

\subsubsection*{Irreducible components}
Unlike the monomial case, the chains of an arbitrary chordal network $\mathcal{N}$ will not necessarily define prime ideals (see \Cref{s:radicalirreducible}).
However, in some interesting cases it will be the case, thanks to the following well known property.

\begin{theorem}\label{thm:primeideal}
  Let $T = \{t_1,\ldots,t_k\}$ be a regular chain. 
  Assume that $\mdeg(t_i)=1$ for $1\leq i<k$ and that $t_k$ is an irreducible polynomial.
  Then $\sat(T)$ is a prime ideal.
\end{theorem}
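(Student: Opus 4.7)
The plan is to reduce primality of $\sat(T)$ to primality of $\langle t_k\rangle$ in a suitable polynomial ring, by using the linearity of $t_1,\ldots,t_{k-1}$ in their main variables to eliminate those variables after localizing at the product of the initials.

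First I set up notation. Relabel the main variables of $T$ in decreasing order as $y_1>\cdots>y_k$, and let $Z$ denote the remaining (free) variables. Writing $a_i:=\init(t_i)$, the hypothesis $\mdeg(t_i)=1$ gives $t_i=a_iy_i+b_i$ with $a_i,b_i$ polynomials in the variables smaller than $y_i$, for $1\le i<k$; and $t_k$ lies in $\K[y_k,Z_{<k}]$ where $Z_{<k}$ denotes the free variables smaller than $y_k$. Set $A:=\K[X]$ and $h:=\prod_{i=1}^{k}\init(t_i)$. By the very definition of the saturated ideal, $\sat(T)=\langle T\rangle : h^\infty$ is the contraction of $\langle T\rangle A_h$ along $A\hookrightarrow A_h$, so the natural map $A/\sat(T)\hookrightarrow A_h/\langle T\rangle A_h$ is injective; it therefore suffices to show that the right-hand ring is an integral domain.

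Next I carry out the elimination. Since each $a_i$ ($i<k$) is a unit in $A_h$ and $t_i$ is linear in $y_i$, back-substituting $y_i\mapsto -b_i/a_i$ successively for $i=1,2,\ldots,k-1$ yields, after clearing denominators, a ring isomorphism
\[
A_h\big/\langle t_1,\ldots,t_{k-1}\rangle A_h\;\cong\;\K[y_k,Z]\bigl[g^{-1}\bigr]
\]
for an explicit $g\in\K[y_k,Z]$ built from the images of the $a_i$'s under these substitutions. Since $t_k\in\K[y_k,Z]$ is unchanged by the substitutions, quotienting further by $t_k$ gives
\[
A_h\big/\langle T\rangle A_h\;\cong\;\bigl(\K[y_k,Z]/\langle t_k\rangle\bigr)\bigl[\bar g^{-1}\bigr].
\]
The ring $\K[y_k,Z]$ is a UFD and $t_k$ is irreducible in it (irreducibility is preserved under polynomial ring extensions by variables not appearing in $t_k$), so $\langle t_k\rangle$ is prime and $\K[y_k,Z]/\langle t_k\rangle$ is an integral domain. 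A localization of an integral domain is either an integral domain or the zero ring; the zero case is excluded because $1\notin\sat(T)$ (the regular-chain characterization gives $\prem(1,T)=1\neq 0$), so $A/\sat(T)\neq 0$ and the displayed localization is nonzero. Hence $A_h/\langle T\rangle A_h$ is an integral domain, and by injectivity so is $A/\sat(T)$; i.e., $\sat(T)$ is prime.

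The main obstacle is verifying the isomorphism in the elimination step: the successive back-substitutions produce rational expressions that must be checked to assemble into a single well-defined ring isomorphism with a localization of $\K[y_k,Z]$, and one must track how $h$ transforms into $g$. A cleaner alternative that sidesteps this bookkeeping is induction on $k$: the base $k=1$ is immediate, because $t_1$ irreducible makes $\langle t_1\rangle$ prime in the UFD $\K[X]$ and $\init(t_1)\notin\langle t_1\rangle$ for degree reasons, so $\sat(T)=\langle t_1\rangle$; the inductive step applies the hypothesis to the regular chain $\{t_2,\ldots,t_k\}$ in $\K[y_2,\ldots,y_k,Z]$ and then adjoins $t_1$, which is linear in $y_1$ with unit initial after localization.
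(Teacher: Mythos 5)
Your proof is correct, but it takes a genuinely different route from the paper: the paper does not argue the statement at all, it simply cites Wang's Theorem 6.2.14 as a known property of regular chains, whereas you give a self-contained argument. Your two key observations both check out: (i) $\sat(T)=\ideal{T}:h^\infty$ is exactly the contraction of $\ideal{T}A_h$ along $A\hookrightarrow A_h$, so primality may be tested after inverting $h=\prod_i\init(t_i)$; and (ii) since each $t_i$ ($i<k$) is linear in its main variable with initial invertible in $A_h$, eliminating those variables one at a time (largest first, using that $\init(t_i)$ and the tail $b_i$ only involve smaller variables, and that localization commutes with quotients) exhibits $A_h/\ideal{T}A_h$ as a localization of $\K[y_k,Z]/\ideal{t_k}$ at a single element, which is a domain or the zero ring; the zero ring is excluded because $\prem(1,T)=1\neq0$ forces $1\notin\sat(T)$, and this is the only place the regular-chain hypothesis is used. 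The bookkeeping you flag (tracking how $h$ transforms into $g$) does assemble correctly, and your inductive alternative also works, granting the standard fact that the lower portion $\{t_2,\ldots,t_k\}$ of a regular chain is again a regular chain. What your approach buys is transparency: the result is reduced to elementary commutative algebra (saturation as contraction, elimination of variables that occur linearly, and irreducible implies prime in a UFD), and it makes explicit that regularity enters only through consistency of $T$; what the paper's citation buys is brevity and access to Wang's more general machinery for saturated ideals of regular chains, of which this statement is a special case.
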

\begin{proof}
  This follows from~\cite[Thm 6.2.14]{Wang2001}.
\end{proof}

In particular, note that all chains of the chordal network from \Cref{fig:triangadjminors} are of this form.
We will see in \Cref{s:determinantal} that the same holds for other families of ideals.
Assume now that all chains of the network define a prime ideal.
A plausible strategy to compute all minimal primes (or only the high dimensional ones) is as follows:
\begin{enumerate}[leftmargin=.5in,label=(\roman*)]
  \item Iterate over all chains $T$ of the network in order of decreasing dimension.
  \item For a chain $C$, and a minimal prime~$I'$ previously found, determine whether $I'\subset \sat(C)$ by checking whether $\prem(f,C)=0$ for each generator~$f$ of~$I'$.
  \item If $I:=\sat(C)$ does not contain any previously found prime, compute generators for~$I$ by using Gr\"obner bases.
    We have a new minimal prime.
\end{enumerate}

\section{Examples}\label{s:applications}

We conclude this paper by exhibiting some examples of our methods.
We implemented our algorithms in \texttt{Sage}~\cite{sage} using Maple's library \texttt{Epsilon}~\cite{epsilon} for triangular decompositions, and \texttt{Singular}~\cite{singular} for Gr\"obner bases.
The experiments are performed on an i7 PC with 3.40GHz, 15.6 GB RAM, running Ubuntu 14.04. 

\subsection{Commuting birth and death ideal}\label{s:birthdeath}
We consider the binomial ideal $I^{n_1,\ldots,n_k}$ from~\cite{Evans2010aa}.
This ideal models a generalization of the one-dimensional birth-death Markov process to higher dimensional grids.
In~\cite{Evans2010aa} it is given a parametrization of its top-dimensional component, as well as the primary decomposition of some small cases.
In~\cite{Kahle2010} Kahle uses his Macaulay2 package \texttt{Binomials}, specialized in binomial ideals, to compute primary decompositions of larger examples.
We now show how our methods can greatly surpass Kahle's methods in computing the irreducible decomposition when the treewidth is small.

We focus on the case of a two dimensional grid:
\begin{align*}
  I^{n_1,n_2} =
  \langle &
  U_{i,j} R_{i,j+1} - R_{i,j} U_{i + 1 ,j},\;
  D_{i,j + 1} R_{i,j} - R_{i,j + 1} D_{i + 1 ,j + 1},\\
  &D_{i + 1 ,j + 1} L_{i + 1 ,j} - L_{i + 1 ,j + 1} D_{i,j + 1},\;
  U_{i + 1 ,j} L_{i + 1 ,j + 1} - L_{i + 1 ,j} U_{i,j}  
  \rangle_{\,0\leq i < n_1,\; 0\leq j<n_2}.
\end{align*}
We let the parameter $n_1$ take values between $1$ to $100$, while $n_2$ is either~$1$ or~$2$.
\Cref{tab:birthdeathchordtriang} shows the time used by \Cref{alg:chordtriang} for different values of $n_1,n_2$.
Observe that, for small values of $n_2$, our methods can handle very high values of $n_1$ thanks to our use of chordality.
For comparison, we note that even for the case $n_1=10, n_2=1$ \texttt{Singular}'s Gr\"obner basis algorithm (grevlex order) did not terminate within 20~hours of computation.
Similarly, neither \texttt{Epsilon}~\cite{epsilon} nor \texttt{RegularChains}~\cite{regularchains} were able to compute a triangular decomposition of $I^{10,1}$ within 20~hours. 

\begin{table}[htbp]
  \centering
  \caption{Time required by chordal triangularization on ideals $I^{n_1,n_2}$. No other software we tried~\cite{singular,epsilon,regularchains,Kahle2010}
can solve these problems.}
    \begin{tabular}{c|ccccc}
    \toprule
    $n_1$ & 20    & 40    & 60    & 80    & 100 \\
    \midrule
    \multicolumn{1}{c|}{$n_2=1$} & 0:00:45 & 0:02:16 & 0:04:03 & 0:06:28 & 0:09:13 \\
    \multicolumn{1}{c|}{$n_2=2$} & 0:36:07 & 1:59:24 & 3:30:33 & 6:15:25 & 9:00:52 \\
    \bottomrule
    \end{tabular}
  \label{tab:birthdeathchordtriang}
\end{table}

We now consider the computation of the irreducible components of the ideal $I^{n_1,1}$.
We follow the strategy described after \Cref{thm:primeideal}, using \texttt{Sage}'s default algorithm to compute saturations.
\Cref{tab:birthdeathprimes} compares this strategy (including the time of \Cref{alg:chordtriang}) with the algorithm from \texttt{Binomials}~\cite{Kahle2010}.
It can be seen that our method is more efficient.
In particular, for the ideal $I^{7,1}$ Kahle's algorithm did not finish within 60~hours of computation.

\begin{table}[htbp]
  \centering
  \caption{Irreducible components of the ideals $I^{n_1,1}$.}
    \begin{tabular}{cc|ccccccc}
    \toprule
    \multicolumn{2}{c|}{$n_1$} & 1     & 2     & 3     & 4     & 5     & 6     & 7 \\
    \midrule
    \multicolumn{2}{c|}{$\#$components} & 3     & 11    & 40    & 139   & 466   & 1528  & 4953 \\
    \multirow{2}[0]{*}{time} & ChordalNet & 0:00:00 & 0:00:01 & 0:00:04 & 0:00:13 & 0:02:01 & 0:37:35 & 12:22:19 \\
          & Binomials & 0:00:00 & 0:00:00 & 0:00:01 & 0:00:12 & 0:03:00 & 4:15:36 & - \\
    \bottomrule
    \end{tabular}
  \label{tab:birthdeathprimes}
\end{table}

Comparing \Cref{tab:birthdeathchordtriang} with \Cref{tab:birthdeathprimes} it is apparent that computing a triangular chordal network representation is considerably simpler than computing the irreducible components.
Nonetheless, if we are only interested in the high dimensional components the complexity can be significantly improved.
Indeed, in \Cref{tab:birthdeathtopprimes} we can see how we can very efficiently compute all components of the seven highest dimensions.

\begin{table}[htbp]
  \centering
  \caption{High dimensional irreducible components of the ideals $I^{n_1,1}$.}
    \begin{tabular}{c|ccccc|cccc}
    \multicolumn{1}{c}{} & \multicolumn{5}{c}{Highest 5 dimensions} & \multicolumn{4}{c}{Highest 7 dimensions} \\
    \toprule
    $n_1$ & 20    & 40    & 60    & 80    & 100   & 10    & 20    & 30    & 40 \\
    \midrule
    \multicolumn{1}{c|}{$\#$comps} & 404   & 684   & 964   & 1244  & 1524  & 2442  & 5372  & 8702  & 12432 \\
    \multicolumn{1}{c|}{time} & 0:01:07 & 0:04:54 & 0:15:12 & 0:41:52 & 1:34:05 & 0:05:02 & 0:41:41 & 3:03:29 & 9:53:09 \\
    \bottomrule
    \end{tabular}
  \label{tab:birthdeathtopprimes}
\end{table}

\subsection{Lattice walks}
We now show a simple application of our radical membership test.
We consider the lattice reachability problem from~\cite{Diaconis1998} (see also~\cite{Kosaraju1982}).
Given a set of vectors $\mathcal{B}\subset \mathbb{Z}^n$, construct a graph with vertex set $\N^n$ in which $u,v\in \N^n$ are adjacent if $u-v \in \pm \mathcal{B}$.
The problem is to decide whether two vectors $s,t\in \N^n$ are in the same connected component of the graph.
This problem is equivalent to an ideal membership problem for certain binomial ideal $I_{\mathcal{B}}$~\cite{Diaconis1998}.
Therefore, our radical membership test can be used to prove that $s,t$ are not in the same connected component (but it may fail to prove the converse).
We consider the following sample problem.

  \begin{figure}[htb]
    \centering
\begin{tikzpicture}[scale=1,->]
    \def\x{0.4}
    \def\y{\x*1.4}
    \def\r{\x*1.9}
    \tikzstyle{mybox} = [fill=orange!20,rounded corners=2*\x]
    \tikzstyle{mylabel} = [scale=0.8]
    \tikzstyle{myarrow} = [purple,line width=.6]
    \tikzstyle{myarrow2} = [blue,line width=2]
    \def\mycard[#1,#2]{\draw[mybox] (#1,#2) rectangle +(\x,\y)}
    \def\mycardlab[#1,#2,#3]{\draw[mybox] (#1,#2) rectangle +(\x,\y) node[pos=.5] {#3}}
    \def\mystack[#1,#2,#3]{
      \pgfmathtruncatemacro\I{#1-1}
      \ifnum\I>0
      \foreach \i in {1,...,\I}{ 
        \pgfmathsetmacro{\j}{\i-.5*#1}
        \mycard[#2-\j*\x*.15,#3-\j*\x*.12]; }
      \fi
      \pgfmathsetmacro{\j}{#1-.5*#1}
      \mycardlab[#2-\j*\x*.15,#3-\j*\x*.12,#1]; 
    }

    \foreach \a in {1,2,...,5}{
      \pgfmathsetmacro{\CosValue}{\r*cos((\a-1)*360/5)}
      \pgfmathsetmacro{\SinValue}{\r*sin((\a-1)*360/5)}
      \mystack[\a,-4.8+ \SinValue ,\CosValue ];
      \node (p\a) at (-4.8+ \SinValue ,\CosValue) {};
    }

    \foreach \a/\b in {1/3,2/2,3/5,4/2,5/3}{
      \pgfmathsetmacro{\CosValue}{\r*cos((\a-1)*360/5)}
      \pgfmathsetmacro{\SinValue}{\r*sin((\a-1)*360/5)}
      \mystack[\b,\SinValue ,\CosValue ];
      \node (q\a) at (\SinValue ,\CosValue) {};
    }

    \foreach \a in {1,2,...,5}{
      \pgfmathsetmacro{\CosValue}{\r*cos((-\a)*360/5)}
      \pgfmathsetmacro{\SinValue}{\r*sin((-\a)*360/5)}
      \mystack[\a,4.8+ \SinValue ,\CosValue ];
      \node (t\a) at (4.8+ \SinValue ,\CosValue) {};
    }

    \draw [myarrow,bend left=60] ($(p5)+(-.5*\x,\y)$) to node[mylabel,left] {$\times 2\,$} ($(p1)+(-.2*\x,\y)$);
    \draw [myarrow,bend right=70,looseness=1.4] ($(p4)+(.6*\x,-.4*\y)$) to node[mylabel,right] {$\,\;\times 2$} ($(p3)+(.3*\x,-.3*\y)$);

    \draw [myarrow,bend left=60] ($(q5)+(-.5*\x,\y)$) to node[mylabel,left] {$\times 2\,$} ($(q1)+(-.2*\x,\y)$);
    \draw [myarrow,bend right=70,looseness=1.2] ($(q3)+(1.6*\x,.4*\y)$) to node[mylabel,below right] {$\!\!\times 2$} ($(q2)+(1.2*\x,.1*\y)$);

    \draw [->,myarrow2] (-3.0,.5*\x) -- (-1.5,.5*\x); 
    \draw [->,myarrow2] (2.0,.5*\x) -- (3.5,.5*\x); 

\end{tikzpicture}
    \vspace{-5pt}
    \caption{Illustration of the card problem using 5 decks.}
    \label{fig:stackcards}
  \end{figure}
\begin{problem}
  There are~$n$ card decks organized on a circle.
  Given any four consecutive decks we are allowed to move the cards as follows:
  we may take one card from each of the inner decks and place them in the outer decks (one in each),
  or we may take one card from the outer decks and place them on the inner decks.
  Initially the number of cards in the decks are $1,2,\ldots,n$.
  Is it possible to reach a state where the number of cards in the decks is reversed~\footnote{A combinatorial argument proves that this is only possible if all prime divisors of~$n$ are at least~$5$. However, this argument does not generalize to other choices of the final state (e.g., we cannot reach a state where the number of cards is $2,1,3,4,5,\ldots,n$ for any~$n$).}
  (i.e., the $i$-th deck has $n-i+1$ cards)?
\end{problem}

The above problem is equivalent to determining whether $f_n\in I_n$, where
\begin{align*}
  &f_n := x_0x_1^{2}x_2^3\cdots x_{n-1}^n- x_0^nx_1^{n-1}\cdots x_{n-1},
  &I_n := \{ x_{i}x_{i+3} - x_{i+1}x_{i+2}: 0\leq i< n\},
\end{align*}
and where the indices are taken modulo~$n$.
\Cref{tab:chameleon} compares our method against \texttt{Singular}'s Gr\"obner basis (grevlex order) and \texttt{Epsilon}'s triangular decomposition.
Even though the ideal $I_n$ is not radical, in all experiments performed we obtained the right answer.
Note that the complexity of our method is almost linear.
This contrasts with the exponential growth of both \texttt{Singular} and \texttt{Epsilon}, which did not terminate within $20$~hours for the cases $n=30$ and $n=45$.
We do not include timings for \texttt{Binomials} and \texttt{RegularChains} since they are both slower than \texttt{Singular} and \texttt{Epsilon}.

\begin{table}[htbp]
  \centering
  \caption{Time (seconds) to test (radical) ideal membership on the ideals $I_n$. }
    \begin{tabular}{c|ccccccccccc}
    \toprule
    $n$   & 5     & 10    & 15    & 20    & 25    & 30    & 35    & 40    & 45    & 50 & 55\\
    \midrule
    ChordalNet  & 0.7  & 3.0  & 8.5  & 14.3 & 21.8 & 29.8 & 37.7 & 48.2 & 62.3 & 70.6 & 84.8\\
    Singular  & 0.0  & 0.0  & 0.2  & 17.9 & 1036.2 & -     & -     & -     & -     & - & -\\
    Epsilon  & 0.1  & 0.2  & 0.4  & 2.0  & 54.4 & 160.1 & 5141.9 &17510.1 & -     & - & -\\
    Test result & true  & false & false & false & true  & false & true  & false & false & false & true\\
    \bottomrule
    \end{tabular}
  \label{tab:chameleon}
\end{table}

\subsection{Finite state diagram representation}\label{s:determinantal}
One of the first motivations in this paper was the very nice chordal network representation of the irreducible components of the ideal of adjacent minors of a $2\times n$ matrix.
We will see now that similar chordal network representations exist for other determinantal ideals.

First, notice that the chordal network in \Cref{fig:triangadjminors} has a simple pattern.
Indeed, there are three types of nodes $A_i = \{x_{2i}x_{2i+3}-x_{2i+1}x_{2i+2}\}$, $B_i = \{0\}$, $C_i = \{x_{2i},x_{2i+1}\}$, and we have some valid transitions: $A_i \to \{A_{i+1},B_{i+1}\}$, $B_i\to \{C_{i+1}\}$, $C_i\to \{A_{i+1},B_{i+1}\}$.
This transition pattern is represented in the state diagram shown in \Cref{fig:diagadjminors2}.
Following the convention from automata theory, we mark the initial states with an incoming arrow and the terminal states with a double line.

  \begin{figure}[htb]
    \centering
    \null\hfill
    \subfloat[$2\times n$ matrix]{

    \begin{tikzpicture}[framed,->, >=stealth', auto, semithick]
      \def\x{1.0}
      \def\y{1.2}
      \tikzstyle{state}=[scale=0.85,fill=orange!8,rounded corners,draw=purple,thick,text=black]
      \def\initstate[#1]{\draw[<-,blue](#1.west)--++(-25pt,0)}
      \node[state]    (A)  at (0*\x,1*\y) 
      {$\left|\begin{smallmatrix}
x_{2i}   & x_{2i+2}\\ 
x_{2i+1} & x_{2i+3}\\
\end{smallmatrix}\right|$};
      \node[state,double]    (B) at (-1*\x,0*\y)  {$0$};
      \node[state]    (C) at (1*\x,0*\y)  {$x_{2i},x_{2i+1}$};
      \initstate[A];
      \initstate[B];
      \path
      (A) edge[loop above,looseness=10]     node{}         (A)
          edge[bend right]     node{}         (B)
      (B) edge[bend right]     node{}         (C)
      (C) edge[bend right]     node{}         (A)
          edge[bend right]     node{}         (B);
    \end{tikzpicture}
      \label{fig:diagadjminors2}
    }
    \hfill
    \subfloat[$3\times n$ matrix]{

    \begin{tikzpicture}[framed,->, >=stealth', auto, semithick]
      \def\x{1.4}
      \def\y{1.0}
      \tikzstyle{state}=[scale=0.85,fill=orange!8,rounded corners,draw=purple,thick,text=black]
      \def\initstate[#1]{\draw[<-,blue](#1.west)--++(-25pt,0)}
      \node[state]    (A) at (0*\x,0*\y)                    
      {$\left|\begin{smallmatrix}
x_{3i}   & x_{3i+3} & x_{3i+6}\\ 
x_{3i+1} & x_{3i+4} & x_{3i+7}\\
x_{3i+2} & x_{3i+5} & x_{3i+8}
\end{smallmatrix}\right|$};
      \node[state]    (Z1) at (1*\x,1.3*\y)      {$0$};
      \node[state]    (Z2) at (1*\x,-1.3*\y)  {$0$};
      \node[state,double]    (Z3) at (2*\x,-1.5*\y)  {$0$};
      \node[state]    (B) at (3*\x,.7*\y)   
      {$\left|\begin{smallmatrix}
x_{3i}   & x_{3i+3}\\ 
x_{3i+2} & x_{3i+5}
\end{smallmatrix}\right|,
\left|\begin{smallmatrix}
x_{3i+1} & x_{3i+4}\\
x_{3i+2} & x_{3i+5}
\end{smallmatrix}\right|$};
      \node[state]    (C) at (3*\x,-.7*\y)   {$x_{3i},x_{3i+1},x_{3i+2}$};
      \initstate[A];
      \initstate[Z1];
      \initstate[Z2];
      \path
      (A) edge[loop above]     node{}         (A)
      (A.50)   edge[bend left]     node{}         (Z1.south west)
      (A.-50)  edge[bend right]     node{}         (Z2.north west)
      (Z1) edge[bend left]     node{}         (B)
      (Z2.east) edge[bend right]     node{}         (Z3.190)
      (Z3) edge[bend right]     node{}         (C)
      (B.west) edge[bend left]     node{}         (Z1)
      (B.-160) edge[bend right]     node{}         (Z2)
      (B.-170) edge[]     node{}         (A)
      (C.west) edge[bend right]     node{}         (Z2)
      (C.160)  edge[bend left]     node{}         (Z1)
      (C.170)  edge[]     node{}         (A);
    \end{tikzpicture}
      \label{fig:diagadjminors3}
    }
    \hfill\null
    \caption{State diagrams for ideals of adjacent minors of a matrix.}
  \end{figure}

We can also consider the ideal of $3\times 3$ adjacent minors of a $3\times n$ matrix.
As seen in \Cref{fig:diagadjminors3}, a very similar pattern arises.
In order to make sense of such diagram let us think of how to generate a $3\times n$ matrix satisfying all these minor constraints.
Let $v_1,\ldots,v_n\in \overline{\K}^3$ denote the column vectors.
Given $v_{i+1},v_{i+2}$ we can generate $v_i$ as follows:
it can be the zero vector,
or it can be a multiple of $v_{i+1}$,
or it can be a linear combination of $v_{i+1},v_{i+2}$.
These three choices correspond to the three main states shown in the diagram.
Note now that if $v_i$ is the zero vector then we can ignore it when we generate $v_{i-1}$ and $v_{i-2}$.
This is why in order to reach the state $(x_{3i},x_{3i+1},x_{3i+2})$ we have two pass two trivial states.
Similarly, if $v_i$ is parallel to $v_{i+1}$ then we can ignore $v_{i+1}$ when we generate $v_{i-1}$.

It is easy to see that the above reasoning generalizes if we consider the adjacent minors of a $k\times n$ matrix.
Therefore, for any fixed $k$, the ideal of adjacent minors of a $k\times n$ minors has a finite state diagram representation (and thus it has a chordal network representation of linear size).
Since the nodes of the network are given by minors, then all chains of the network are of the form of \Cref{thm:primeideal}.
Thus, the decomposition obtained is into irreducible components.

  \begin{figure}[htb]
    \centering

    \begin{tikzpicture}[framed,->, >=stealth', semithick,every loop/.style={in=10,out=30,looseness=3.5}]
      \def\x{1.8}
      \def\y{.5}
      \def\xd{4.7*\x}
      \def\yd{-1*\y}
      \def\initlen{28pt}
      \tikzstyle{state}=[scale=0.85,fill=orange!8,rounded corners,draw=purple,thick,text=black]
      \def\initstate[#1]{\draw[<-,blue](#1.west)--++(-1*\initlen,0)}
      \node[state]    (A1) at (1*\x,-3*\y)                    
      {$\left|\begin{smallmatrix}
x_{2i}   & x_{2n-2}\\ 
x_{2i+1} & x_{2n-1}\\
\end{smallmatrix}\right|$};
      \node[state]    (A2) at (1*\x,1*\y)                    
      {$\left|\begin{smallmatrix}
x_{2i}   & x_{2i+2}\\ 
x_{2i+1} & x_{2i+3}\\
\end{smallmatrix}\right|$};
      \node[state]    (A3) at (3*\x,1*\y)                    
      {$\left|\begin{smallmatrix}
x_{2i}   & x_{2i+2}\\ 
x_{2i+1} & x_{2i+3}\\
\end{smallmatrix}\right|$};
      \node[state]    (Z1) at (2*\x,0*\y)      {$0$};
      \node[state]    (C1) at (1*\x,-1*\y)   {$x_{2i},x_{2i+1}$};
      \node[state]    (C2) at (3*\x,-1*\y)   {$x_{2i},x_{2i+1}$};
      \node[state,double]    (Z0) at (3*\x,-3*\y) {$0$};
      \initstate[A1];
      \initstate[C1];

      \node[state]    (A4) at (1*\x+\xd,1*\y+\yd)                    
      {$\left|\begin{smallmatrix}
x_{2i}   & x_{2i+2}\\ 
x_{2i+1} & x_{2i+3}\\
\end{smallmatrix}\right|$};
      \node[state]    (Z2) at (0*\x+\xd,0*\y+\yd)      {$0$};
      \node[state,double]    (C3) at (1*\x+\xd,-1*\y+\yd)   {$x_{2i},x_{2i+1}$};
      \draw[<-,blue] (A4.175) --++(-1*\initlen,0*\y);
      \initstate[Z2];

      \path
      (A1) edge[loop]     node{}         (A1.north west)
           edge[]     node{}         (C1)
           edge[out=-8,in=-170]     node{}         (Z0)
      (C1) edge[]     node{}         (A2)
      (C1.east)     edge[bend right]     node{}         (Z1)
      (A2) edge[loop above]     node{}         (A2)
      (A2.east)     edge[bend left]     node{}         (Z1)
      (Z1) edge[bend left]     node{}         (C2.north west)
      (C2) edge[]     node{}         (A3)
           edge[]     node{}         (Z0)
      (C2.west)     edge[bend left]     node{}         (Z1)
      (A3) edge[loop above]     node{}         (A3)
      (A3.west)     edge[bend right]     node{}         (Z1)
      (A3.south east) edge[in=20,out=-60]     node{}         (Z0.east)
      (Z2) edge[bend left]     node{}         (C3.north west)
      (C3) edge[]     node{}         (A4)
      (C3.west)     edge[bend left]     node{}         (Z2)
      (A4) edge[loop above]     node{}         (A4)
      (A4.-175) edge[out=175,in=40]     node{}         (Z2);
    \end{tikzpicture}
    \caption{State diagram for the ideal of cyclically adjacent minors of a $2\times n$ matrix.}
    \label{fig:diagcycminors}
  \end{figure}

Many other families of ideals admit a simple state diagram representation.
For instance, the ideal generated by the $n$ cyclically adjacent minors of a $2\times n$ matrix (see \Cref{fig:diagcycminors}).
Interestingly, this chordal network has two equidimensional components.
Similarly, the ideal of (cyclically) adjacent permanental minors has a finite state diagram representation.
We can also easily provide families of zero-dimensional problems with such property (e.g., \Cref{fig:triangcycle}), since they often admit a chordal network of linear size (\Cref{thm:linearnetwork}).
A similar reasoning applies for monomial ideals.
It is natural to ask for further examples of this behaviour.

\begin{question}
  Characterize interesting families of ideals (parametrized by~$n$) whose triangular decomposition admits a finite state diagram representation (and thus have a chordal network representation of size~$O(n)$).
\end{question}

\begin{remark}
The class of binomial edge ideals~\cite{Herzog2010} is a natural starting point for this question, given that it generalizes both the ideal of adjacent minors (\Cref{fig:diagadjminors2}) and cyclically adjacent minors (\Cref{fig:diagcycminors}) of a $2\times n$ matrix.
\end{remark}

\bibliography{references}
\bibliographystyle{plain}

\appendix
\section{Additional proofs}\label{s:additionalproofs}

\subsection{Proofs from \Cref{s:zerodim}}\label{s:zerodimproofs}

\begin{proof}[Proof of \Cref{thm:maximalclique}]
  Let $m<n$ be such that $X_m$ is a maximal clique, and consider a rank~$m$ node $F_m\subset \K[X_m]$ to which we will apply a triangulation operation.
  Also let $F_m':=F\cap\K[X_m]$ be the unique initial node of rank~$m$.
  By assumption, $F_m'$ is zero-dimensional.
  Note that when we create a new node of rank~$m$ in an elimination operation, we copy the equations from a previous rank~$m$ node.
  In particular, we must have that $F_m'\subset F_m$, and therefore $F_m$ is also zero-dimensional.
  This proves the lemma for this case.

  Consider now some $p<n$ such that $X_p$ is not maximal, which means that $x_p$ is not a leaf of the elimination tree.
  Since $X_p$ is not maximal, there is a child $x_l$ of $x_p$ such that $X_l=X_p\cup\{x_l\}$.
  By induction, we may assume that the lemma holds for all nodes of rank~$l$.
  Consider a rank~$p$ node $F_p\subset \K[X_p]$ that we want to triangulate, and let $F_l$ of rank~$l$ be adjacent to $F_p$.
  Let $F_l'$ be the same rank~$l$ node, but before the $l$-th elimination round.
  By induction, $F_l'\subset\K[X_l]$ is zero-dimensional.
  Therefore, $\elim{p}{F_l'}\subset\K[X_l\setminus\{x_l\}]=\K[X_p]$ is also zero-dimensional, and as $\elim{p}{F_l'}\subset F_p$, we conclude that $F_p$ is zero-dimensional.
\end{proof}

\begin{lemma}\label{thm:radicalzerodim}
  Let $X_1,X_2\subset X$ and let $I_1\subset \K[X_1]$, $I_2\subset\K[X_2]$ be radical zero-dimensional ideals.
  Then $I_1+I_2\subset\K[X_1\cup X_2]$ is also radical and zero-dimensional.
\end{lemma}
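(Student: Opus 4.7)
My plan is to establish the two conclusions separately, starting with zero-dimensionality. Let $X_0 := X_1 \cap X_2$, and decompose $X_1 \cup X_2 = X_0 \sqcup (X_1 \setminus X_0) \sqcup (X_2 \setminus X_0)$. A point $(a_0, a_1, a_2) \in \overline{\K}^{X_1 \cup X_2}$ (split according to this decomposition) lies in $\V(I_1+I_2)$ if and only if $(a_0, a_1) \in \V(I_1)$ and $(a_0, a_2) \in \V(I_2)$. Since both $\V(I_1)$ and $\V(I_2)$ are finite by hypothesis, this fibre product is finite, so $I_1+I_2$ is zero-dimensional.

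For radicality, the crucial step is the $\K$-algebra isomorphism
$$\K[X_1 \cup X_2]/(I_1+I_2) \;\cong\; (\K[X_1]/I_1) \,\otimes_{\K[X_0]}\, (\K[X_2]/I_2),$$
obtained from $\K[X_1 \cup X_2] = \K[X_1] \otimes_{\K[X_0]} \K[X_2]$ by quotienting. Since $I_i$ is zero-dimensional and radical, each $A_i := \K[X_i]/I_i$ is reduced Artinian, hence a finite product of finite field extensions of $\K$: write $A_1 = \prod_j L_j$ and $A_2 = \prod_k M_k$. Tensor products distribute over finite products, so $A_1 \otimes_{\K[X_0]} A_2 = \prod_{j,k} L_j \otimes_{\K[X_0]} M_k$, and it suffices to show each factor is reduced (the finiteness of this product also re-confirms zero-dimensionality).

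For each pair $(j,k)$, the map $\K[X_0] \to L_j$ has image a finitely generated $\K$-subalgebra of the field $L_j$, hence by Zariski's lemma it is itself a field $F_j = \K[X_0]/\mathfrak{p}_j$ with $\mathfrak{p}_j$ maximal; similarly for $M_k$ with $F_k = \K[X_0]/\mathfrak{p}_k$. If $\mathfrak{p}_j \neq \mathfrak{p}_k$ then $L_j \otimes_{\K[X_0]} M_k = 0$; otherwise, setting $F = F_j = F_k$, one has $L_j \otimes_{\K[X_0]} M_k = L_j \otimes_F M_k$. The main obstacle is then the classical fact that the tensor product of two finite field extensions of $F$ is reduced. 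This holds automatically in characteristic zero, since all finite extensions are separable and tensor products of separable extensions are reduced (in fact a finite product of fields); more generally it holds whenever $F$ is perfect, which matches the standing hypotheses in this paper. This completes the plan.
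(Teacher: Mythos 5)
Your proposal is correct (over a perfect field), but it takes a genuinely different route from the paper. The paper's proof is a one-liner: it cites the Seidenberg-type criterion that an ideal $I\subset\K[X]$ is radical and zero-dimensional if and only if for each variable $x_i$ it contains a nonzero squarefree polynomial in $\K[x_i]$ alone; since $I_1$ and $I_2$ supply such univariate polynomials for every variable of $X_1\cup X_2$, the sum $I_1+I_2$ satisfies the criterion and the lemma follows immediately. Your argument instead unpacks the structure: the isomorphism $\K[X_1\cup X_2]/(I_1+I_2)\cong(\K[X_1]/I_1)\otimes_{\K[X_0]}(\K[X_2]/I_2)$, the decomposition of each reduced Artinian quotient into finite field extensions, and the reduction to the classical fact that $L\otimes_F M$ is reduced when the extensions are separable. (One small inaccuracy: Zariski's lemma is not what shows the image of $\K[X_0]$ in $L_j$ is a field --- it is simply a domain that is finite-dimensional over $\K$, being a subspace of $L_j$ --- but the conclusion is right.) What your longer route buys is transparency about where the field hypothesis enters: the final separability step genuinely cannot be removed, and in fact the lemma as stated fails over imperfect fields --- take $\K=\F_p(t)$, $I_1=\ideal{x^p-t}$, $I_2=\ideal{y^p-t}$, where $(x-y)^p\in I_1+I_2$ but $x-y\notin I_1+I_2$. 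The same caveat is implicit in the paper's citation, where ``squarefree'' must be read as coprime to its derivative (automatic over perfect fields, in particular in characteristic zero and over the finite fields used in the applications), so your explicit flag of the perfectness assumption is a useful clarification rather than a defect.
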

\begin{proof}
  This is a direct consequence of the following known fact (see e.g.,~\cite[Thm~2.2]{sturmsolving}):
  an ideal $I\subset\K[X]$ is radical and zero-dimensional if and only if for any $x_i\in X$ there is a nonzero squarefree polynomial $f\in I\cap\K[x_i]$.
\end{proof}

\begin{proof}[Proof of \Cref{thm:radicalnetwork}]
  For any $l$, let $X^l$ denote the subtree of the elimination tree consisting of $x_l$ and all its descendants.
  For a chordal network $\mathcal{N}$, we will say that an $l$-subchain $C_l$ is the subset of a chain $C$ restricted to nodes with rank $i$ for some $x_i\in X^l$.
  Note that any chain is also a $(n-1)$-chain.
  Thus, it suffices to show that every $l$-subchain is radical after the $l$-th triangulation round in \Cref{alg:chordtriang}, and we proceed to show it by induction.

  If $x_l$ is a leaf in the elimination tree, then any $l$-subchain is just the output of a triangulation operation and thus it is radical.
  Assume that the result holds for all $l< p$.
  Let $T_p$ be a rank $p$ node obtained after the $p$-th triangulation round.
  Let $C$ be a $p$-subchain containing $T_p$; we want to show that $\ideal{C}$ is radical.
  Let $x_{l_1},\ldots,x_{l_k}$ be the children of $x_p$.
  For each $l_j$, let $C_{l_j}$ be the $l_j$-subchain obtained by restricting $C$ to ranks in $X^{l_j}$.
  Also let $C_{l_j}'$ be the same $l_j$-subchain, but before the $l_j$-th elimination round.
  Observe that
  \begin{align*}
    \ideal{C} = \ideal{T_p} + \sum_{j}\ideal{C_{l_j}'}.
  \end{align*}
  Note that $\ideal{T_p}$ is zero-dimensional and radical, and by induction the same holds for each $\ideal{ C_{l_j}'}$.
  It follows from \Cref{thm:radicalzerodim} that $\ideal{C}$ is radical.
\end{proof}

\begin{proof}[Proof of \Cref{thm:qdominatedconstant}]
  It was shown in~\cite{gao2009counting} that the complexity of Buchberger's algorithm is $q^{O(k)}$ if the equations $x_i^q-x_i$ are present, and the same analysis works for any $q$-dominated ideal.
  Given a Gr\"obner basis, the LexTriangular algorithm~\cite{Lazard1992} computes a triangular decomposition in time $D^{O(1)}$, where $D\leq q^k$ is the number of standard monomials.
  For irreducible (or squarefree) decompositions, we can reduce the problem to the univariate case by using a rational univariate representation~\cite{Rouillier1999} (here we need that $\K$ contains sufficiently many elements).
  This representation can also be obtained in $D^{O(1)}$.
  Since the complexity of univariate (squarefree) factorization~\cite{Kaltofen1992} is polynomial in the degree~($D$), the result follows.
\end{proof}

\begin{proof}[Proof of \Cref{thm:boundW2}]
  Let us see that the result holds after each triangulation and elimination round.
  We showed in \Cref{thm:boundW} that after the $l$-th triangulation round all rank~$l$ nodes have disjoint varieties, and thus there are at most $|\V(F\cap\K[X_l])|\leq q^\kappa$ of them.
  Consider now the $l$-th elimination round, and let us see that all the resulting rank~$p$ nodes ($x_p$ parent of $x_l$) also have disjoint varieties, and thus the same bound holds.

  Assume by induction that all rank~$p$ nodes have disjoint varieties before the $l$-th elimination round.
  Let $F_p$ be a rank~$p$ node (before the elimination) and let $T_1,\ldots,T_k$ be its adjacent rank~$l$ nodes.
  We just need to show that the new rank~$p$ nodes $F_p\cup \elim{p}{T_1}, \ldots, F_p\cup \elim{p}{T_k}$ have disjoint varieties (or are the same).
  By assumption, each $T_i\subset \K[X_l]$ defines a maximal (or prime) ideal, and thus $\elim{p}{T_i}\subset\K[X_l\setminus \{x_l\}]$ also defines a maximal ideal.
  Therefore, $\V(\elim{p}{T_i}),\V(\elim{p}{T_j})$ are either equal or disjoint, and it follows that the same holds for $\V(F_p\cup \elim{p}{T_i}), \V(F_p\cup \elim{p}{T_j})$.
\end{proof}

\subsection{Proofs from \Cref{s:membership}}\label{s:membershipproofs}

\begin{lemma}\label{thm:euclidean}
  Let $\mathbb{L}$ be a ring and let $f\in \mathbb{L}[y]$ be a monic univariate polynomial.
  Let $\phi:\mathbb{L}[y]\to \mathbb{L}[y]$ be an endomorphism such that $\phi(f)=f$ and $\deg(\phi(h))\leq \deg(h)$ for any $h\in \mathbb{L}[y]$.
  Then $\phi(h \bmod f) = \phi(h) \bmod f$, for any $h\in \mathbb{L}[y]$.
\end{lemma}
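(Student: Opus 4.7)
The plan is to reduce the claim to the uniqueness of Euclidean division by the monic polynomial $f$. Since $f$ is monic, the Euclidean division algorithm in $\mathbb{L}[y]$ is well-defined even without assuming $\mathbb{L}$ is a field: for any $h\in\mathbb{L}[y]$ there exist unique $q,r\in\mathbb{L}[y]$ with $h=qf+r$ and $\deg(r)<\deg(f)$. By definition $r=h\bmod f$.

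The first step I would carry out is to write this division $h=qf+r$ explicitly and then apply the endomorphism $\phi$ to both sides. Using $\phi(f)=f$, this yields
\begin{equation*}
  \phi(h)=\phi(q)\,f+\phi(r).
\end{equation*}
The second step is to verify that $\phi(r)$ plays the role of the remainder of $\phi(h)$ modulo $f$. This requires the degree bound $\deg(\phi(r))<\deg(f)$, which follows immediately from the hypothesis $\deg(\phi(r))\leq\deg(r)<\deg(f)$.

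The final step is to invoke uniqueness of Euclidean division: since $\phi(h)=\phi(q)\,f+\phi(r)$ with $\deg(\phi(r))<\deg(f)$, we must have $\phi(h)\bmod f=\phi(r)=\phi(h\bmod f)$, which is the desired equality.

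There is really no substantive obstacle here; the only point worth being careful about is that Euclidean division is available in $\mathbb{L}[y]$ despite $\mathbb{L}$ being only a ring, and this is precisely where the monicity of $f$ is used (to ensure the leading coefficient is invertible at each division step). Both hypotheses on $\phi$ are then used exactly once: $\phi(f)=f$ to keep the divisor fixed after applying $\phi$, and the degree-nonincreasing property to preserve the defining inequality of the remainder.
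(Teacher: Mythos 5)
Your proof is correct and follows essentially the same argument as the paper: apply $\phi$ to the Euclidean division $h=qf+r$, use $\phi(f)=f$ and the degree bound to recognize $\phi(q)f+\phi(r)$ as the Euclidean division of $\phi(h)$, and conclude by uniqueness (which holds since $f$ is monic). No gaps.
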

\begin{proof}
  Consider the Euclidean division $h = qf+ r$, where $q,r\in \mathbb{L}[y]$ and $\deg(r)<\deg(f)$.
  Then $\phi(h) = \phi(q)f+ \phi(r)$ and $\deg(\phi(r))\leq\deg(r)<\deg(f)$, so this is the Euclidean division of $\phi(h)$.
  It follows that $\phi(h\bmod f) =\phi(r)= \phi(h)\bmod f$.
\end{proof}

\begin{proof}[Proof of \Cref{thm:memberoutput}]
  We proceed by induction on $l$.
  The base case, $l=0$, is clear.
  Assume now that the lemma holds for some $l$, and let us prove it for $p:=l+1$.
  Let $f_p$ be a rank $p$ node and let $f_{l,1},f_{l,2},\ldots,f_{l,k}$ be its adjacent rank~$l$ nodes.
  Let us denote as $\phi_l$ the functional that plugs in the values $\hat{x}_0,\ldots,\hat{x}_l$.
  By induction, we know that 
  \begin{align*}
    H(f_{l,i}) = \phi_l(\sum_{C_{l,i}} r_{C_{l,i}} h \bmod C_{l,i})
  \end{align*}
  where the sum is over all $f_{l,i}$-subchains $C_{l,i}$.
  Note that the algorithm sets
  \begin{align*}
    H(f_p) = \phi_p(\sum_i r_{l,i} H(f_{l,i}) \bmod f_p),
  \end{align*}
  where $\phi_p$ is the functional that plugs in the value $\hat{x}_p$.
  Therefore,
  \begin{align*}
    H(f_p) 
    &= \phi_p(\sum_i r_{l,i} \phi_l( \sum_{C_{l,i}} r_{C_{l,i}} h \bmod C_{l,i})\bmod f_p)
    = \phi_p( \phi_l(\sum_i \sum_{C_{l,i}} r_{l,i}r_{C_{l,i}} h \bmod C_{l,i})\bmod f_p).
  \end{align*}
  Since any $f_p$-subchain is of the form $C_p = C_{l,i}\cup \{f_p\}$ for some $i$, we can rewrite 
  \begin{align*}
    H(f_p) 
    = \phi_p(\phi_l(\sum_{C_p} r_{C_p} h \bmod C_{p}')\bmod f_p),
  \end{align*}
  where the sum is over all $f_p$-subchains $C_p$, and where $C_p':= C_p \setminus \{f_p\}$.
  To complete the proof we just need to see that $\phi_l$ commutes with $\bmod f_p$.
  This follows from \Cref{thm:euclidean} by setting $y=x_p$ and $\mathbb{L}=\K[X\setminus\{x_p\}]$.
\end{proof}

\begin{proof}[Proof of \Cref{thm:memberoutput2}]
  Let $x_{m_i}$ denote the main variable of $h_i$, which is one of the ranks where the algorithm is initialized.
  It is enough to prove the lemma for ranks $l$ where the paths (in the elimination tree) starting from different $x_{m_i}$ first meet.
  Thus, we restrict ourselves to some $m_1,\ldots,m_k$ such that their respective paths all meet at rank~$l$.
  More precisely, we assume that $X^l_{m_i}\cap X^l_{m_j}=\{x_l\}$ for $i\neq j$, where $X^{l}_{m_i}$ denotes the path in the elimination tree connecting $x_{m_i}$ to $x_l$.

  By applying \Cref{thm:memberoutput} to each $h_i$, it follows that the final value of $H(f_l)$ is given by plugging in the values of $\hat{x}_1,\hat{x}_2,\ldots,\hat{x}_l$ in the polynomial
  \begin{align*}
    \sum_i \sum_{C_{i}} r_{C_{i}} h_i \bmod C_{i},
  \end{align*}
  where $C_{i}$ is an $f_l$-subchain restricted to the path $X^l_{m_i}$.
  Let $C = \bigcup_{i} C_{i}$ be the $f_l$-subchain obtained by combining them.
  We want to show that the above expression is equal to
  \begin{align*}
    \sum_{C} r_{C} (h_1+\cdots+h_k) \bmod C.
  \end{align*}
  Note now that $h_i$ does not involve any variable in $X^l_{m_j}$ for $j\neq i$.
  Thus, $ h_i \bmod C =  h_i \bmod C_i$.
  Observe that $X_{m_i}^l,X_{m_j}^l$ have no common arcs since they only meet at level $l$, and thus $r_{C} = \prod_i r_{C_{i}}$.
  Denoting $h_{C_{i}}:=h_i \bmod C_{i}$, the problem reduces to proving the following equality:
  \begin{align}\label{eq:identityC}
    \sum_i \sum_{C_{i}} r_{C_{i}} h_{C_{i}}
    = \sum_{C} r_{C_{1}} r_{C_{2}}\cdots r_{C_{k}} (h_{C_{1}}+h_{C_2}+\cdots+h_{C_k}).
  \end{align}

  In order to prove~\eqref{eq:identityC}, let us look at the right hand side as a polynomial in variables $h_{C_1},\ldots,h_{C_k}$.
  Note that the coefficient of $h_{C_1}$ in such polynomial is
  \begin{align*}
    \sum_{C\supset C_1} r_{C_1}r_{C_2}\cdots r_{C_k}
    = r_{C_1} \prod_{i=2}^k (\sum_{C_i} r_{C_i})
  \end{align*}
  and we want to show that this expression reduces to $r_{C_1}$.
  Recall that the scalar coefficients $r_{C_i}$ are normalized (this was the second modification made to \Cref{alg:member}).
  It follows that $\sum_{C_{i}} r_{C_{i}}= 1$ for all~$i$, and thus~\eqref{eq:identityC} holds.
\end{proof}
\subsection{Proofs from \Cref{s:positivedim}}\label{s:positivedimproofs}
\begin{proof}[Proof of \Cref{thm:chordtriangpos}]
  We have to show that:
  chordality is preserved,
  the variety is preserved,
  and the chains in the output are regular systems.
  The proofs of first two statements are essentially the same as for the chordally zero-dimensional case (\Cref{thm:preservechordal} and \Cref{thm:preservevariety}).
  It only remains to show that the chains of the output are regular systems.
  Proving that the chains are squarefree is very similar, so we skip it.

  Let $X^l$ denote the subtree of the elimination tree consisting of $x_l$ and its descendants.
  We say that an $l$-subchain is the subset of a chain given by nodes of rank $i$ for some $x_i\in X^l$.
  We will show by induction on~$l$ that after the $l$-th triangulation round every $l$-subchain is a regular system.
  The base case is clear.
  Assume that the result holds for all $l<p$. 
  Let $\mathfrak{T}_p$ be a rank $p$ node obtained after the $p$-th triangulation round.
  Let $\mathfrak{C}$ be a $p$-subchain containing $\mathfrak{T}_p$; we want to show that it is a regular system.
  It is easy to see that $\mathfrak{C}$ is triangular and that condition~\ref{line:regsys1} from \Cref{defn:regsys} is satisfied.
  We just need to check condition~\ref{line:regsys2}.

  Let $f\in \mathfrak{C}$ be a rank~$k$ polynomial; we want to show that $\init(f)(\hat{x}^{k+1})\neq 0$ for any $\hat{x}^{k+1}\in \Z(\elim{k+1}{\mathfrak{C}})$.
  First consider the case that $k\geq p$, which means that $f\in \mathfrak{T}_p$.
  The result follows from the fact that $\mathfrak{T}_p$ is a regular system. 
  Assume now that $k<p$, in which case there must be a child $x_l$ of $x_p$ such that $x_k\in X^l$.
  This means that $f$ belongs to an $l$-subchain $\mathfrak{C}_l$, which is a subset of $\mathfrak{C}$.
  Let $\mathfrak{C}_l'$ be the same $l$-subchain, but before the $l$-th elimination round.
  By induction, we know that $\mathfrak{C}_l'$ is a regular system, and thus $\init(f)(\hat{x}^{k+1})\neq 0$ for any $\hat{x}^{k+1}\in \Z(\elim{k+1}{\mathfrak{C}_l'}).$
  The result follows by noticing that $\Z(\elim{k+1}{\mathfrak{C}})\subset \Z(\elim{k+1}{\mathfrak{C}_l'})$.
\end{proof}


\end{document}